  \providecommand\BibTeX{{%
    \normalfont B\kern-0.5em{\scshape i\kern-0.25em b}\kern-0.8em\TeX}}}
\newtheorem{definition}{Definition}[section]
\newtheorem{thm}[definition]{Theorem}
\newtheorem{cor}[definition]{Corollary}
\newtheorem{prop}{Property}[section]
\begin{document}

\title{Exploiting Positional Information for Session-based Recommendation}

\author{Ruihong Qiu}
\affiliation{%
  \institution{The University of Queensland}
  \city{Brisbane}
  \country{Australia}
}
\email{r.qiu@uq.edu.au}

\author{Zi Huang}
\affiliation{%
  \institution{The University of Queensland}
  \city{Brisbane}
  \country{Australia}
}
\email{huang@itee.uq.edu.au}

\author{Tong Chen}
\affiliation{%
  \institution{The University of Queensland}
  \city{Brisbane}
  \country{Australia}
}
\email{tong.chen@uq.edu.au}

\author{Hongzhi Yin}
\authornote{Corresponding author.}
\affiliation{%
  \institution{The University of Queensland}
  \city{Brisbane}
  \country{Australia}
}
\email{h.yin1@uq.edu.au}

\thanks{The work was supported by Australian Research Council Discovery Project (ARC DP190102353, DP190101985, CE200100025)}

\renewcommand{\shortauthors}{Ruihong Qiu, et al.}

\begin{abstract}
For present e-commerce platforms, it is important to accurately predict users' preference for a timely next-item recommendation. To achieve this goal, session-based recommender systems are developed, which are based on a sequence of the most recent user-item interactions to avoid the influence raised from outdated historical records. Although a session can usually reflect a user's current preference, a local shift of the user's intention within the session may still exist. Specifically, the interactions that take place in the early positions within a session generally indicate the user's initial intention, while later interactions are more likely to represent the latest intention. Such positional information has been rarely considered in existing methods, which restricts their ability to capture the significance of interactions at different positions. To thoroughly exploit the positional information within a session, a theoretical framework is developed in this paper to provide an in-depth analysis of the positional information. We formally define the properties of \textit{forward-awareness} and \textit{backward-awareness} to evaluate the ability of positional encoding schemes in capturing the initial and the latest intention. According to our analysis, existing positional encoding schemes are generally \textit{forward-aware} only, which can hardly represent the dynamics of the intention in a session. To enhance the positional encoding scheme for the session-based recommendation, a dual positional encoding (DPE) is proposed to account for both \textit{forward-awareness} and \textit{backward-awareness}. Based on DPE, we propose a novel Positional Recommender (PosRec) model with a well-designed Position-aware Gated Graph Neural Network module to fully exploit the positional information for session-based recommendation tasks. Extensive experiments are conducted on two e-commerce benchmark datasets, \textit{Yoochoose} and \textit{Diginetica} and the experimental results show the superiority of the PosRec by comparing it with the state-of-the-art session-based recommender models.
\end{abstract}

\begin{CCSXML}
<ccs2012>
<concept>
<concept_id>10002951.10003317.10003347.10003350</concept_id>
<concept_desc>Information systems~Recommender systems</concept_desc>
<concept_significance>500</concept_significance>
</concept>
\end{CCSXML}

\ccsdesc[500]{Information systems~Recommender systems}

\keywords{session-based recommendation, positional encoding, graph neural network}

\maketitle

\section{Introduction}
\label{intro}
Nowadays, recommender systems (RS) play an essential role in e-commerce platforms. Traditional RS~\cite{item-knn,bprmf,fpmc} predict a user's preference by equally taking the historical interactions into consideration, e.g., clicks of items, listening to songs or watching movies. Generally, a user's preference shifts as time goes on, where the traditional RS are less capable of predicting it. To enable a model to deal with this shift, session-based recommender systems (SBRS) have recently emerged, which predict the users' current preferences based on a session~\cite{sum,mdp,gru4rec,gru4rec+,narm,Liu18STAMP,srgnn,fgnn,gc-san}. A session is defined as a short sequence of user-item interactions within a certain period.

\begin{figure}
    \centering
    \includegraphics[width=.8\linewidth]{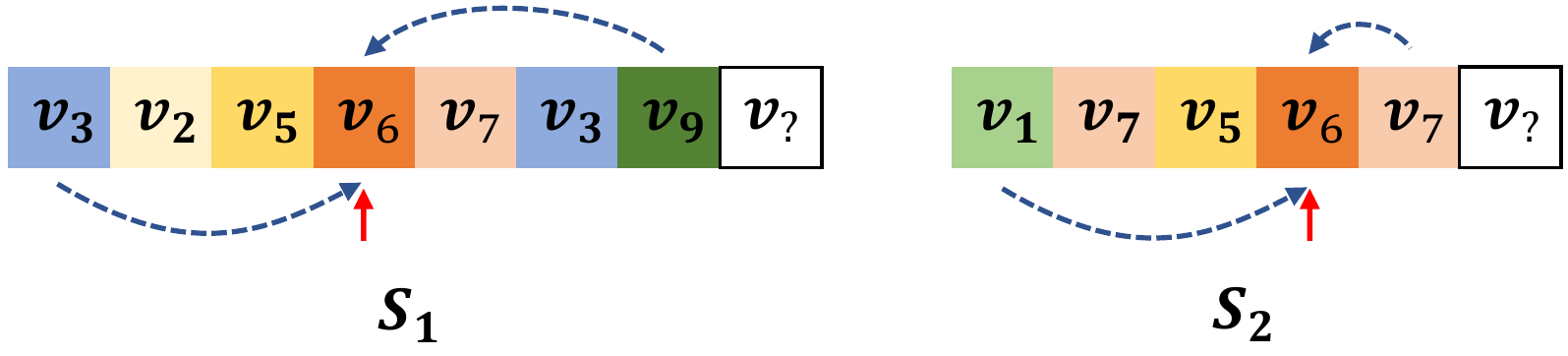}
    \caption{Illustration of the relationship between positions and intention dynamics in a session. Although the forward counting positions of $v_6$ in session $\mathcal{S}_1$ and $\mathcal{S}_2$ are the same, their backward counting positions are different, leading to different relative positions in a session.}
    \label{fig:pos}
\end{figure}

Although a session is assumed to imply the current preference, interactions happening at different stages in a session usually represent different intentions. On one hand, interactions in earlier positions may reflect the initial intention of a user. On the other hand, interactions closer to the end of a session usually demonstrate a better alignment with the latest intention. Such a difference is illustrated in Fig.~\ref{fig:pos} with two sample sessions. An item in a certain position in a session carries the positional information that reflects the initial and the latest intention. They are referred to as \textit{forward} and \textit{backward} positional information respectively in this work. A main purpose of our paper is to develop a positional encoding scheme to capture these two types of information.

How to effectively represent these two types of positional information remains a challenge in the session-based recommendation. For models using RNN as encoders~\cite{gru4rec,gru4rec+,latent-cross}, interactions are fed in the model according to their time order. These models implicitly make use of the positional information by considering the interactions sequentially. They suffer from easily forgetting the initial intention because the recurrent structure will potentially focus more on recent data. Attention-based approaches~\cite{narm,Liu18STAMP,pos-sbrs} apply the self-attention mechanism to compute the session representation. The attention mechanism utilizes the positional information in two ways: (1) including a positional encoding; and (2) using the last interaction in a session to attend to other interactions in the same session. When using the positional encoding~\cite{pos-sbrs}, it captures the \textit{forward} positional information because the positional encoding determines the position by counting from the beginning of a sequence. While for the latter case~\cite{narm,Liu18STAMP}, it neglects the positional information of all other interactions except for the last one, which merely represents the most recent intention. GNN-based methods~\cite{srgnn,gc-san,fgnn,bc,fgnnj,gag} firstly generate a session graph based on the relative position between interactions and further apply the self-attention to generate session representations. For example, for the session on the left in Fig.~\ref{fig:pos}, there will be a directed edge connecting $v_7$ to $v_3$. While for the session on the right in Fig.~\ref{fig:pos}, there will be edges connecting $v_7$ to both $v_5$ and $v_6$. In terms of the \textit{forward} and \textit{backward} positional information, the model cannot tell which item is on the first or the last position. Therefore, the positional information leveraged in the GNN model is rather limited as the constructed session graphs tend to neglect both the \textit{forward} and the \textit{backward} position information.

Specifically for the attention mechanism in sequence modeling, positional encoding is the most widely-used to capture the positional information, which is introduced to represent the absolute position of words in a sentence for natural language processing~\cite{attention}. It is expected to extract the positional information for words appearing in a specific position counted from the beginning of a sentence. However, in language modeling, the relative positions between words are more important than the absolute positions in a sentence, which makes the original positional encoding deprecated in recent language models~\cite{rpe,xlnet,trans-xl}. Recent recommender systems that use the attention mechanism usually involve a learnable version of the absolute positional encoding~\cite{pos-sbrs,sasrec,bert4rec,ali-seq}. Similar to the fixed positional encoding, a learnable one can only capture the \textit{forward} positional information as well.

In this paper, the positional information in SBRS is firstly formally defined in terms of the \textit{forward} and the \textit{backward} positional information. Besides, the abilities of models in capturing this information are further analyzed. \textit{Forward-awareness} and \textit{backward-awareness} are mainly investigated as the properties of existing position encoding schemes to represent the positional information. More importantly, based on the theoretical analysis, we propose a novel dual positional encoding scheme, which can capture the positional information with \textit{forward-awareness} and \textit{backward-awareness} in the session-based recommendation. In attention to the dual positional encoding scheme, a well-designed Position-aware Gated Graph Neural Network module is proposed to further incorporate the positional information in the session representation learning.

In summary, the main contributions of this paper are as follows:
\begin{itemize}
    \item A \textbf{theoretical framework} is developed to analyze the ability of different positional encoding schemes in representing the positional information for SBRS.
    \item A \textbf{Dual Positional Encoding (DPE)} scheme is proposed to represent the positional information for SBRS, which can be extended to a learnable version, denoted as LDPE.
    \item A \textbf{Positional Recommender model (PosRec)} is proposed based on (L)DPE, in which a Position-aware Gated Graph Neural Network (PGGNN) module is designed to further exploit the positional information in SBRS.
    \item \textbf{Extensive experiments} are conducted on two real-world benchmark SBRS datasets, \textit{Yoochoose}\footnote{\href{https://2015.recsyschallenge.com/challenge.html}{https://2015.recsyschallenge.com/challenge.html}} and \textit{Diginetica}\footnote{\href{http://cikm2016.cs.iupui.edu/cikm-cup/}{http://cikm2016.cs.iupui.edu/cikm-cup/}}. The empirical results demonstrate the superiority of the proposed PosRec andd (L)DPE compared with baselines.
\end{itemize}

This paper is structured as follows: in Section~\ref{sec:re-wo}, the related work about SBRS and the positional encoding is briefly reviewed. In Section~\ref{sec:theorems}, the theoretical framework is elaborated, followed by the explanation of the proposed (L)DPE and PosRec model in Section~\ref{sec:build-pos-rec}. In Section~\ref{sec:exp}, experiments are conducted to evaluate the effectiveness of our method.

\section{Related Work}
\label{sec:re-wo}
In this section, we review three main topics of previous research: the session-based recommendation, the positional encoding, and the graph neural networks.
\subsection{Session-based Recommendation}
{\bf Markov chain} is applied by many models~\cite{mdp,zimdars2001using} to learn the dependency of items in sequential data. Using probabilistic decision-tree models, Zimdars et al.~\cite{zimdars2001using} proposed to encode the state of the transition pattern of items. Shani et al.~\cite{mdp} made use of a Markov Decision Process (MDP) to compute item transition probabilities.

{\bf Deep learning models} become popular since the widely use of recurrent neural networks~\cite{gru4rec,narm,Liu18STAMP,gru4rec+,tois-poi,tois-poi1,LiLZSC19,LiDCPOI21}. There are three main branches of methods to perform the representation learning of the session, i.e., recurrent models~\cite{gru4rec,gru4rec+,narm}, attention models~\cite{Liu18STAMP,ssrm,history} and graph models~\cite{srgnn,fgnn,gc-san,fgnnj,gag,bc,hypersbrs}. (1) For recurrent models, e.g., GRU4REC~\cite{gru4rec,gru4rec+} and NARM~\cite{narm}, Gated Recurrent Unit~\cite{gru} and Long Short-Term Memory~\cite{lstm} are applied respectively and the positional information is implicitly modeled by the recurrent computing procedure. The recurrent structure includes a strong inductive bias that the relationship between items is linear along with the position. (2) For attention models, e.g., NARM (a self-attention layer is applied after the recurrent layer) and STAMP~\cite{Liu18STAMP} utilizes self-attention~\cite{attention} over the last item to capture the relationship between the last item and the rest in the session. These attention-based methods only consider the importance of the last position while neglecting other positions. (3) In graph modeling, e.g., SR-GNN~\cite{srgnn}, GC-SAN~\cite{gc-san}, FGNN~\cite{fgnn} and MGNN-SPred~\cite{bc}, a session is converted into a graph and Graph Neural Networks (GNN)~\cite{gcn,ggnn,gat} captures the connectivity of items. Afterward, a readout function is applied to compute a session representation with the processed item representations. For SR-GNN and GC-SAN, the readout function is similar to attention-based models by performing a self-attention over the last item. While FGNN uses a Set2Set~\cite{set2set} module and computes a descriptive vector, which is considered as a latent description of items. MGNN-SPred makes use of the mean feature of the whole sequence to represent the user modeling. Consequently, these GNN-based methods only capture the relative position for the connected items, which does not satisfy the \textit{forward-awareness} and \textit{backward-awareness}. The proposed PosRec falls into the category of graph-based model. To enhance the exploitation of the positional information of the graph representation learning, the (L)DPE is included in the embedding of the items and the graph neural network is redesigned to have a position-aware module.

{\bf Sequential recommendation} is a close research field to SBRS. In recent years, deep learning models are very popular~\cite{caser,sasrec,bert4rec,tois-seq,tois-seq1,safm,ddl}. Caser~\cite{caser} applies convolutional layers to process the embeddings of items in a sequence. SASRec~\cite{sasrec} and BERT4Rec~\cite{bert4rec} use the Transformer~\cite{attention} in a single direction style and a bidirection style respectively to model the sequential pattern in the interaction sequence.

\subsection{Positional Encoding}
{\bf Absolute positional encoding} is firstly introduced with the attention structure to provide the access of sequential information for the permutation invariant computation~\cite{attention}. It assigns a fixed vector to each position in a sequence. The vector is computed either in a sinusoidal way or a learned style. For example, the language model BERT~\cite{bert} and the recommendation model BERT4Rec~\cite{bert4rec}, they both use the learned positional encoding. {\bf Relative positional encoding} is later proposed to encode the relative position of two words, which is more meaningful for the natural language~\cite{rpe,xlnet,trans-xl,pos-iclr}. For example, the language model XLNet~\cite{xlnet} and Transformer-XL~\cite{trans-xl} propose different types of relative position encodings to represent the relative positional information between words in a sentence. {\bf Other positional encodings} include different positional encoding schemes that are suitable for data structures other than one-dimensional sequence. For example, to apply the attention to images, there are 2D positional encoding schemes~\cite{2dlatex,aacn,sasa,2dsa,e2eatt} that provide either the absolute or the relative encoding. For example, the attention augmented network~\cite{aacn} designs a 2D relative positional encoding to encode the positional information in the activation map. For tree structures, Shiv and Quirk~\cite{tree} proposed a specific scheme to encode the relationship between the root node and children nodes.

\subsection{Graph Neural Networks}
Recently, to enable neural networks to work on structured data (e.g., graph, point cloud, etc.), Graph Neural Networks (GNN) are widely investigated~\cite{gcn,gat,ggnn,pgnn}. Generally, the computation flow of GNN is called message passing, which is based on neighborhood aggregation. For example, GCN~\cite{gcn}, GAT~\cite{gat} and GGNN~\cite{ggnn} are majorly different in the aggregation method. However, these GNN models could easily fall into a lack of representative ability since the message passing is performed on a narrow scope of nodes. Thus, PGNN~\cite{pgnn} is proposed to include the information from randomly chosen anchor nodes to utilize extra structural information.

\section{Theoretical Framework for Positional Encoding}
\label{sec:theorems}
In this section, we build up the theoretical framework to analyze the property of different positional encoding schemes and what is needed to represent the positional information for SBRS.

\subsection{Positional Encoding}
The positional encoding (PE) is introduced by~\cite{attention} to enable the self-attention module to utilize the positional information of languages. Here, the sinusoidal positional encoding (SDE) $\boldsymbol{P}\in\mathbb{R}^{d\times 1}$ of a token at position $pos$ in the session of length $l$ is defined as:
\begin{equation}
\label{eq:spe}
\begin{aligned}
    P_{pos,2i}^l&=\sin(pos/f(i)), \\ P_{pos,2i+1}^l &=\cos(pos/f(i)),
\end{aligned}
\end{equation}
where $i \in \{0,1,\dots,d / 2-1\}$, $d$ is the dimension of the feature vector and $f(i)=10000^{2 i / d}$. In the following, all $pos\in \{0,1,\dots,l-1\}$ if not specified.

\subsection{Property of Positional Encoding}
\begin{definition}[Forward-awareness]
\label{def:for}
A positional encoding $\boldsymbol{P}$ is \textit{forward-aware} in positional information if $\forall p,q \in \mathbb{Z}^+$, $\exists A\subseteq \{0,1,\dots,d-1\}, A\neq \varnothing$, for two positions $pos_a$ and $pos_b$, if $pos_a = pos_b$, then $P_{pos_a,A}^p=P_{pos_b,A}^q$ and if $pos_a \neq pos_b$, then $P_{pos_a,A}^p \neq P_{pos_b,A}^q$.
\end{definition}

\begin{definition}[Backward-awareness]
\label{def:back}
A positional encoding $\boldsymbol{P}$ is \textit{backward-aware} in positional information if $\forall p,q \in \mathbb{Z}^+$, $\exists B\subseteq \{0,1,\dots,d-1\}, B\neq \varnothing$, for two positions $pos_a$ and $pos_b$, if $p-pos_a = q-pos_b$, then $P_{pos_a,B}^p=P_{pos_b,B}^q$ and if $p-pos_a \neq q-pos_b$, then $P_{pos_a,B}^p \neq P_{pos_b,B}^q$.
\end{definition}

To investigate the representation ability of a PE in a session, we define two features: \textit{forward-awareness} and \textit{backward-awareness}. If a PE is \textit{forward-aware}, the PE of the first token is the same for all sequences. Furthermore, if a position $pos$ exists in any sequence, the PE for $pos$ is the same across these sequences. For example, if we assign the position itself as the PE, i.e., $P_0^l=0,P_1^l=1\dots$, then it is \textit{forward-aware}. In contrast, if a PE is \textit{backward-aware}, the PE of the last token is the same for all sequences. Furthermore, if an $h$-th last position $pos$ exists in any sequence, the PE for $pos$ is the same across these sequences. For example, if we assign the reverse position as the PE, i.e., $P_{l-1}^l=0,P_{l-2}^l=1\dots$, then it is \textit{backward-aware}. A demonstration of \textit{forward-awareness} and \textit{backward-awareness} can be found in Fig.~\ref{fig:pos}.

\begin{prop}
\label{prop:for}
If a positional encoding $\boldsymbol{P}$ is \textit{forward-aware}, $\forall 0\leq pos_a\leq pos_b<\text{min}(p,q)$, $\exists f:\mathbb{R}^{d\times 1}\times\mathbb{R}^1\times\mathbb{R}^1\mapsto\mathbb{R}^{d\times 1}, \exists A$, s.t. $P_{pos_b,A}^p=f(P_{pos_a,A}^p, pos_a, pos_b)$, then $P_{pos_b,A}^q=f(P_{pos_a,A}^q, pos_a, pos_b)$.
\end{prop}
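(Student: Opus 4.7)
The strategy is to reduce the claim to the defining length-independence of a forward-aware encoding on the index set $A$, after which the proposition follows by direct substitution into the functional relation given in the hypothesis.

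First, I would take $A$ to be the subset guaranteed by Definition~\ref{def:for} applied to the pair $(p,q)$. By that definition, two positions that coincide as integers yield the same encoding on $A$ across any two sequences in which both are valid. Since the hypothesis places $pos_a,pos_b<\min(p,q)$, both are legitimate indices in sequences of length $p$ and of length $q$. Specializing Definition~\ref{def:for} first to $pos_a=pos_a$ and then to $pos_b=pos_b$ across the two sequence lengths therefore produces the length-independence identities
\begin{equation*}
P_{pos_a,A}^p = P_{pos_a,A}^q, \qquad P_{pos_b,A}^p = P_{pos_b,A}^q.
\end{equation*}

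Next, I would chain these two identities with the given equation $P_{pos_b,A}^p=f(P_{pos_a,A}^p,pos_a,pos_b)$. Replacing $P_{pos_b,A}^p$ on the left by $P_{pos_b,A}^q$ and the first argument of $f$ on the right by $P_{pos_a,A}^q$ yields the desired $P_{pos_b,A}^q=f(P_{pos_a,A}^q,pos_a,pos_b)$, with the same function $f$ inherited unchanged from the hypothesis.

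The only point requiring care is the alignment of the subset $A$ across two existential claims: Definition~\ref{def:for} supplies its own $A$, while the hypothesis postulates a set (also denoted $A$) together with $f$. Since the proposition quantifies $f$ and $A$ existentially, one can simply take $A$ from forward-awareness and, if necessary, restrict or re-define $f$ so that the hypothesized relation is realized on these coordinates; this bookkeeping is the only non-trivial aspect of the argument, and it is mild. I do not anticipate any deeper obstacle, since the conclusion is essentially a reformulation of what forward-awareness already asserts about length-agnostic position information on $A$.
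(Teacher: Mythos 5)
Your proposal is correct and follows essentially the same route as the paper's own proof: both derive the length-independence identities $P_{pos_a,A}^p=P_{pos_a,A}^q$ and $P_{pos_b,A}^p=P_{pos_b,A}^q$ from Definition~\ref{def:for} and then substitute them into the hypothesized relation $P_{pos_b,A}^p=f(P_{pos_a,A}^p,pos_a,pos_b)$ to obtain the conclusion. Your added remark about aligning the existentially quantified set $A$ is a reasonable piece of bookkeeping that the paper leaves implicit, but it does not change the argument.
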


\begin{proof}
Following Definition~\ref{def:for}, because items are at the same position $pos_a$, $P_{pos_a,A}^p=P_{pos_a,A}^q$. Similarly for position $pos_b$, $P_{pos_b,A}^p=P_{pos_b,A}^q$. Then $P_{pos_b,A}^q=f(P_{pos_a,A}^q, pos_a, pos_b)$ holds for the function $f(\cdot,\cdot,\cdot)$.
\end{proof}

If there is a mapping between two PE in a session, the mapping also applies to other sessions that contain same positions. Similarly, the property of \textit{backward-aware} PE is as the following:

\begin{prop}
\label{prop:back}
If a positional encoding $\boldsymbol{P}$ is \textit{backward-aware}, $\forall 0\leq pos_a\leq pos_b<p,0\leq pos_c\leq pos_d<q$, $\exists f:\mathbb{R}^{d\times 1}\times\mathbb{R}^1\times\mathbb{R}^1\mapsto\mathbb{R}^{d\times 1}, \exists B$, s.t. $P_{p-pos_a,B}^p=f(P_{p-pos_b,B}^p, pos_a, pos_b)$, if $p-pos_a = q-pos_c$ and $p-pos_b = q-pos_d$, then $P_{q-pos_c,B}^q=f(P_{q-pos_d,B}^q, pos_a, pos_b)$.
\end{prop}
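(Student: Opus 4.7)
The plan is to mirror the proof of Proposition~\ref{prop:for} step for step, substituting the forward matching condition (same absolute position in both sessions) with the backward matching condition (same ``distance from the end,'' i.e., $l - pos$). Because $\boldsymbol{P}$ is assumed backward-aware, Definition~\ref{def:back} supplies a nonempty index set $B$ on which two PE vectors agree whenever their positions share a backward distance in their respective sessions; this is the only tool I need.

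My proof would proceed in three short steps. First, I would invoke Definition~\ref{def:back} on the pair determined by the index $p - pos_a$ in the session of length $p$ and the index $q - pos_c$ in the session of length $q$: the hypothesis $p - pos_a = q - pos_c$ is exactly the matching-backward-distance condition that the definition requires, hence the corresponding PE vectors coincide on the coordinates in $B$, i.e., $P^p_{p-pos_a,B} = P^q_{q-pos_c,B}$. Second, I would apply the same reasoning to the pair governed by $p - pos_b$ and $q - pos_d$, using $p - pos_b = q - pos_d$, to deduce $P^p_{p-pos_b,B} = P^q_{q-pos_d,B}$. Third, I would substitute both equalities into the assumed identity $P^p_{p-pos_a,B} = f(P^p_{p-pos_b,B}, pos_a, pos_b)$, replacing each $p$-session PE on $B$ by the equal $q$-session PE on $B$; since the auxiliary arguments $(pos_a, pos_b)$ of $f$ do not depend on the session length, the same map $f$ realises the desired identity $P^q_{q-pos_c,B} = f(P^q_{q-pos_d,B}, pos_a, pos_b)$.

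There is no genuine obstacle here: the statement is the backward counterpart of Proposition~\ref{prop:for}, and Definition~\ref{def:back} is calibrated exactly so that the substitution goes through. The only point requiring care is the bookkeeping of two pairs of positions across sessions of different lengths, so that the two hypotheses $p - pos_a = q - pos_c$ and $p - pos_b = q - pos_d$ are matched with the correct pair of indices when Definition~\ref{def:back} is applied. Once this matching is fixed, the remainder is a direct two-line substitution in the same style as the forward case.
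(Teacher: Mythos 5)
Your overall strategy is the same as the paper's: apply Definition~\ref{def:back} twice to transport the two $B$-slices from the length-$p$ session to the length-$q$ session, and then substitute into the assumed identity for $f$, exactly mirroring the argument for Property~\ref{prop:for}. The skeleton is right and no idea is missing.

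However, the one step you yourself flagged as ``requiring care'' --- the matching of index pairs --- is where your argument, read literally, fails. Definition~\ref{def:back} equates the $B$-slices of a position $u$ in a length-$p$ session and a position $v$ in a length-$q$ session precisely when their reverse positions agree, i.e., when $p-u=q-v$. For the pair you choose in your first step, $u=p-pos_a$ and $v=q-pos_c$, that condition reads $p-(p-pos_a)=q-(q-pos_c)$, i.e., $pos_a=pos_c$ --- not the stated hypothesis $p-pos_a=q-pos_c$, which is instead the condition that the two \emph{forward} indices coincide. Worse, when $p\neq q$ the hypothesis forces $pos_a\neq pos_c$, so the ``only if'' half of Definition~\ref{def:back} actually yields $P^p_{p-pos_a,B}\neq P^q_{q-pos_c,B}$, the opposite of your claimed equality. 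The paper's own proof sidesteps this by reading the proposition with unshifted subscripts: it pairs the item at $pos_b$ in the length-$p$ session with the item at $pos_d$ in the length-$q$ session, for which $p-pos_b=q-pos_d$ is exactly the reverse-position condition of Definition~\ref{def:back}, giving $P^p_{pos_b,B}=P^q_{pos_d,B}$ (and likewise $P^p_{pos_a,B}=P^q_{pos_c,B}$), after which the substitution into $f$ goes through as you describe. The repair is purely notational --- apply the definition to the positions $pos_a,pos_b,pos_c,pos_d$ themselves rather than to their reflections $p-pos_a$, etc. --- but as written your two intermediate equalities are not licensed by the definition you cite.
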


\begin{proof}
Following Definition~\ref{def:back}, because item at $pos_b$ for length $p$ and item at $pos_d$ for length $q$ are at the reverse position $p-pos_b$, $P_{pos_b,B}^p=P_{pos_d,B}^q$. Similarly for position $pos_a$ and $pos_c$, $P_{pos_a,B}^p=P_{pos_c,B}^q$. Then $P_{pos_c,B}^q=f(P_{pos_d,B}^q, pos_a, pos_b)$ holds for the function $f(\cdot,\cdot,\cdot)$.
\end{proof}

These Definitions and Properties together give another important Properties of an absolute PE.
\begin{prop}
\label{prop:uniq}
An absolute positional encoding is unique for each position.
\end{prop}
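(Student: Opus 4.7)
The plan is to interpret an \emph{absolute} positional encoding as one that is simultaneously \emph{forward-aware} (Definition~\ref{def:for}) and \emph{backward-aware} (Definition~\ref{def:back}), and then to argue that any two distinct ``positions'' — understood as pairs $(pos_a, p)$ and $(pos_b, q)$ consisting of an index together with the length of the enclosing session — must yield different encoding vectors. First I would formalize the identity criterion: two such pairs refer to the same position if and only if $pos_a = pos_b$ and $p = q$, equivalently $pos_a = pos_b$ and $p - pos_a = q - pos_b$. Uniqueness then amounts to showing $P^p_{pos_a} \neq P^q_{pos_b}$ whenever the pairs differ.

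Next I would split on how the pairs disagree. In the case $pos_a \neq pos_b$, Definition~\ref{def:for} immediately supplies a nonempty coordinate subset $A$ on which $P^p_{pos_a, A} \neq P^q_{pos_b, A}$, so the full vectors disagree. In the remaining case $pos_a = pos_b$ but $p \neq q$, the backward indices satisfy $p - pos_a \neq q - pos_b$, and Definition~\ref{def:back} provides a nonempty coordinate subset $B$ on which $P^p_{pos_a, B} \neq P^q_{pos_b, B}$; again the vectors disagree. Combining the two cases covers all ways a pair can fail to match, so distinct positions must have distinct encodings. A symmetric contrapositive phrasing is also natural: if $P^p_{pos_a} = P^q_{pos_b}$, forward-awareness forces $pos_a = pos_b$ and then backward-awareness forces $p - pos_a = q - pos_b$, hence $p = q$.

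The main subtlety — and essentially the only non-routine point — is making sure the coordinate subsets $A$ and $B$ delivered by the two definitions can be applied consistently across sessions of different lengths. Both Definition~\ref{def:for} and Definition~\ref{def:back} quantify over $\forall p,q \in \mathbb{Z}^+$ with the existential $\exists A, \exists B$ preceding the comparison across $p$ and $q$, so the same subset serves uniformly for sessions of any lengths; this is what allows the two cases above to be read off directly from the definitions without further construction. Once this quantifier structure is made explicit, the proof is essentially a two-line case analysis.
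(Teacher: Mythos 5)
Your proposal is correct and takes essentially the same route as the paper, whose entire proof is the one-line contrapositive ``if there are duplicate PE for different positions, Definitions~\ref{def:for} and~\ref{def:back} are violated''; your two-case analysis (forward-awareness separates distinct indices, backward-awareness separates equal indices in sessions of different lengths) simply spells out why those definitions are violated. The only minor quibble is your reading of the quantifier order: as written, $\exists A$ and $\exists B$ follow $\forall p,q$ and so may depend on the pair of lengths, but this does not affect the argument, since for each fixed $(p,q)$ a nonempty witnessing subset suffices to make the full vectors differ.
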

\begin{proof}
\label{prf:prop-uniq}
If there are duplicate PE for different positions, Definition~\ref{def:for} and~\ref{def:back} are violated.
\end{proof}

\subsection{Positional Information for Session-based Recommendation}
\label{sec:pos-pos}
\begin{definition}
\label{def:pos-sbrs}
A positional encoding that can represent the positional information in SBRS is both \textit{forward-aware} and \textit{backward-aware}.
\end{definition}

As discussed in the Introduction, the position in a session carries specific positional information in SBRS. The first item reflects the initial intention of the user while the last item is always considered more relevant to the latest preference of the user. And the items in-between usually represent the preference shift inside the session. For the \textit{forward-aware} requirement, following Definition~\ref{def:for}, two items at the same position of two different sessions always have the same slice of their PE. Following Property~\ref{prop:for}, the relationship between any position and the first position is the same across different sessions. As for the \textit{backward-aware} requirement, the position in \textit{forward-aware} requirement is changed into the reverse position following Definition~\ref{def:back} and Property~\ref{prop:back}.

\begin{thm}
\label{thm:spe}
The sinusoidal positional encoding cannot represent the positional information in SBRS because it is \textit{forward-aware} but not \textit{backward-aware}.
\end{thm}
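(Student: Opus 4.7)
The plan is to establish the two constituent claims of Theorem~\ref{thm:spe} separately: (i) the sinusoidal PE is \textit{forward-aware}, and (ii) it is not \textit{backward-aware}. Both rely on the elementary observation, read directly off Eq.~(\ref{eq:spe}), that each entry of $\boldsymbol{P}^l_{pos}$ is a function of $pos$ and $i$ alone, with no dependence on the session length $l$.

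For (i), I would take the witness $A = \{0, 1, \ldots, d-1\}$ in Definition~\ref{def:for}. The forward direction $pos_a = pos_b \Rightarrow P^p_{pos_a, A} = P^q_{pos_b, A}$ is immediate from length-independence. The converse reduces to showing that $pos \mapsto \boldsymbol{P}_{pos}$ is injective on nonnegative integers; the cleanest route is to isolate the $i = 0$ block, where $f(0) = 1$ and the two coordinates are $\sin(pos)$ and $\cos(pos)$, which jointly determine $pos \bmod 2\pi$. Since $2\pi$ is irrational, distinct integer positions must produce distinct values, establishing injectivity.

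For (ii), I would exhibit a short counterexample. Take $p = 1,\ q = 2,\ pos_a = 0,\ pos_b = 1$, so that both points sit at the end of their respective sessions and satisfy $p - pos_a = 1 = q - pos_b$. Definition~\ref{def:back} would then require some non-empty $B$ achieving $P^1_{0,B} = P^2_{1,B}$. Instead, I would prove the stronger uniform claim that $P^1_{0,j} \neq P^2_{1,j}$ for \emph{every} coordinate $j$: the sine coordinates ($j = 2i$) contrast $0$ with $\sin(1/f(i)) \neq 0$, which holds since $1/f(i) \in (0, 1] \subset (0, \pi)$; the cosine coordinates ($j = 2i+1$) contrast $1$ with $\cos(1/f(i)) < 1$, which holds by the same range. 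Coordinatewise distinctness rules out every non-empty $B$ at once, so backward-awareness fails; invoking Definition~\ref{def:pos-sbrs} then completes the theorem.

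The main obstacle is a quantifier subtlety rather than any computation. Definition~\ref{def:back} has the shape $\exists B\,\forall \text{positions}$, so naively negating it would require ruling out every candidate $B \subseteq \{0, 1, \ldots, d-1\}$ by a separate argument. Collapsing this universal-over-$B$ obligation into coordinatewise distinctness at a single counterexample is the critical move. A secondary care point is to frame the injectivity used for \textit{forward-awareness} via the low-frequency $i = 0$ block, which sidesteps any number-theoretic detour through the geometric frequency grid at higher $i$.
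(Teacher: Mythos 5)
Your proposal is correct and follows essentially the same route as the paper: forward-awareness read off directly from the length-independence of Eq.~(\ref{eq:spe}), and backward-awareness refuted by the same length-$1$ versus length-$2$ counterexample at the last position, contrasting $\sin=0,\cos=1$ with $\sin(1/f(i))\neq 0,\cos(1/f(i))<1$ on $(0,1]\subset(0,\pi)$. The only difference is that you are more careful than the paper in two spots it glosses over --- the injectivity of $pos\mapsto\boldsymbol{P}_{pos}$ needed for the ``$pos_a\neq pos_b$'' clause of Definition~\ref{def:for}, and the explicit observation that coordinatewise distinctness rules out every candidate set $B$ at once --- both of which strengthen rather than change the argument.
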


\begin{proof}
We first prove that SPE is \textit{forward-aware} and then SPE is not \textit{backward-aware}. (1) According to Eq.\ (\ref{eq:spe}), SPE directly follows Definition~\ref{def:for} for \textit{forward-aware}. (2) Take the last item of two sessions w.r.t.\ length $1$ and $2$ as example. For length $1$ session, $P_{0,2i}^1=0$ and $P_{0,2i+1}^1=1$. If SPE is \textit{backward-aware}, for length $2$ session, there should be a slice of $P_{1,B}^2$ is the same as $P_{0,2i}^1$ and $P_{0,2i+1}^1$. For $2i$ dimension of SPE,  $P_{0,2i}^1=P_{1,2i}^2=\sin(1/10000^{2 i / d})$. It is clear that $1/10000^{2 i / d}\in\left[0.00001,1\right]$. Then $P_{0,2i}^1\neq P_{1,2i}^2$. Similarly, $P_{0,2i+1}^1\neq P_{1,2i+1}^2$. Therefore, SPE is not \textit{backward-aware}.
\end{proof}

This Theorem states that the sinusoidal positional encoding is not informative for positional information in session-based recommendation. As proved above, SPE is \textit{forward-aware} because it is exactly calculated based on the position. Using SPE in any SBRS can only indicate how far an item is from the user's initial intention. However, SPE is not \textit{backward-aware} as it simply cannot tell if an item is at the last position of a session. In SBRS, it is crucial to know the preference shift within the session~\cite{gru4rec,mdp}. Because SPE is not \textit{backward-aware}, if a model uses SPE, there is no information about the closeness between an item and the user's latest preference (i.e., the item at the last position).

\begin{cor}
\label{cor:rpe}
The relative positional encoding cannot represent the positional information in SBRS because it is neither \textit{forward-aware} nor \textit{backward-aware}.
\end{cor}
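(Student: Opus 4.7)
The plan is to reduce the corollary to the uniqueness statement of Property~\ref{prop:uniq} after first giving a precise meaning to ``relative positional encoding'' in the unary signature $P^l_{pos}\in\mathbb{R}^{d\times 1}$ required by Definitions~\ref{def:for}--\ref{def:back}. Relative positional encodings, as used in XLNet, Transformer-XL and similar works, do not assign a vector to a single absolute index $pos$; they supply a quantity $R_{i-j}$ (or an analogous attention bias) that depends only on the difference between two positions $i$ and $j$. When one projects RPE into the per-position form the definitions demand, the only choice consistent with its defining translation invariance is an encoding whose value at position $pos$ is independent of the absolute index $pos$ and of the session length $l$.

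Given this formalisation, I would first establish the failure of forward-awareness. Pick any session of length $p\ge 2$ and two distinct indices $pos_a\neq pos_b$ within it. Because RPE is invariant under a uniform shift of absolute indices, the unary projections $P^p_{pos_a}$ and $P^p_{pos_b}$ must coincide as vectors, so no non-empty slice $A\subseteq\{0,\dots,d-1\}$ can witness the inequality $P^p_{pos_a,A}\neq P^p_{pos_b,A}$ demanded by Definition~\ref{def:for}. Equivalently, Property~\ref{prop:uniq} requires any awareness-satisfying PE to be unique at each position, yet RPE violates this uniqueness even within a single session.

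For backward-awareness I would reuse the same observation. Definition~\ref{def:back} would force two positions with distinct reverse-indices $p-pos_a\neq p-pos_b$ to induce distinct slices $P^p_{pos_a,B}\neq P^p_{pos_b,B}$; but the translation-invariant unary projection of RPE collapses these to identical vectors, contradicting Definition~\ref{def:back}. Combining the two failures with Definition~\ref{def:pos-sbrs} yields the corollary.

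The main obstacle I anticipate is the modelling step: articulating an unambiguous unary projection of RPE that both respects its pairwise, difference-only nature and fits the $P^l_{pos}$ signature of the framework, so that the subsequent non-uniqueness argument is unimpeachable. Once that projection is pinned down, the failure of both awareness properties follows immediately from the translation invariance inherent to any relative encoding, and the remaining work reduces to short citations of Definitions~\ref{def:for}--\ref{def:back} and Property~\ref{prop:uniq} rather than any computation.
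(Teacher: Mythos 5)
Your proposal is correct and rests on the same key observation as the paper's proof: a relative positional encoding is a function of the position difference $i-j$ only, hence translation-invariant, and therefore carries no absolute forward or backward positional information. The only cosmetic difference is which clause of Definitions~\ref{def:for} and~\ref{def:back} you invoke --- you force a well-defined per-position projection and show it must be constant, violating the requirement that distinct (reverse) positions receive distinct slices, whereas the paper argues that the encoding of a token varies with the reference token $j$ and so cannot consistently satisfy the equality clause; both are immediate consequences of the same translation invariance, and your treatment of the modelling step is if anything more explicit than the paper's.
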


\begin{proof}
The RPE is explored by recent language models~\cite{rpe,xlnet,trans-xl}. During the attention score calculation, the absolute positional encoding $\boldsymbol{P}$, e.g., SPE, is included as:
\begin{equation}
    \boldsymbol{A}=(\boldsymbol{X}_i+\boldsymbol{P}_i) \boldsymbol{W}_{q r y} \boldsymbol{W}_{k e y}^{\top}(\boldsymbol{X}_j+\boldsymbol{P}_j)^{\top},
\end{equation}
where $\boldsymbol{X}$ is the input feature and $\boldsymbol{W}$ is trainable weights.

For RPE in different work, they basically follow a format:
\begin{equation}
    \boldsymbol{A}=\boldsymbol{X}_i\boldsymbol{W}_{q r y} \boldsymbol{W}_{k e y}^{\top}\boldsymbol{X}_j^{\top}+g(\boldsymbol{P}_{ij}),
\end{equation}
where $\boldsymbol{P}$ only represents the relative position between $i$ and $j$. Because $\boldsymbol{P}_{ij}$ does not provide any information about the absolute position of a token, for different center tokens $i_1$ and $i_2$, $\boldsymbol{P}_{i_1j}\neq\boldsymbol{P}_{i_2j}$. Because $j$ can be before or after $i$ in position, RPE simultaneously is not \textit{forward-aware} and \textit{backward-aware}.
\end{proof}

Relative positional encoding (RPE) is designed to relax the assumption in language models that a word in an absolute position has the same meaning. RPE focuses more on the meaning of the relative position between two words. As proved above, RPE is neither \textit{forward-aware} nor \textit{backward-aware}, thus failing to meet both requirements of SBRS.

Empirically, the closer an item is to the last item, the more accurate it can reflect the user's latest preference. As discussed in the Introduction, many methods consider the last item as the representation of the latest preference (usually referred to as short-term or local preference). Meanwhile, other items are treated with less importance (usually referred to as long-term or global preference). Following Theorem~\ref{thm:spe}, SPE only contains the \textit{forward-awareness}. Intuitively, we can modify the SPE to a reverse sinusoidal positional encoding (RSPE):
\begin{equation}
\label{eq:rspe}
\begin{aligned}
    P_{l-pos-1,2i}^l&=\sin(\left(l-pos-1\right)/f(i)), \\ P_{l-pos-1,2i+1}^l&=\cos(\left(l-pos-1\right)/f(i)).
\end{aligned}
\end{equation}

\begin{cor}
\label{cor:rspe}
The reverse sinusoidal positional encoding cannot represent the positional information in SBRS because it is \textit{backward-aware} but not \textit{forward-aware}.
\end{cor}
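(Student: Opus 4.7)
The plan is to mirror, with forward and backward roles swapped, the proof of Theorem~\ref{thm:spe}. Two claims need to be established: (i) RSPE satisfies Definition~\ref{def:back} (\textit{backward-awareness}); and (ii) RSPE fails Definition~\ref{def:for} (\textit{forward-awareness}). Once both are in hand, Definition~\ref{def:pos-sbrs} immediately yields the stated conclusion.

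For (i), I would argue directly from Eq.~(\ref{eq:rspe}): the PE at forward position $pos$ in a session of length $l$ depends only on the quantity $l-pos-1$, which is exactly the reverse position. Hence with $B = \{0,1,\dots,d-1\}$, any two items whose reverse positions coincide, even across sessions of different lengths, share the same PE on all of $B$, while items with distinct reverse positions must disagree in at least one coordinate because the sinusoidal arguments differ. This is precisely the condition in Definition~\ref{def:back}.

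For (ii), I would exhibit a counterexample at the first position, paralleling the last-position argument used for SPE. Take $pos = 0$ in sessions of lengths $p = 1$ and $q = 2$. The reverse position is $0$ in the first case but $1$ in the second, so the even coordinates evaluate to $\sin(0) = 0$ versus $\sin(1/f(i))$, and the odd coordinates to $\cos(0) = 1$ versus $\cos(1/f(i))$. Reusing the bound $1/f(i) \in [0.0001, 1]$ from the proof of Theorem~\ref{thm:spe}, $\sin(1/f(i)) \neq 0$ and $\cos(1/f(i)) \neq 1$ for every admissible $i$. Hence no nonempty $A \subseteq \{0,\dots,d-1\}$ can make $P_{0,A}^1 = P_{0,A}^2$, contradicting Definition~\ref{def:for}.

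The only real obstacle is reading Eq.~(\ref{eq:rspe}) correctly: the subscript on the left has to be interpreted as the forward index into a length-$l$ PE table while the trigonometric argument uses the reverse position $l-pos-1$, so one must be careful not to confuse the two roles of that expression. Once this bookkeeping is made explicit, both halves of the proof collapse to near-verbatim analogues of the SPE verification, with the roles of ``first position'' and ``last position'' interchanged.
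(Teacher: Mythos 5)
Your proposal is correct and follows essentially the same route as the paper's own proof: part (i) is verified by observing that Eq.~(\ref{eq:rspe}) depends only on the reverse position $l-pos-1$ (the paper states this even more tersely as ``directly follows Definition~\ref{def:back}''), and part (ii) uses exactly the paper's counterexample, the first item of sessions of lengths $1$ and $2$, with the same bound on $1/f(i)$ forcing $\sin(1/f(i))\neq 0$ and $\cos(1/f(i))\neq 1$. Your write-up is if anything slightly cleaner in stating explicitly that no nonempty slice $A$ can survive the comparison.
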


\begin{proof}
Similar to the proof of Theorem~\ref{thm:spe}, we firstly prove RSPE is \textit{backward-aware} and then is not \textit{forward-aware}. (1) According to Eq.\ (\ref{eq:rspe}), RSPE directly follows Definition~\ref{def:back} for \textit{backward-aware}. (2) Take the first item of two sessions w.r.t.\ length $1$ and $2$ as example. For length $1$ session, $P_{0,2i}^1=0$ and $P_{0,2i+1}^1=1$. If RSPE is \textit{forward-aware}, for length $2$ session, there should be a slice of $P_{0,A}^2$ is the same as $P_{0,2i}^1$ and $P_{0,2i+1}^1$. For $2i$ dimension of RSPE,  $P_{0,2i}^1=P_{0,2i}^2=\sin(1/10000^{2 i / d})$. It is clear that $1/10000^{2 i / d}\in\left[0.00001,1\right]$. Then $P_{0,2i}^1\neq P_{0,2i}^2$. Similarly, $P_{0,2i+1}^1\neq P_{0,2i+1}^2$. Therefore, RSPE is not \textit{forward-aware}.
\end{proof}

With RSPE rather than SPE, a model is theoretically able to utilize the positional information that can reflect how an item is different from the latest preference in the session. But obviously, RSPE neglects the positional information representing the relationship between an item and the initial intention.

\subsection{Beyond Single Directional Positional Encoding}
\label{sec:beyond}
In the content above, we focus on the positional encoding that only rolls out in a single direction. In the following, we will discuss the additional positional encoding and 2D positional encoding.

Direct addition of SPE and RSPE fails in this situation. Such an addition will create a symmetric positional encoding that for $pos_a$ and $pos_b$ in a length $l$ session, if $pos_a-0=l-pos_b-1$, $P_{pos_a}^l=P_{p-pos_b-1}^l$. This will break the Property~\ref{prop:uniq}.

If we exchange the $2i$ and $2i+1$ dimensions of RSPE in Eq.\ (\ref{eq:rspe}), and do the addition of SPE and RSPE, the resulted additional sinusoidal positional encoding (ASPE) follows Property~\ref{prop:uniq} for uniqueness, but it is inconsistent with Definition~\ref{def:for} and Definition~\ref{def:back}. Therefore, there is no guarantee on the positional information in SBRS according to Definition~\ref{def:pos-sbrs}.

An interesting property about this ASPE is that the pattern of uniqueness is insufficient so that the attention model cannot easily infer the positions but only to memorize. In SPE and RSPE, there is a linear combination property between two positions~\cite{attention,tree}. But the ASPE breaks this property, which leads to the attention model cannot learn the relationship between different positions. We prove this difference in {Appendix}~\ref{prf:prop-ape}.

In the literature of computer vision that utilizes attention mechanism, there is a type of encoding for images called 2D sinusoidal positional encoding (2DSPE)~\cite{2dlatex,aacn,sasa,2dsa,e2eatt}. If $pos$ and $l$ are considered as the height and width, then 2DSPE is similar to the ASPE that the encoding of each pair $(pos,l)$ is totally unique and thus, it does not follow the \textit{forward-aware} and \textit{backward-aware} requirements. Therefore, they are not eligible for SBRS. Detail of an example of 2DSPE is presented in {Appendix}~\ref{exp:2dspe}.

\section{Building Positional Recommender Model}
\label{sec:build-pos-rec}
In this section, we will derive a (learned) dual positional encoding ((L)DPE) to improve the representation ability of positional information and utilize (L)DPE to develop our Positional Recommender model for session-based recommendation.

\subsection{Problem Definition}
\label{sec:def}
In SBRS, an item is denoted as $v$ and there is a unique item set $\mathcal{V}=\{v_1,v_2,v_3,\ldots,v_m\}$, with $m$ being the number of items. A session sequence from an anonymous user is defined as an order list $\mathcal{S}=[v_{s,1},v_{s,2},v_{s,3},\ldots,v_{s,l}]$, $v_{s,*} \in \mathcal{V}$. $l$ is the length of the session $\mathcal{S}$. In this paper, a sequence has at least one item and $l\in \mathbb{Z}^+$. The goal of our model is to take an anonymous session $\mathcal{S}$ as input, and predict the next item $v_{s,l+1}$ that matches the current preference.

\subsection{Dual Positional Encoding}
\label{sec:dpe}
We propose a dual positional encoding (DPE) by concatenating of half of the SPE and half of the RSPE positional encoding. The DPE is defined as:
\begin{equation}
\label{eq:dpe}
\begin{aligned}
    P_{pos,2i}^l&=\sin(pos/f(i)),\\
    P_{pos,2i+1}^l&=\cos(pos/f(i)),\\
    P_{pos,2i+d/2}^l&=\sin((l-pos-1)/f(i)),\\
    P_{pos,2i+1+d/2}^l&=\cos((l-pos-1)/f(i)),
\end{aligned}
\end{equation}
where $i\in \{0,1,\dots,d/4\}$ and for clarity, we assume $d/4\in\mathbb{Z}$ and all our results can be easily generalized to other cases.
\begin{thm}
\label{thm:dpe}
Dual positional encoding can represent the positional information of SBRS because it is both forward-aware and backward-aware.
\end{thm}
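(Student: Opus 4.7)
The plan is to reduce the statement to the two results already established for the single-directional encodings. By construction, DPE is the dimension-wise concatenation of an SPE block (the first $d/2$ coordinates) and an RSPE block (the last $d/2$ coordinates). Since Definitions~\ref{def:for} and~\ref{def:back} only require the existence of \emph{some} nonempty index set $A$ or $B$ on which the respective invariance holds, the natural choice is $A=\{0,1,\dots,d/2-1\}$ for forward-awareness and $B=\{d/2,d/2+1,\dots,d-1\}$ for backward-awareness.

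First I would verify forward-awareness on $A$. Restricting Eq.~(\ref{eq:dpe}) to indices in $A$ recovers exactly Eq.~(\ref{eq:spe}), so the values depend only on $pos$ and $i$, not on the session length. Consequently, for any $p,q\in\mathbb{Z}^+$ and any two positions $pos_a,pos_b$, $pos_a = pos_b$ forces $P_{pos_a,A}^p=P_{pos_b,A}^q$, while $pos_a\neq pos_b$ gives distinct values by the same injectivity argument used implicitly in the forward direction of Theorem~\ref{thm:spe}. This is precisely Definition~\ref{def:for}.

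Second I would verify backward-awareness on $B$. Restricting Eq.~(\ref{eq:dpe}) to indices in $B$ recovers Eq.~(\ref{eq:rspe}) (with reverse position $l-pos-1$ playing the role of the argument), so the values depend only on the reverse position. Hence $p-pos_a=q-pos_b$ implies $P_{pos_a,B}^p=P_{pos_b,B}^q$, and $p-pos_a\neq q-pos_b$ implies inequality, which is Definition~\ref{def:back}. Having established both properties, Definition~\ref{def:pos-sbrs} immediately yields the theorem.

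The proof is essentially mechanical because DPE was engineered so that its two halves carry the two awareness properties separately; the only subtlety is being explicit that the two index sets $A$ and $B$ may differ, which is permitted by the definitions. The main obstacle, such as it is, amounts to pointing out this decoupling clearly and invoking the already-proved injectivity facts from Theorem~\ref{thm:spe} and Corollary~\ref{cor:rspe} rather than re-deriving the sinusoidal non-collision argument from scratch. No new calculation is required beyond referencing these earlier arguments.
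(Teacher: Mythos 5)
Your proposal is correct and follows essentially the same route as the paper's proof: choose $A=\{0,\dots,d/2-1\}$ and $B=\{d/2,\dots,d-1\}$, observe that the restrictions of Eq.~(\ref{eq:dpe}) to these index sets coincide with SPE and RSPE respectively, and invoke Definitions~\ref{def:for}, \ref{def:back}, and \ref{def:pos-sbrs}. Your version is marginally more complete in that you also address the inequality direction of the definitions (distinct positions yield distinct encodings), which the paper's proof leaves implicit.
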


\begin{proof}
(1) $\forall p,q\in\mathbb{Z}^+$, $\forall pos\leq \text{min}(p,q)$, $\exists A=\{0,1,\dots,d/2-1\}$, s.t. $P_{pos,A}^p=P_{pos,A}^q$. This follows Definition~\ref{def:for} and DPE is \textit{forward-aware}. (2) $\forall p,q\in\mathbb{Z}^+$, $\forall pos_a,pos_b=\{a,b|p-a=q-b,0\leq a<p,0\leq b<q\}$, $\exists B=\{d/2,d/2+1,\dots,d-1\}$, s.t. $P_{pos_a,B}^p=P_{pos_b,B}^q$. This follows Definition~\ref{def:back} and DPE is \textit{backward-aware}. Therefore, DPE can represent the positional information of SBRS.
\end{proof}

This theorem states that the proposed DPE can represent the positional information of SBRS. For the $A=\{0,1,\dots,d/2-1\}$ dimensions of DPE, it is \textit{forward-aware}. While for the rest $B=\{d/2,d/2+1,\dots d-1\}$ dimensions, it is \textit{backward-aware}. For example, for the first item of any session with length $p$ and $q$, $P_{0,A}^p=P_{0,A}^q$ is always true while no guarantee that $P_{0,B}^p=P_{0,B}^q$. But $P_{p-1,B}^p=P_{q-1,B}^q$ is always true for DPE.

The proposed DPE has met the requirements of SBRS. This encoding scheme is parameter-free. In this situation, positions in a session can be considered linear because they are ordered with a consistent interval. However, in the real world, the position of each item actually comes from the timestamp of the interaction. The time intervals between interactions are neither consistent nor linear. Therefore, we propose the following learned dual positional encoding (LDPE) to improve the inductive bias injected into a session-based recommendation model:
\begin{equation}
\label{eq:ldpe}
\begin{aligned}
    P_{pos,0:d/2-1}^l&=\text{Embed}[pos],\\
    P_{pos,d/2:d-1}^l&=\text{Embed}[l-pos-1],
\end{aligned}
\end{equation}
where $\text{Embed}\in \mathbb{R}^{max(l)\times d}$ stands for a learned embedding matrix and $[\cdot]$ is the same as the slice operation for a list in Python. Similar to DPE, LDPE follows the Definition~\ref{def:for} and~\ref{def:back}, and thus~\ref{def:pos-sbrs}.

\subsection{Positional Recommender Model}
\label{sec:pos-rec}
With DPE and LDPE ((L)DPE), we now build our Positional Recommender model (PosRec) based on the Position-aware Gated Graph Neural Network (PGGNN) and the bidirectional Transformer. Similar to GNN-based models~\cite{srgnn,fgnn,gc-san}, a session is firstly converted into a weighted and directed session graph. Then PGGNN is applied to calculate the position-aware item embedding as the input of the bidirectional Transformer layer. (L)DPE is incorporated into the bidirectional Transformer layer to enhance the positional information. In the end, a single vector is computed as the representation of the session and used to predict the user's next click.

\subsubsection{Session Graph}
\label{sec:sess-graph}
To utilize the neighboring information, a session is converted into a weighted directed session graph. Similar to~\cite{srgnn,fgnn,gc-san}, the conversion procedure basically abides by the following process. If an item $v_{s,t}$ is immediately followed by the next item $v_{s,t+1}$ in the session $S$, then a directed edge $(w_{s,t,t+1},v_{s,t},v_{s,t+1})$ from this item to the next item with the edge weight $w_{s,t,t+1}$ indicating the frequency of occurrence of such an edge in $S$, is added to the session graph $G_s(V_s,E_s)$. $V_s$ includes all items in the session $S$, and we refer to an item as a node in the following without specific indication. Each node feature $\mathbf{x}_{s,t}\in \mathbb{R}^{1\times d}$ is initialized by the corresponding ID of the item $v_{s,t}$ and a lookup embedding matrix. $E_s$ stands for all generated edges.

\subsubsection{Position-aware Gated Graph Neural Network}
\label{sec:pgnn}
The Position-aware Gated Graph Neural Network (PGGNN) is designed to process the session graph to obtain the updated item embedding. PGGNN consists of two node aggregation steps, a neighboring node aggregation based on Gated Graph Neural Network (GGNN)~\cite{ggnn} and an anchor node aggregation based on position-aware Graph Neural Network (PGNN)~\cite{pgnn}.

GGNN for the weighted and directed session graph is defined as:
\begin{equation}
\label{eq:ggnn}
    \hat{\mathbf{x}_{t'}}=\sum_{v_{t'}\in N_{\text{in}}(v_t)}w_{t',t}\mathbf{x}_{t'}\mathbf{W}_{\text{in}}||\sum_{v_{t'}\in N_{\text{out}}(v_t)}w_{t,t'}\mathbf{x}_{t'}\mathbf{W}_{\text{out}},
\end{equation}
\begin{equation}
    \mathbf{x}'_t=\text{GRU}(\mathbf{x}_t,\hat{\mathbf{x}_{t'}}),
\end{equation}
where $\hat{\mathbf{x}_{t'}}\in \mathbb{R}^{1\times 2d}$ is the message from all neighbors of $v_t$, $N_{\text{in}}(v_t)$ is the set of nodes targeting at $v_t$, $N_{\text{out}}(v_t)$ is the set of nodes targeted by $v_t$, $\mathbf{W}_{\text{in}},\mathbf{W}_{\text{out}}\in\mathbb{R}^{d\times d}$ are trainable weights and $||$ stands for the concatenation along the feature dimension. $\mathbf{x}'_t$ is the updated node feature of $v_t$.

\begin{figure}[t]
    \centering
    \includegraphics[width=1.\linewidth]{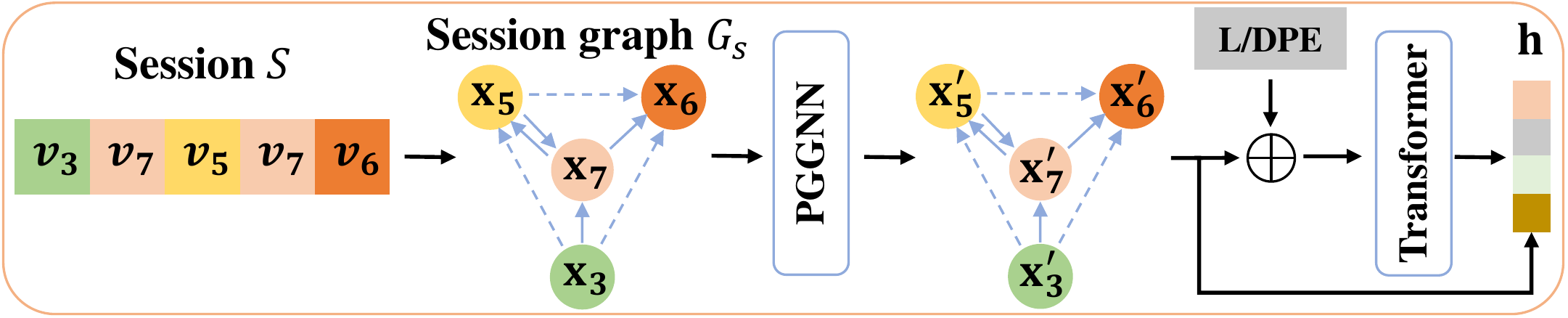}
    \caption{The pipeline of PosRec. A session $S$ is firstly converted into a session graph $G_s$. PGGNN aggregates neighboring (solid edges) and anchor (dashed edges) nodes to update node features in $G_s$. Bidirectional Transformer uses updated node features and (L)DPE to compute a session representation $\mathbf{h}$. $\oplus$ stands for element-wise addition.}
    \label{fig:whole}
\end{figure}

The anchor node is first introduced in PGNN by~\cite{pgnn} with random sampling on all nodes in a graph. In the context of the session graph, nodes with notable importance, e.g., the first item, the last item and re-appearing items, can be chosen as anchor nodes. Therefore, to improve the inductive bias in the GGNN, for each node $v_t$ in a session, we add the first item $v_0$ to $N_{\text{in}}(v_t)$ and the last item $v_{l-1}$ to $N_{\text{out}}(v_t)$. In addition, for all items $v_*$ appearing more than once in a session, $v_*$ are added to both $N_{\text{in}}(v_t)$ and $N_{\text{out}}(v_t)$. Therefore, we substitute $N_{\text{in}}(v_t)$ and $N_{\text{out}}(v_t)$ in Eq.\ (\ref{eq:ggnn}) with $N_{\text{in}}'(v_t)=\{N_{\text{in}}(v_t)+v_0+v_*\}$ and $N_{\text{out}}'(v_t)=\{N_{\text{out}}(v_t)+v_{l-1}+v_*\}$. The corresponding weights between any anchor node and other nodes are defined as the distance of these two nodes on an unweighted and undirected graph converted from $G_s$ by omitting the weight and direction associated with every edge. Examples of these edges are shown as dashed edges in Fig.~\ref{fig:whole}.

\subsubsection{Bidirectional Transformer Readout Function with (L)DPE}
\label{sec:trans}
With the updated item features and (L)DPE, the bidirectional Transformer layer serves as the readout function to generate a feature vector for the session. The bidirectional Transformer layer operates at the graph level rather than the sequence level of the session, which will lower the noisy signal of repetitive items. Consider $\mathbf{X}'\in\mathbb{R}^{n\times d}$, which includes all updated node features, where $n$ is the number of unique items in the session. Let $\mathbf{P}^{n\times d}$ represent (L)DPE of corresponding items of nodes in the session. The feature vector $\mathbf{h}\in\mathbb{R}^{1\times d}$ representing the session can be defined as:
\begin{equation}
    \mathbf{H}=\text{Transformer}(\mathbf{X}'+\mathbf{P}),
\end{equation}
\begin{equation}
\label{eq:h-session}
    \mathbf{h}=\lambda_0\mathbf{X}_{l-1}'+\lambda_1\mathbf{H}_{l-1}+\lambda_2\mathbf{H}_0,
\end{equation}
where $\mathbf{H}\in\mathbb{R}^{n\times d}$ carries all output states of the bidirectional Transformer and subscripts $l-1$ and $0$ represent the corresponding entries in $\mathbf{X}'$ and $\mathbf{H}$ of items $v_0$ and $v_{l-1}$. $\lambda_*$ are pre-defined weights.

Note that, if a session does not contain duplicated items, every interaction will have a unique (L)DPE. When there are repeated items, the node for such an item will adopt the \textit{forward-aware} part of (L)DPE of the earliest appearance and the \textit{backward-aware} part of (L)DPE of the latest appearance.

\subsubsection{Objective Function}
\label{sec:obj}
After having a representation $\mathbf{h}$ of a session, we can compare $\mathbf{h}$ with the whole item set to decide what to recommend to the user. Let $\mathbf{X}\in\mathbb{R}^{m\times d}$ be the initial embedding of the whole item set. The score of recommendation $\hat{\mathbf{y}}\in\mathbb{R}^{m\times 1}$ and the Cross-Entropy loss function are defined as:
\begin{equation}
\label{eq:score}
\hat{\mathbf{y}}=\text{Softmax}(\mathbf{X}\mathbf{h}^\top),
\end{equation}
\begin{equation}
L=-\sum_{i=1}^{M}\mathbf{y}_i^\top\log \left(\hat{\mathbf{y}}_i\right),
\end{equation}
where $\mathbf{y}_i\in\mathbb{R}^{m\times 1}$ is the one-hot label of training sample data $i$ and $M$ is the batch size.

\subsection{Discussion}
We will discuss the relationship and difference between the proposed PosRec model and existing recommendation methods as well as the position encoding schemes.

\subsubsection{Comparisons with Existing Recommendation Methods}
The proposed PosRec model makes use of a newly designed GNN module for the item representation learning and a Transformer module equipped with the (L)DPE for the session representation learning. Compared with previous GNN based methods, e.g., SR-GNN~\cite{srgnn}, GC-SAN~\cite{gc-san}, FGNN~\cite{fgnn,fgnnj}, and MGNN-SPred~\cite{bc}, PosRec has a newly designed GNN module, PGGNN, which is position aware. The position awareness is the main difference between the GNN modules. In these previous work, the main contributions are including the direction information and the behavioral information in the edges. PGGNN introduces the position information in the edges, which approaches the item representation learning from a novel perspective. In addition, as these research work indicates, there is a sparsity issue in the graph construction and propagation for sessions since there could be no repeated items in the same session to construct a graph rather than a simple link list of items. Different methods try to add more connections by self-loops~\cite{fgnn} and using the cross-session information~\cite{fgnnj}. In the proposed PGGNN method, the anchor nodes are chosen to be the first and the last item, which develops extra meaningful edges due to the requirement of calculating the relationship between normal nodes and the anchor nodes.

\subsubsection{Comparisons with Existing Position Encoding Schemes}
The proposed PosRec model incorporates both the \textit{forward} and \textit{backward} positional information with the (L)DPE. For existing recommendation models~\cite{sasrec,bert4rec}, the most popular position encoding scheme is the LPE introduced in Attention~\cite{attention}. As analyzed above, LPE is a \textit{forward-aware} position encoding scheme. These research work has also investigated the performance of SPE, which is shown to be inferior to the LPE. It could be due to the SPE can only represent the initial intention, a less important factor compared with the latest preference. While LPE still has a possibility of learning an implicit positional information with the learnable embeddings. There are other methods to incorporate the positional information in recommendation, for example, NARM~\cite{narm}, STAMP~\cite{Liu18STAMP}, SR-GNN~\cite{srgnn}, and GC-SAN~\cite{gc-san} have an attention module using the last item as the query to all other items in the session to emphasize on the last position. Such a strategy for the positional information can only consider the positional information of the last interaction, and neglect the positional information of all other interactions. While for the GNN-based methods~\cite{srgnn,gc-san,fgnn,fgnnj,bc}, the relative positional information is contained in the direction of edges. But the relative positional information cannot reflect the absolute positional information, e.g., the latest preference and the initial intention. The proposed (L)DPE simultaneously provides the property of being both \textit{forward} and \textit{backward-aware}, and the learnability for a more representative embedding scheme.

\section{Experiments}
\label{sec:exp}
In this section, we present how extensive experiments are conducted to evaluate the effectiveness of our proposed PosRec model, (L)DPE and the PGGNN module. We will answer the following research questions:
\begin{itemize}
    \item \textbf{RQ1}: How does PosRec perform in the session-based recommendation task? (Section~\ref{sec:overall})
    \item \textbf{RQ2}: How does (L)DPE perform compared with other positional encoding schemes? (Section~\ref{sec:diff-pos})
    \item \textbf{RQ3}: Does anchor node aggregation in PGGNN improve the recommendation? (Section~\ref{sec:exp-anc})
    \item \textbf{RQ4}: What is the visualization of DPE? (Section~\ref{sec:vis})
    \item \textbf{RQ5}: How sensitive is PosRec w.r.t.\ the hyper-parameters? (Section~\ref{sec:exp-sen})
\end{itemize}

\subsection{Setup}
\label{sec:exp-setup}
In this section, we will describe the experimental setup in terms of datasets (Section~\ref{sec:dataset}), the preprocessing procedure (Section~\ref{sec:prepro}), baselines (Section~\ref{sec:baselines}), evaluation metrics (Section~\ref{sec:metric}) and the implementation (Section~\ref{sec:imple}).

\subsubsection{Dataset}
\label{sec:dataset}
Experiments are conducted on two benchmark datasets \textit{Yoochoose} and \textit{Diginetica}, which is consistent with previous methods~\cite{narm,Liu18STAMP,srgnn,fgnn}.

\begin{itemize}
    \item \textit{Yoochoose} is used as a challenge dataset for RecSys Challenge 2015. It is obtained by recording click-streams from an e-commerce website within 6 months. Since \textit{Yoochoose} is a huge dataset, we follow previous methods~\cite{narm,Liu18STAMP,srgnn,fgnn} to further divide this dataset into two subsets according to the timestamp. \textit{Yoo.~1/64} stands for the most recent $\frac{1}{64}$ of the whole dataset and \textit{Yoo.~1/4} for $\frac{1}{4}$ correspondingly.
    \item \textit{Diginetica} is used as a challenge dataset for CIKM cup 2016. It contains the transaction data which is suitable for session-based recommendation.
\end{itemize}

The detailed statistics of each dataset can be found in Table~\ref{tab:datasets}.

\begin{table}[t]
    \centering
    \caption{Statistic of datasets. (Yoo.\ is short for Yoochoose.)}
    \begin{tabular}{lccccc}
         \toprule
         Dataset&Clicks&$\sharp$ Train&$\sharp$ Test&Items&Avg. length\\
         \midrule
         Yoo.\ 1/64&557248&369859&55898&16766&6.16\\
         Yoo.\ 1/4&8326407&5917746&55898&29618&5.71\\
         Diginetica&982961&719470&60858&43097&5.12\\
         \bottomrule
    \end{tabular}
    \label{tab:datasets}
\end{table}

\subsubsection{Preprocessing}
\label{sec:prepro}
For the fairness and the convenience of comparison, we follow~\cite{narm,Liu18STAMP,srgnn,fgnn} to filter out sessions of length 1 and items which occur less than 5 times in each dataset respectively. After the preprocessing step, there are 7,981,580 sessions and 37,483 items remaining in \textit{Yoochoose} dataset, while 204,771 sessions and 43097 items in \textit{Diginetica} dataset. Similar to~\cite{improved}, we split a session of length $n$ into $n-1$ partial sessions of length ranging from $2$ to $n$ to augment the datasets. For the partial session of length $i$ in the session $S$, it is defined as $[v_{s,0},\ldots,v_{s,i-1}]$ with the last item $v_{s,i-1}$ as $v_{label}$. Following~\cite{narm,Liu18STAMP,srgnn,fgnn}, for \textit{Yoochoose} dataset, the most recent portions $1/64$ and $1/4$ of the training sequence are used as two split datasets respectively.

\subsubsection{Baselines}
\label{sec:baselines}
In order to demonstrate the advantage of the proposed PosRec model, we compare it with the following representative methods:
\begin{itemize}
    \item \textbf{POP} is a popularity-based method that always recommends the most popular items in the whole training set, which serves as a strong baseline in some situations although it is simple.
    \item \textbf{S-POP} is a popularity-based method that always recommends the most popular items for the individual session.
    \item \textbf{Item-KNN}~\cite{item-knn} computes the similarity of items by the cosine distance of two item vectors in sessions. Regularization is also introduced to avoid the rare high similarities for unvisited items.
    \item\textbf{BPR-MF}~\cite{bprmf} proposes a BPR objective function which utilizes a pairwise ranking loss to train the ranking model. Following~\cite{narm}, Matrix Factorization is modified to session-based recommendation by using mean latent vectors of items in a session.
    \item \textbf{FPMC}~\cite{fpmc} is a hybrid model for the next-basket recommendation and it achieves state-of-the-art results. For anonymous session-based recommendation, following~\cite{narm}, we omit the user feature directly because of the unavailability.
    \item \textbf{GRU4REC}~\cite{gru4rec} stacks multiple GRU layers to encode the session sequence into a final state. It also applies a ranking loss to train the model.
    \item \textbf{NARM}~\cite{narm} extends to use an attention layer to combine all of the encoded states of RNN, which enables the model to explicitly emphasize on the more important parts of the input.
    \item \textbf{STAMP}~\cite{Liu18STAMP} uses attention layers to replace all RNN encoders in previous work to even make the model more powerful by fully relying on the self-attention of the last item in a sequence. STAMP does not use any kind of positional encoding.
    \item \textbf{SR-GNN}~\cite{srgnn} applies a gated graph convolutional layer~\cite{ggnn} to obtain item embeddings, followed by a self-attention of the last item as STAMP does to compute the sequence level embeddings.
    \item \textbf{FGNN}~\cite{fgnn} is also a graph-based recommender system, which uses the attention mechanism~\cite{attention} in both the item representation learning and the item order learning.
    \item \textbf{GC-SAN}~\cite{gc-san} substitutes the simple attention in the graph embedding learning of SR-GNN with multi-layer Transformers~\cite{attention}.
\end{itemize}

Although SASRec~\cite{sasrec} is originally used in the sequential recommendation task rather than the SBRS task, we can still make this state-of-the-art method adapt to our experiment.
\begin{itemize}
    \item \textbf{SASRec} is highly similar to STAMP that stacks attention layers and use the last hidden layer to predict a user's preference. SASRec makes use of LPE in its original model.
\end{itemize}

\afterpage{
\begin{center}
\begin{longtable}{c|cccc}
    \caption{Overall performance.}
    \label{tab:baseline}\\
    \toprule
    \multirow{2}*{Method}&\multicolumn{4}{c}{\textit{Yoo.\ 1/64}}\\
    \cline{2-5}
    &R@5&R@10&M@5&M@10\\
    \midrule
    POP&2.37&4.56&0.56&1.13\\
    S-POP&9.96&20.18&15.25&17.96\\
    Item-KNN&28.35$\pm$0.13&41.82$\pm$0.08&19.37$\pm$0.12&21.24$\pm$0.07\\
    BPR-MF&7.64$\pm$0.18&20.47$\pm$0.14&8.58$\pm$0.15&11.65$\pm$0.11\\
    FPMC&22.93$\pm$0.09&35.38$\pm$0.19&11.83$\pm$0.17&14.57$\pm$0.10\\
    GRU4REC&37.81$\pm$0.08&50.30$\pm$0.13&20.13$\pm$0.09&22.81$\pm$0.05\\
    NARM&44.69$\pm$0.12&57.56$\pm$0.09&25.43$\pm$0.05&27.16$\pm$0.08\\
    STAMP&46.42$\pm$0.13&58.67$\pm$0.07&28.05$\pm$0.08&29.66$\pm$0.10\\
    SR-GNN&47.33$\pm$0.07&60.04$\pm$0.11&28.32$\pm$0.10&30.08$\pm$0.12\\
    FGNN&47.12$\pm$0.10&60.13$\pm$0.08&28.45$\pm$0.06&30.17$\pm$0.09\\
    GC-SAN&46.48$\pm$0.08&59.47$\pm$0.11&27.58$\pm$0.18&29.33$\pm$0.06\\
    SASRec&46.65$\pm$0.16&58.98$\pm$0.10&28.13$\pm$0.12&29.87$\pm$0.13\\
    \midrule
    PosRec&$\underline{\bm{47.96}}\pm$0.15&$\underline{\bm{60.90}}\pm$0.07&$\underline{\bm{28.83}}\pm$0.12&$\underline{\bm{30.57}}\pm$0.09\\
    \bottomrule
    \toprule
    \multirow{2}*{Method}&\multicolumn{4}{c}{\textit{Yoo.\ 1/4}}\\
    \cline{2-5}
    &R@5&R@10&M@5&M@10\\
    \midrule
    POP&0.76&0.98&0.09&0.15\\
    S-POP&8.69&18.57&14.84&16.87\\
    Item-KNN&30.16$\pm$0.15&41.86$\pm$0.13&18.54$\pm$0.09&20.29$\pm$0.14\\
    BPR-MF&1.15$\pm$0.11&2.61$\pm$0.06&0.79$\pm$0.05&1.13$\pm$0.07\\
    FPMC&-&-&-&-\\
    GRU4REC&36.80$\pm$0.10&49.60$\pm$0.12&19.71$\pm$0.08&21.43$\pm$0.05\\
    NARM&44.95$\pm$0.08&57.73$\pm$0.07&25.60$\pm$0.09&27.39$\pm$0.11\\
    STAMP&45.38$\pm$0.12&58.03$\pm$0.10&27.38$\pm$0.12&29.08$\pm$0.09\\
    SR-GNN&47.71$\pm$0.09&60.64$\pm$0.09&28.33$\pm$0.07&30.24$\pm$0.06\\
    FGNN&47.63$\pm$0.14&60.68$\pm$0.13&28.43$\pm$0.10&30.19$\pm$0.08\\
    GC-SAN&46.97$\pm$0.07&59.86$\pm$0.14&28.12$\pm$0.08&29.72$\pm$0.09\\
    SASRec&45.21$\pm$0.18&57.88$\pm$0.108&27.46$\pm$0.17&29.23$\pm$0.15\\
    \midrule
    PosRec&$\underline{\bm{47.97}}\pm$0.12&$\underline{\bm{60.92}}\pm$0.18&$\underline{\bm{29.29}}\pm$0.07&$\underline{\bm{31.03}}\pm$0.08\\
    \bottomrule
    \toprule
    \multirow{2}*{Method}&\multicolumn{4}{c}{\textit{Diginetica}}\\
    \cline{2-5}
    &R@5&R@10&M@5&M@10\\
    \midrule
    POP&0.34&0.68&0.06&0.13\\
    S-POP&2.67&9.95&9.32&11.79\\
    Item-KNN&11.75$\pm$0.09&24.09$\pm$0.08&8.34$\pm$0.12&10.63$\pm$0.18\\
    BPR-MF&0.78$\pm$0.10&2.28$\pm$0.14&0.07$\pm$0.03&0.83$\pm$0.06\\
    FPMC&4.49$\pm$0.07&15.71$\pm$0.012&2.86$\pm$0.21&5.17$\pm$0.13\\
    GRU4REC&21.87$\pm$0.06&32.67$\pm$0.12&11.83$\pm$0.09&13.26$\pm$0.07\\
    NARM&24.43$\pm$0.07&36.09$\pm$0.09&12.48$\pm$0.07&14.02$\pm$0.10\\
    STAMP&25.85$\pm$0.07&37.46$\pm$0.10&14.42$\pm$0.08&15.96$\pm$0.12\\
    SR-GNN&26.53$\pm$0.09&37.87$\pm$0.14&14.99$\pm$0.05&16.49$\pm$0.06\\
    FGNN&26.46$\pm$0.13&37.82$\pm$0.08&15.06$\pm$0.12&16.55$\pm$0.13\\
    GC-SAN&26.29$\pm$0.08&37.75$\pm$0.07&14.73$\pm$0.12&16.23$\pm$0.09\\
    SASRec&25.87$\pm$0.13&37.53$\pm$0.09&14.37$\pm$0.16&16.12$\pm$0.11\\
    \midrule
    PosRec&$\underline{\bm{27.30}}\pm$0.12&$\underline{\bm{38.87}}\pm$0.16&$\underline{\bm{15.31}}\pm$0.09&$\underline{\bm{16.85}}\pm$0.08\\
    \bottomrule
\end{longtable}
\end{center}}

\subsubsection{Evaluation metrics}
\label{sec:metric}
For each time step, a recommender system should give out a full ranking over the whole item set. According to~\cite{metric}, such a ranking result will lead to a fairer comparison than sampling-based ranking methods for different models. Additionally to keep the same setting as previous baselines, we mainly choose to use two metrics, Recall and Mean Reciprocal Ranking. For both of them, we use top-5 and top-10 result to make comparisons.

\begin{itemize}
    \item \textbf{R@K} (Recall calculated over top-K items). The R@K score is the metric that calculates the proportion of test cases which recommends the correct items in a top K position in a ranking list,
    \begin{equation}
        \text{R@K}=\frac{n_{hit}}{N},
    \end{equation}
    where $N$ represents the number of test sequences $S_{test}$ in the dataset and $n_{hit}$ counts the number that the desired items are in the top K position in the ranking list, which is named the $hit$. R@K is also known as the hit ratio.
    \item \textbf{M@K} (Mean Reciprocal Rank calculated over top-K items). The reciprocal is set to $0$ when the desired items are not in the top K position and the calculation is as follows,
    \begin{equation}
            \text{M@K}=\frac{1}{N}\sum\limits_{v_{label}\in S_{test}}\frac{1}{Rank(v_{label})}.
    \end{equation}
    The Mean Reciprocal Rank is a normalized ranking of $hit$, the higher the score, the better the quality of the recommendation because it indicates a higher ranking position of the desired item.
\end{itemize}

\subsubsection{Implementation}
\label{sec:imple}
We apply one layer of PGGNN and one layer of the attention module for our PosRec. Unless indicated otherwise, we use Adam~\cite{adam} to train our model with an initial learning rate $0.001$ that decreases at the rate $0.1$ for every $3$ epochs. The batch size and the embedding size are set to $100$. To reduce the overfitting, we apply an $l_2$ regularization for all parameters and early stop at the end of the 4-th epoch. For weights in Eq.\ (\ref{eq:h-session}), $\lambda_0$ and $\lambda_1$ are both set to $1$. For \textit{Yoochoose}, $\lambda_2$ is set to $0.5$ while for \textit{Diginetica}, it is set to $1$. And the default position encoding is set to LDPE if there is no further indication. We use one Nvidia GeForce RTX 2080 ti GPU for training.

\subsection{Overall Performance}
\label{sec:overall}
The overall recommendation performance is demonstrated in Table~\ref{tab:baseline}. We compare the PosRec with the following baselines: (1) shallow methods: POP, S-POP, Item-KNN~\cite{item-knn}, BPR-MF~\cite{bprmf} and FPMC~\cite{fpmc}; (2) GRU-based methods: GRU4REC~\cite{gru4rec} and NARM~\cite{narm}; (3) Attention-based method: STAMP~\cite{Liu18STAMP}; (4) GNN-based methods: SR-GNN~\cite{srgnn}, FGNN~\cite{fgnn} and GC-SAN~\cite{gc-san} and (5) adapted sequential methods: SASRec~\cite{sasrec}.

Our PosRec achieves the best performance compared with all baselines across all the datasets by four metrics. Compared with the previous state-of-the-art methods, the improvement is consistent. Our proposed PosRec method effectively exploits the positional information in the positional encoding module and the PGGNN module.

The traditional methods generally cannot compete with the current deep learning models. The popularity-based methods, POP and S-POP, simply recommend items based on the frequencies of appearance in the whole dataset and the current session respectively. These popularity-based approaches tend to recommend fixed items in a general situation, which is not able to learn to recommend the proper items. Although the S-POP model can make use of the session information to improve the performance compared with the POP, it still lacks the ability to learn the session pattern. PosRec performs much better compared with both of them. For shallow learning-based methods, BPR-MF and FPMC, they do not consider the session information in the recommendation, they do not have a competitive performance neither scale up well in the larger dataset \textit{Yoochoose $1/4$}. Item-KNN outperforms all the traditional methods. Item-KNN only considers the closeness between items while avoiding sequential information. Generally, these traditional methods cannot compete with the neural network-based recommender systems.

GRU4REC is the first session-based model that uses a recurrent structure to capture sequential information. GRU4REC outperforms the shallow and popularity-based methods by a large margin, which can be viewed as a strong baseline. The sequential information is encoded explicitly by the recurrent calculation procedure. But the positional information of a session is only included implicitly along with the recurrent calculation. A big issue of the recurrent-based methods is that there is a catastrophic forgetting of the early information. For the attention-based baselines, NARM and STAMP, they apply the self-attention mechanism mainly over the last item, which considers the last item as the pivot item to represent the latest intention. This structure is a better inductive bias in positional information than linearly rolling out as RNN structure by outperforming the GRU4REC. This situation is also proved by the better performance of STAMP compared with NARM. STAMP completely excludes the recurrent structure and relies heavily on the last item. SASRec achieves a slightly higher result than STAMP since they both perform attention while SASRec includes a learned absolute positional encoding. Recent GNN-based methods, SR-GNN, FGNN and GC-SAN, have made improvements by utilizing the connectivity of items. These methods use a session graph to represent the session sequence by linking the interactions according to their chronological order. The relative positional information between interactions is included in the edge and in the graph readout calculation step, it resembles the attention mechanism to focus more on the last item. The positional information is only partially considered with the relative positional information and the \textit{backward-awareness}.

\begin{figure}[t]
    \centering
    \subfigure{
    \includegraphics[width=0.23\linewidth]{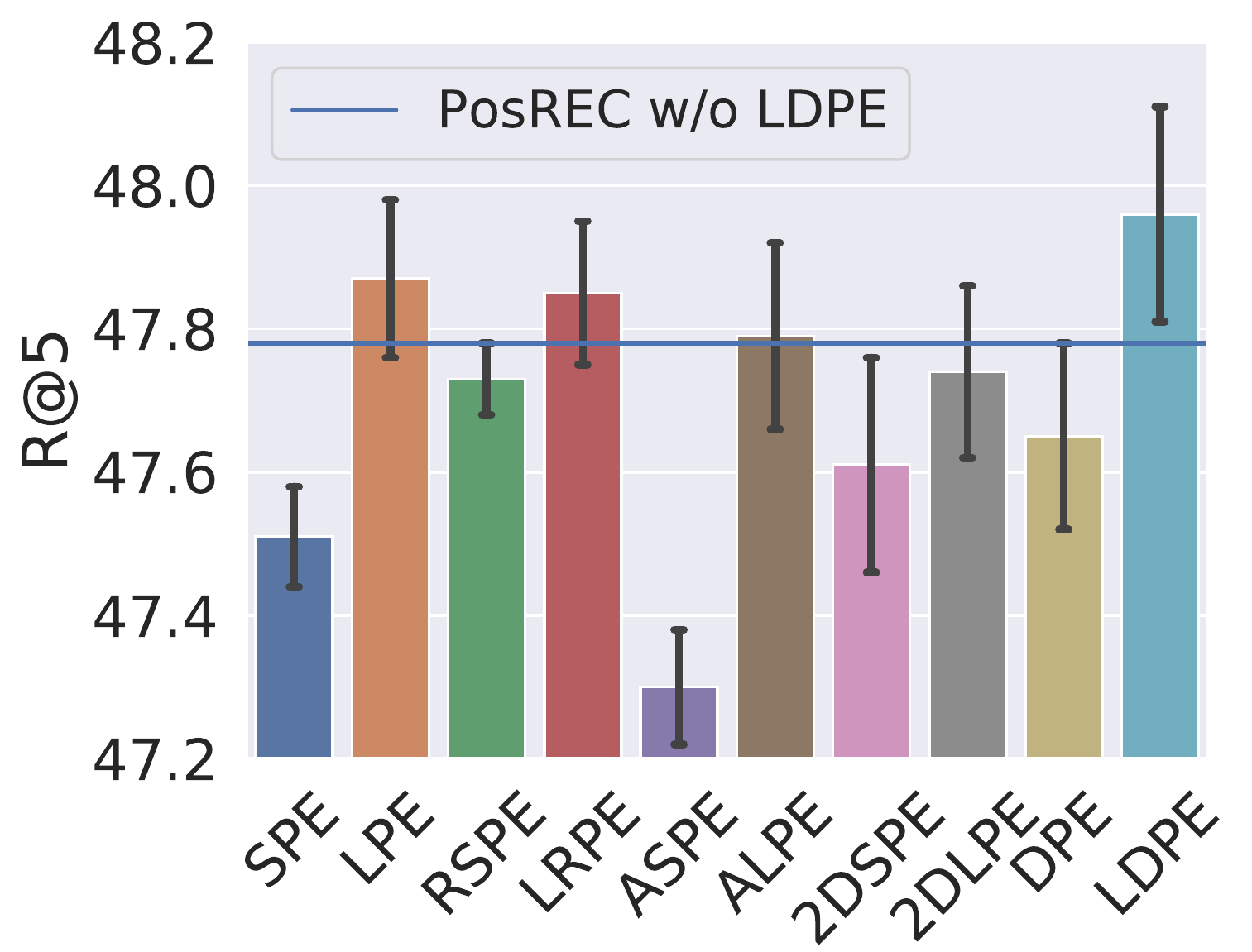}
    }
    \subfigure{
    \includegraphics[width=0.23\linewidth]{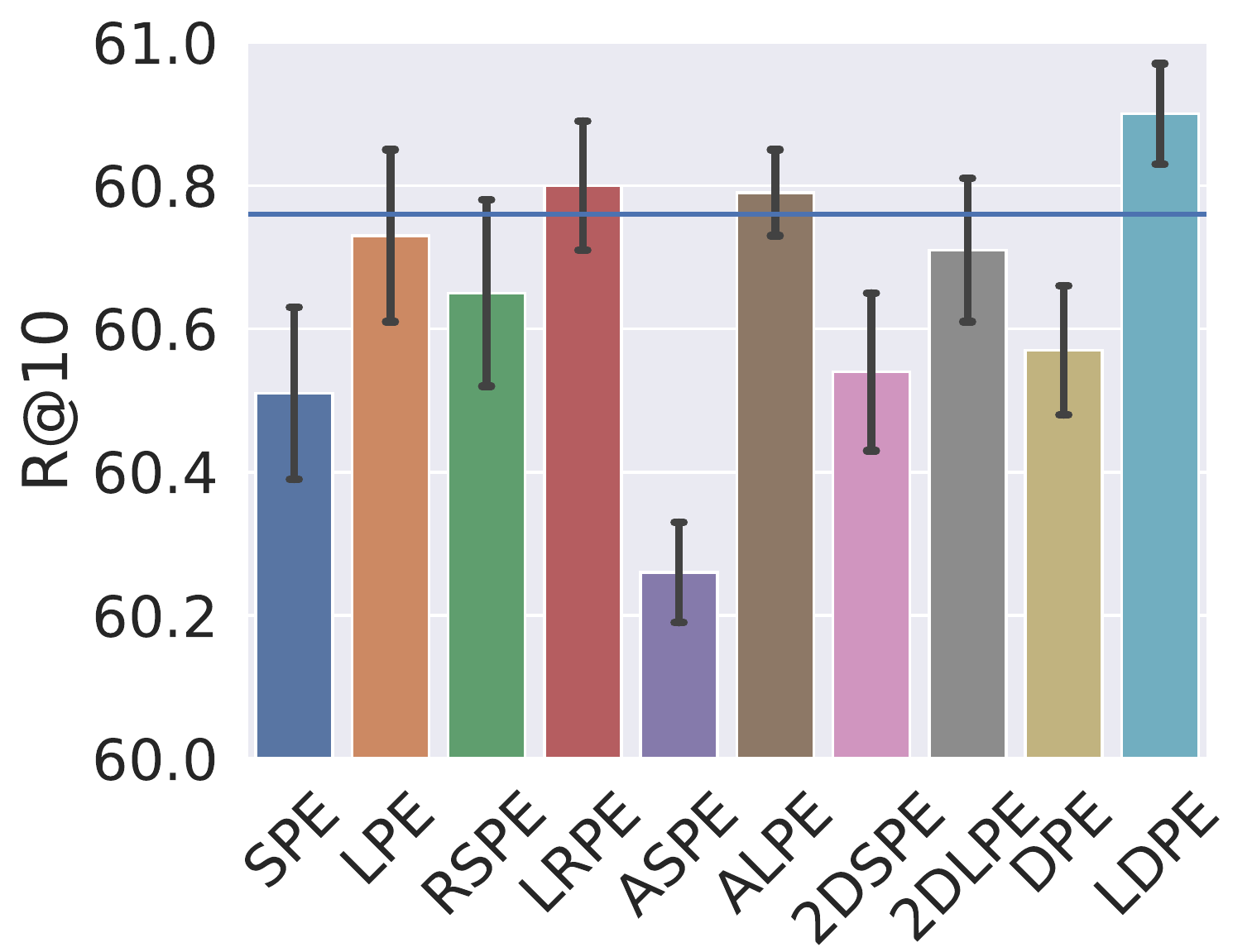}
    }
    \subfigure{
    \includegraphics[width=0.23\linewidth]{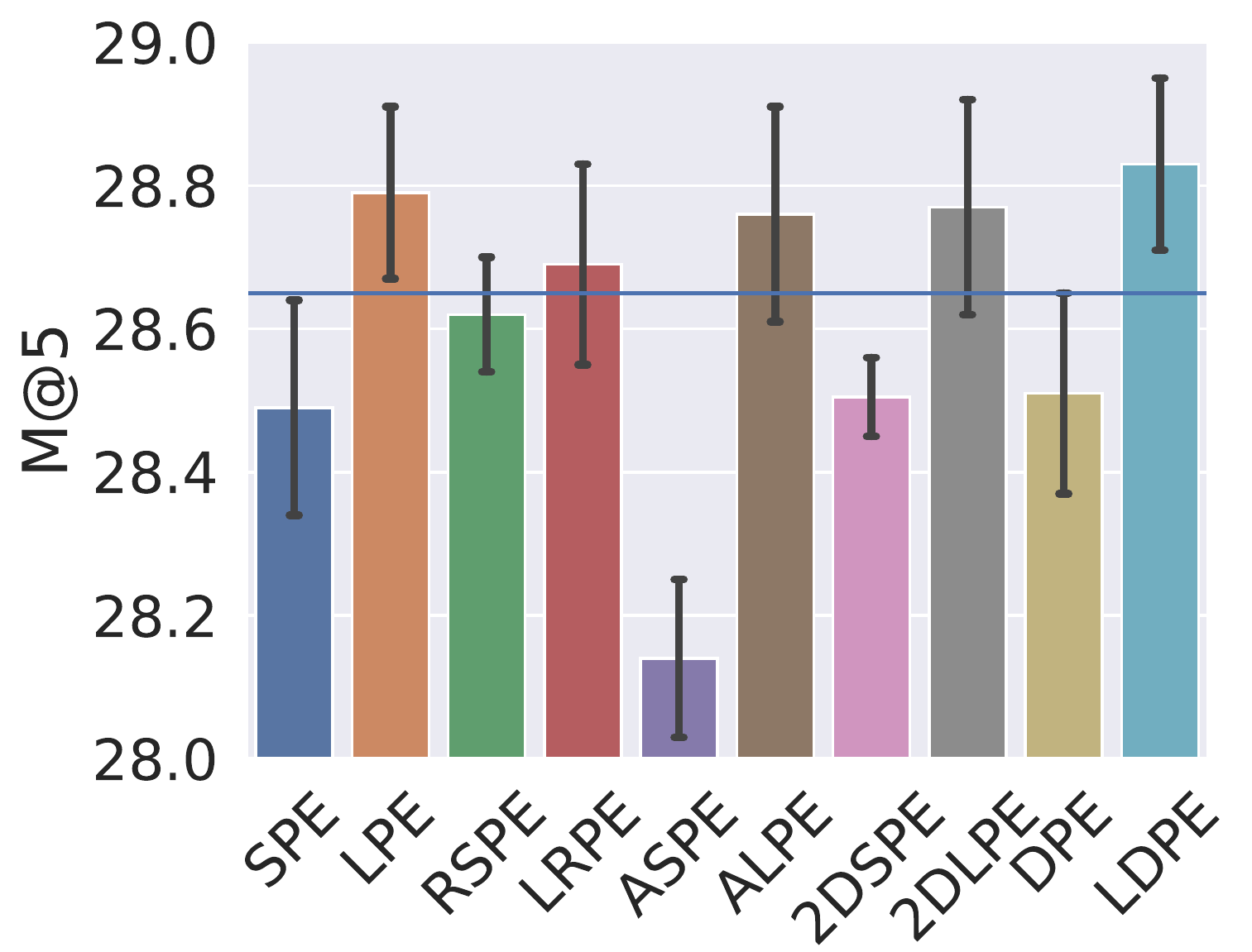}
    }
    \subfigure{
    \includegraphics[width=0.23\linewidth]{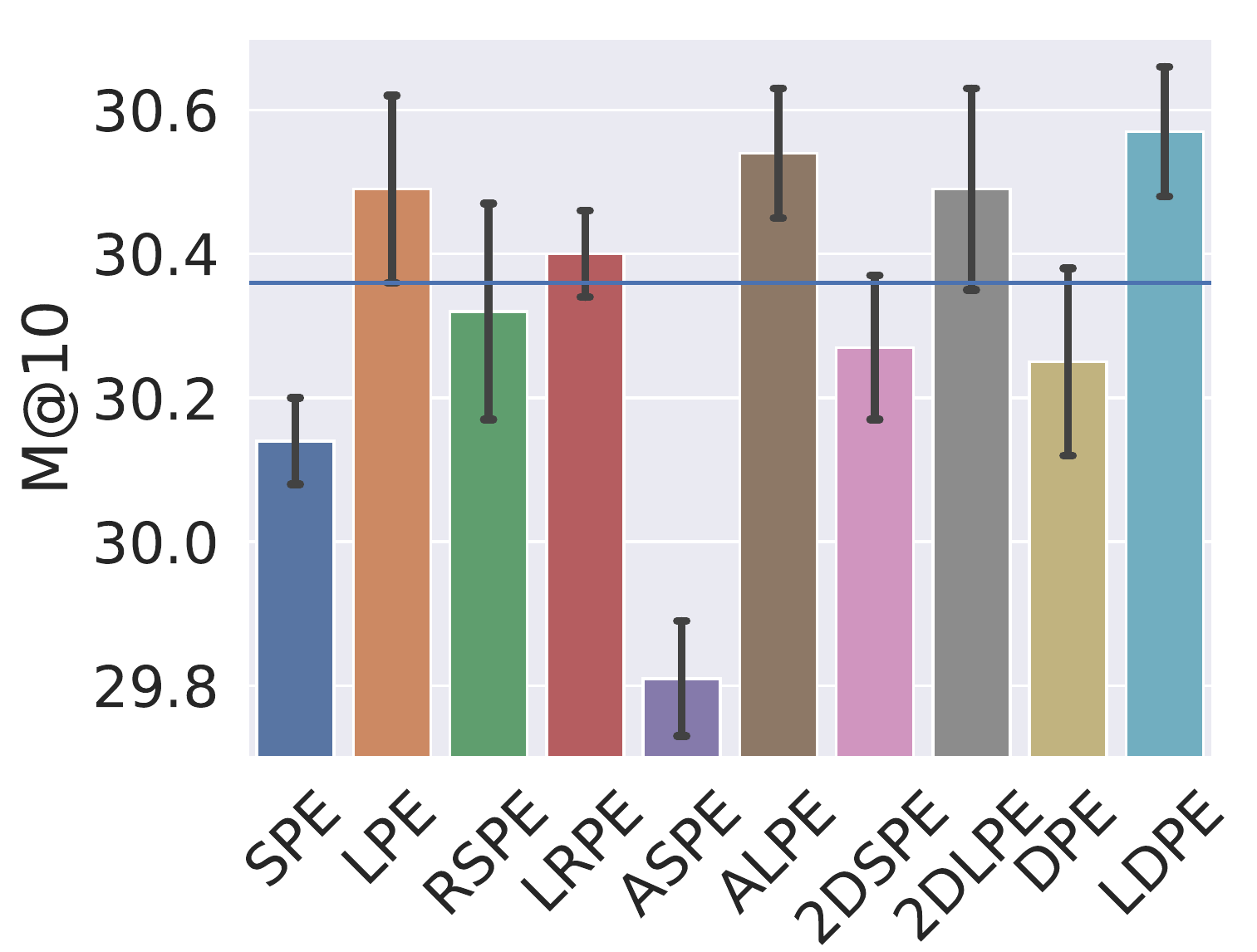}
    }
    \caption{Performance of different positional encoding schemes on \textit{Yoo. 1/64}.}
\label{fig:pe_y}
\end{figure}

\begin{figure}[t]
    \centering
    \subfigure{
    \includegraphics[width=0.23\linewidth]{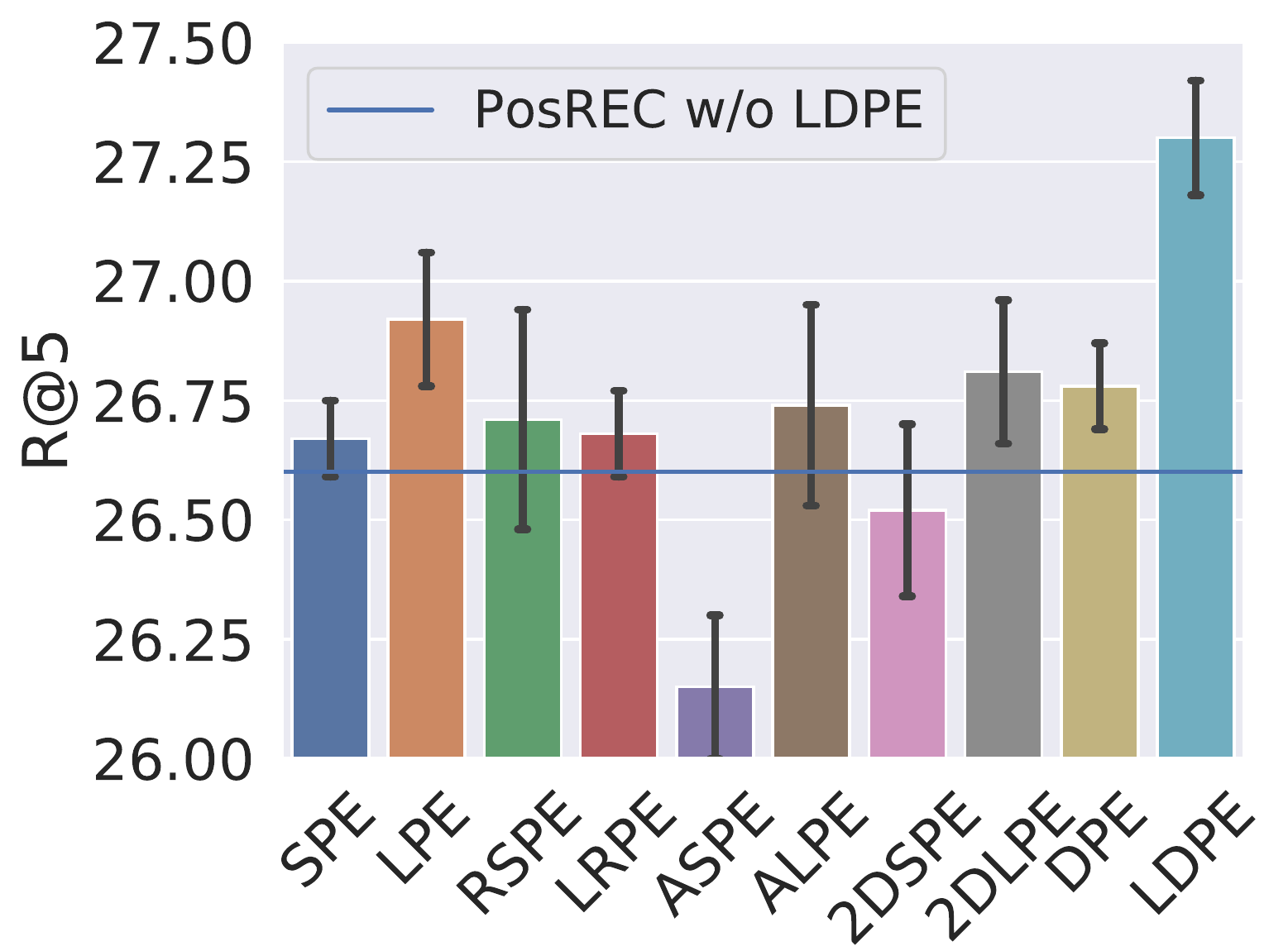}
    }
    \subfigure{
    \includegraphics[width=0.23\linewidth]{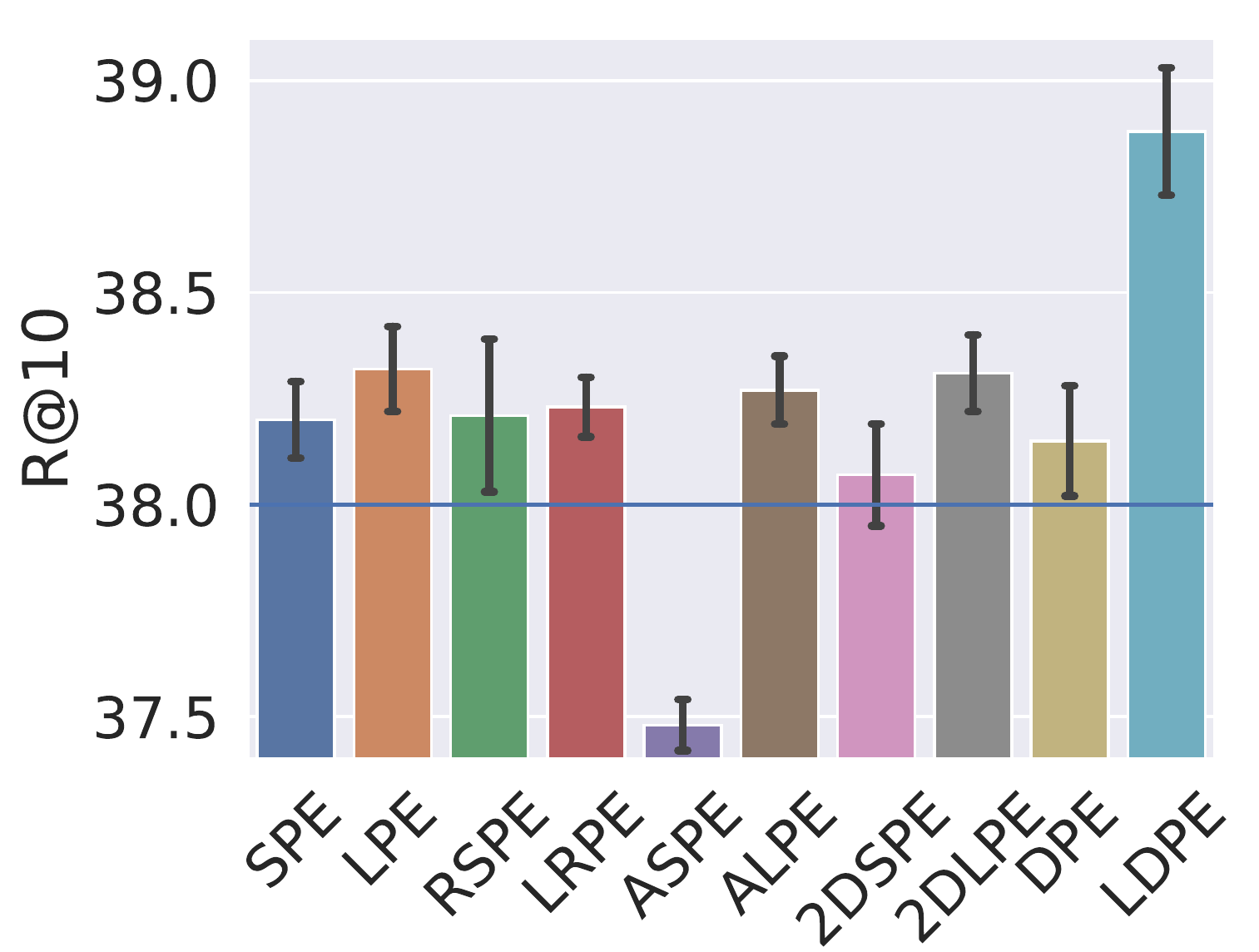}
    }
    \subfigure{
    \includegraphics[width=0.23\linewidth]{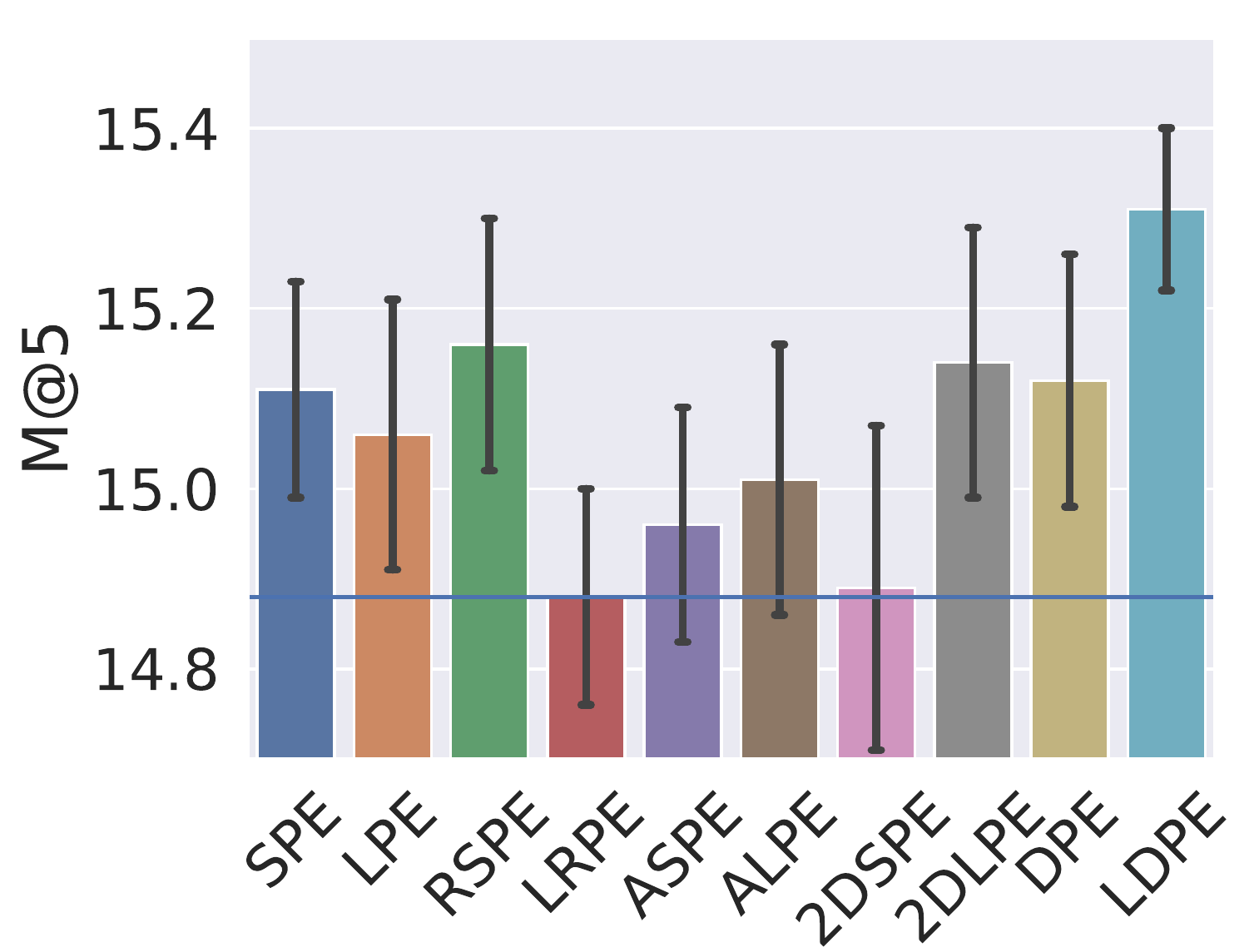}
    }
    \subfigure{
    \includegraphics[width=0.23\linewidth]{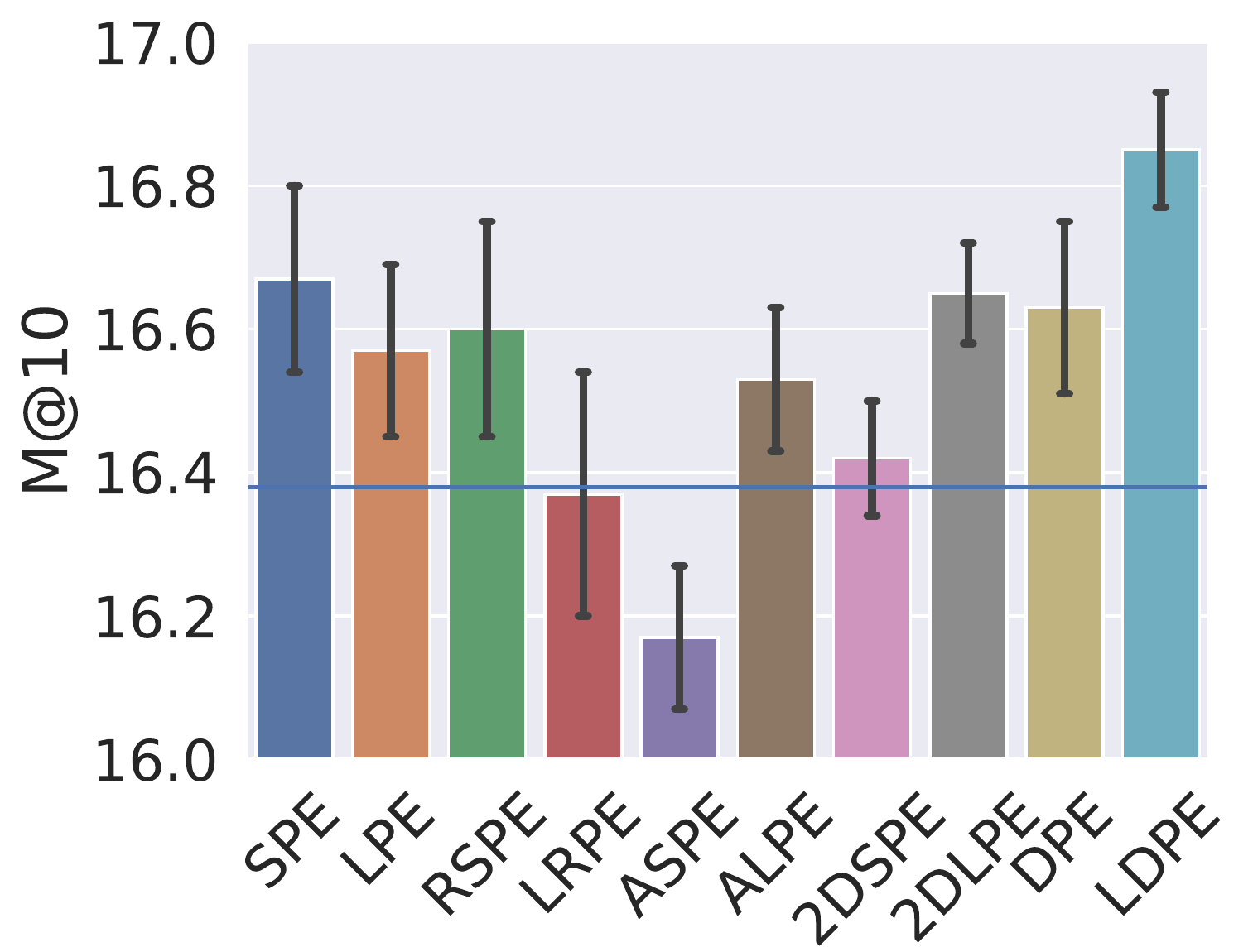}
    }
    \caption{Performance of different positional encoding schemes on \textit{Diginetica}.}
\label{fig:pe_d}
\end{figure}

\subsection{Different Positional Encoding Schemes}
\label{sec:diff-pos}
To evaluate the effect of (L)DPE, we substitute (L)DPE with the following encoding schemes in the bidirectional Transformer module in PosRec: SPE (sinusoidal positional encoding), LPE (learned positional encoding), RSPE (relative sinusoidal positional encoding), LRPE (learned relative positional encoding), ASPE (additional sinusoidal positional encoding), ALPE (additional learned positional encoding), 2DSPE (2D sinusoidal positional encoding) and 2DLPE (2D learned positional encoding), where S indicates a sinusoidal scheme and L indicates a learned scheme. The result is presented in Fig.~\ref{fig:pe_y} and~\ref{fig:pe_d}. To further verify the ubiquitous efficacy of LDPE, we integrate LDPE into the attention-based models, STAMP and SASRec. The result is shown in Table~\ref{tab:ldpe-att}.

Among Fig.~\ref{fig:pe_y} and~\ref{fig:pe_d}, the blue line in each chart indicates the base model that does not include any positional encoding in the PosRec model. Compared with other positional encoding schemes, LDPE can consistently improve the recommendation performance in all situations. For the learnable scheme, LPE and LRPE both achieve relatively good performance because they represent parts of the positional information. The LPE scheme is adopted by SASRec~\cite{sasrec} and BERT4Rec~\cite{bert4rec} for the sequential recommendation as well. There is a small gap for LPE to outperform LRPE. It could be because LPE provides a further \textit{forward-aware} information while the readout function has already included the last item as the \textit{backward-aware} information. ALPE and 2DLPE are not theoretically considered as suitable for SBRS. But they still have a comparable result because their entries of the PE are unique so that the model can still learn the pattern of the encoding but in a harder way. As for the parameter-free scheme, LPE and RSPE achieve comparable results with DPE because they provide reasonable positional information to the model. ASPE has the worst result because the \textit{forward-aware} and \textit{backward-aware} information is entangled in the representation. In contrast, 2DSPE has a clearer connection between entries than ASPE.

\begin{table}[t]
    \centering
    \begin{tabular}{c|cccc|cccc}
         \toprule
         \multirow{2}*{Method}&\multicolumn{4}{c}{\textit{Yoo.\ 1/64}}&\multicolumn{4}{c}{\textit{Diginetica}}\\
    \cline{2-9}
    &R@5&R@10&M@5&M@10&R@5&R@10&M@5&M@10\\
         \midrule
         STAMP&46.42&58.67&28.05&29.66&25.85&37.46&14.42&15.96\\
         STAMP+LDPE&46.78&58.82&28.31&29.94&26.23&37.81&14.79&16.34\\
         \midrule
         SASRec&46.65&58.98&28.13&29.87&25.87&37.53&14.37&16.12\\
         SASRec+LDPE&46.89&59.17&28.37&30.11&26.26&37.90&14.71&16.53\\
         \bottomrule
    \end{tabular}
    \caption{Performance of LDPE on attention-based models.}
    \label{tab:ldpe-att}
\end{table}

In Table~\ref{tab:ldpe-att}, it is shown that LDPE can consistently improve the performance of attention-based models. STAMP does not originally use any positional encoding. While SASRec has already utilized an LPE in its basic method. For both \textit{Yoochoose} and \textit{Diginetica} dataset, LDPE can improve the performance of LPE by a large margin with the same number of parameters, which verifies that LDPE can be integrated into different models.

\begin{figure}[t]
    \centering
    \subfigure{
    \includegraphics[width=0.23\linewidth]{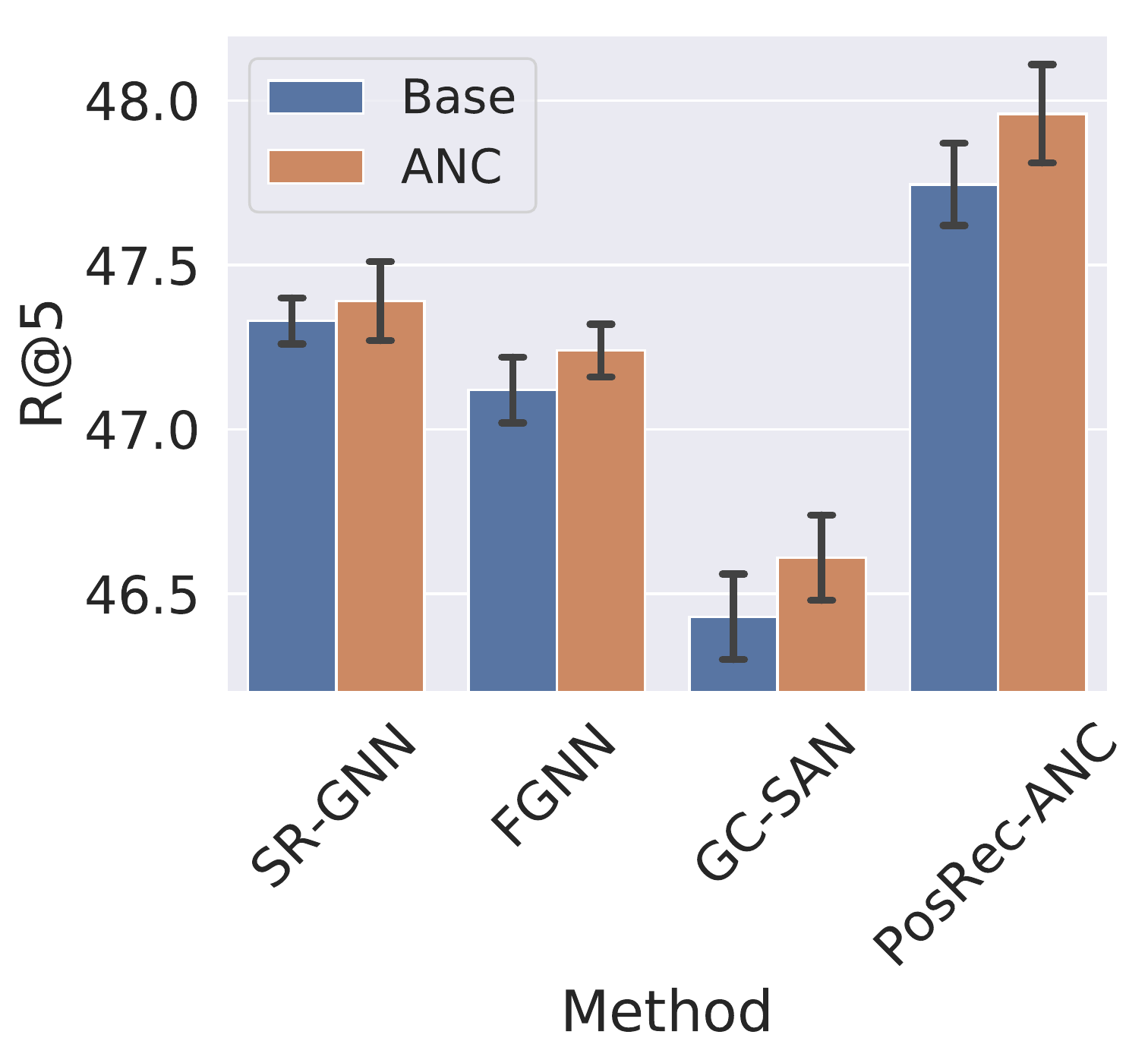}
    }
    \subfigure{
    \includegraphics[width=0.23\linewidth]{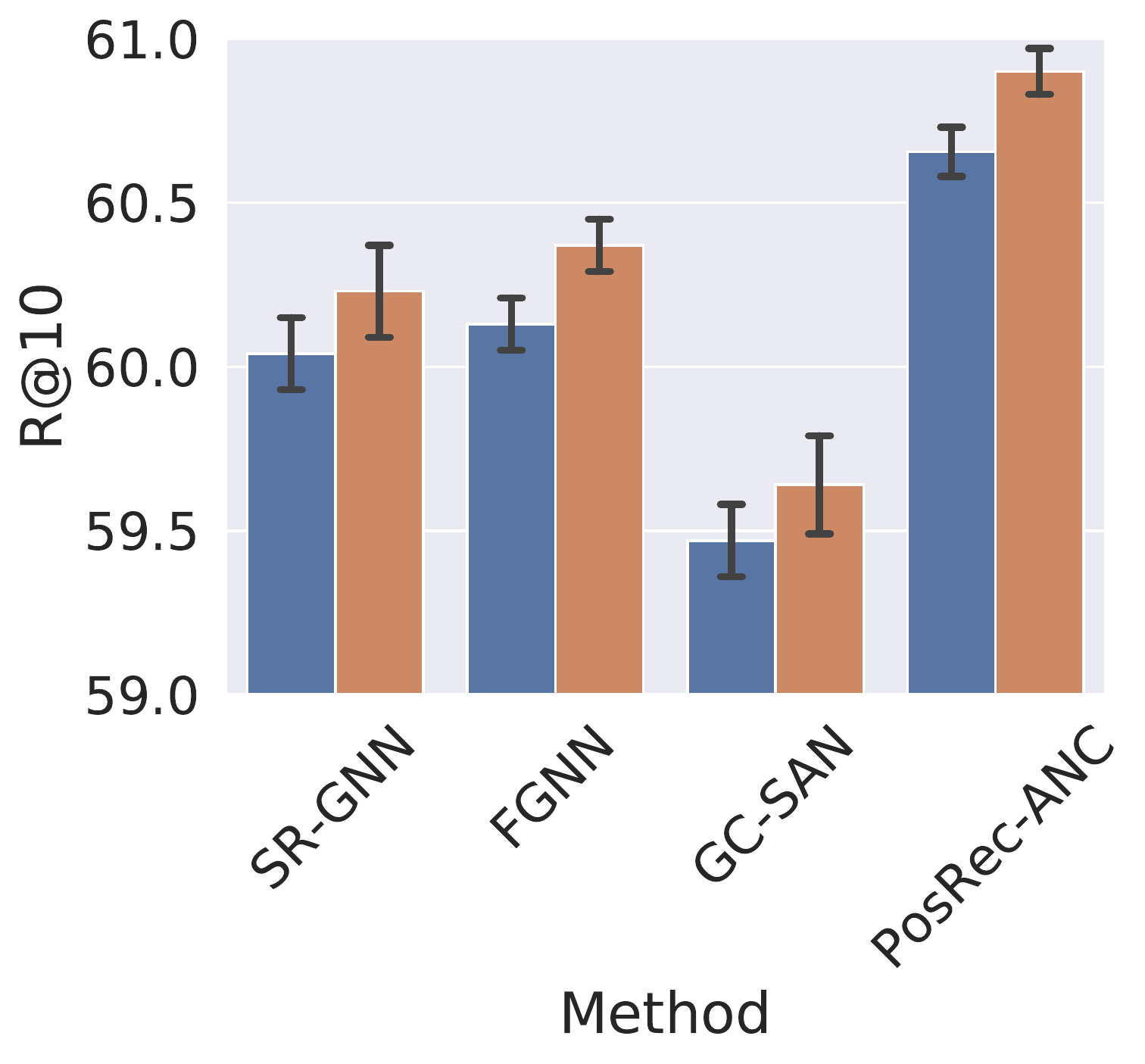}
    }
    \subfigure{
    \includegraphics[width=0.23\linewidth]{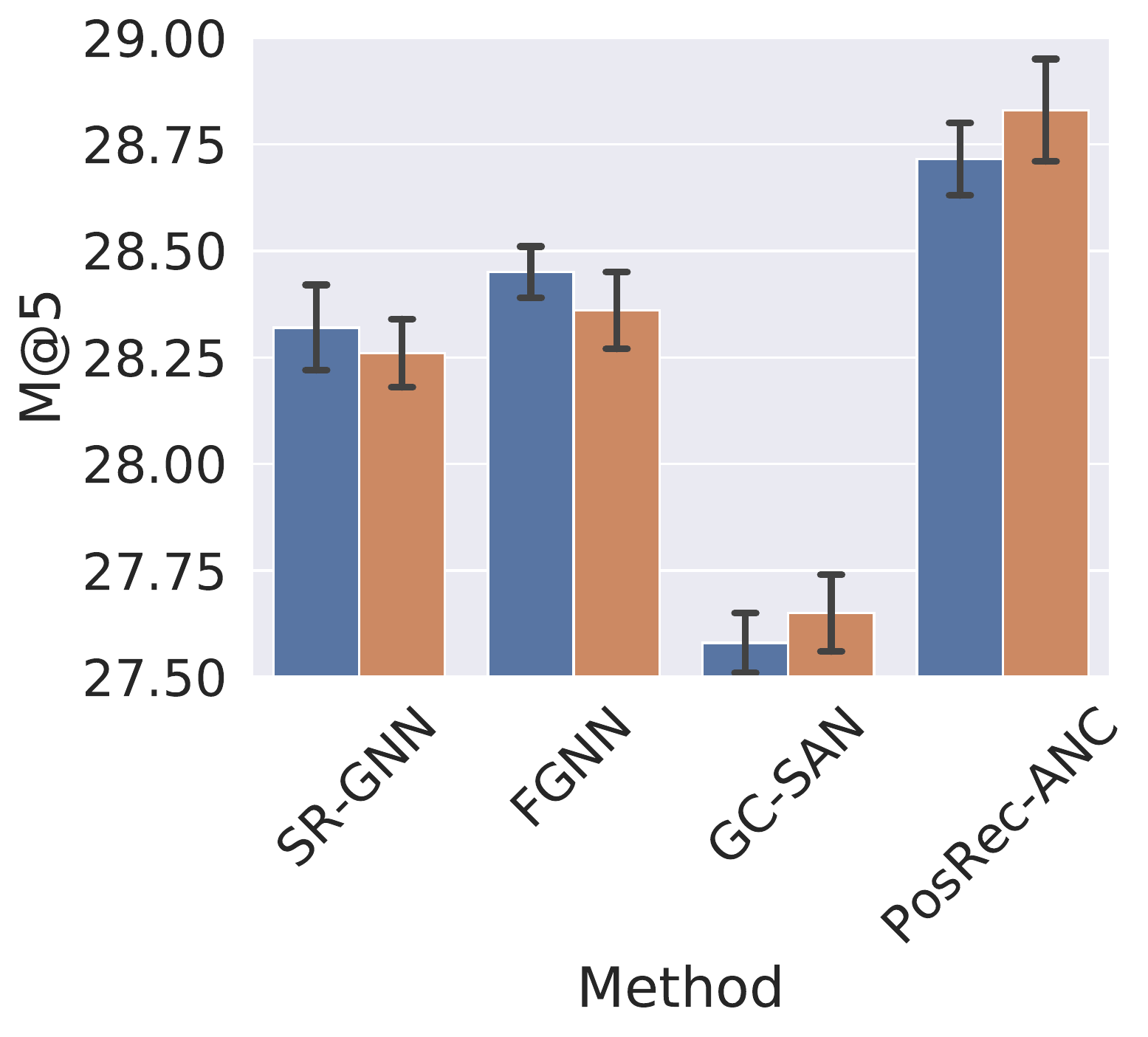}
    }
    \subfigure{
    \includegraphics[width=0.23\linewidth]{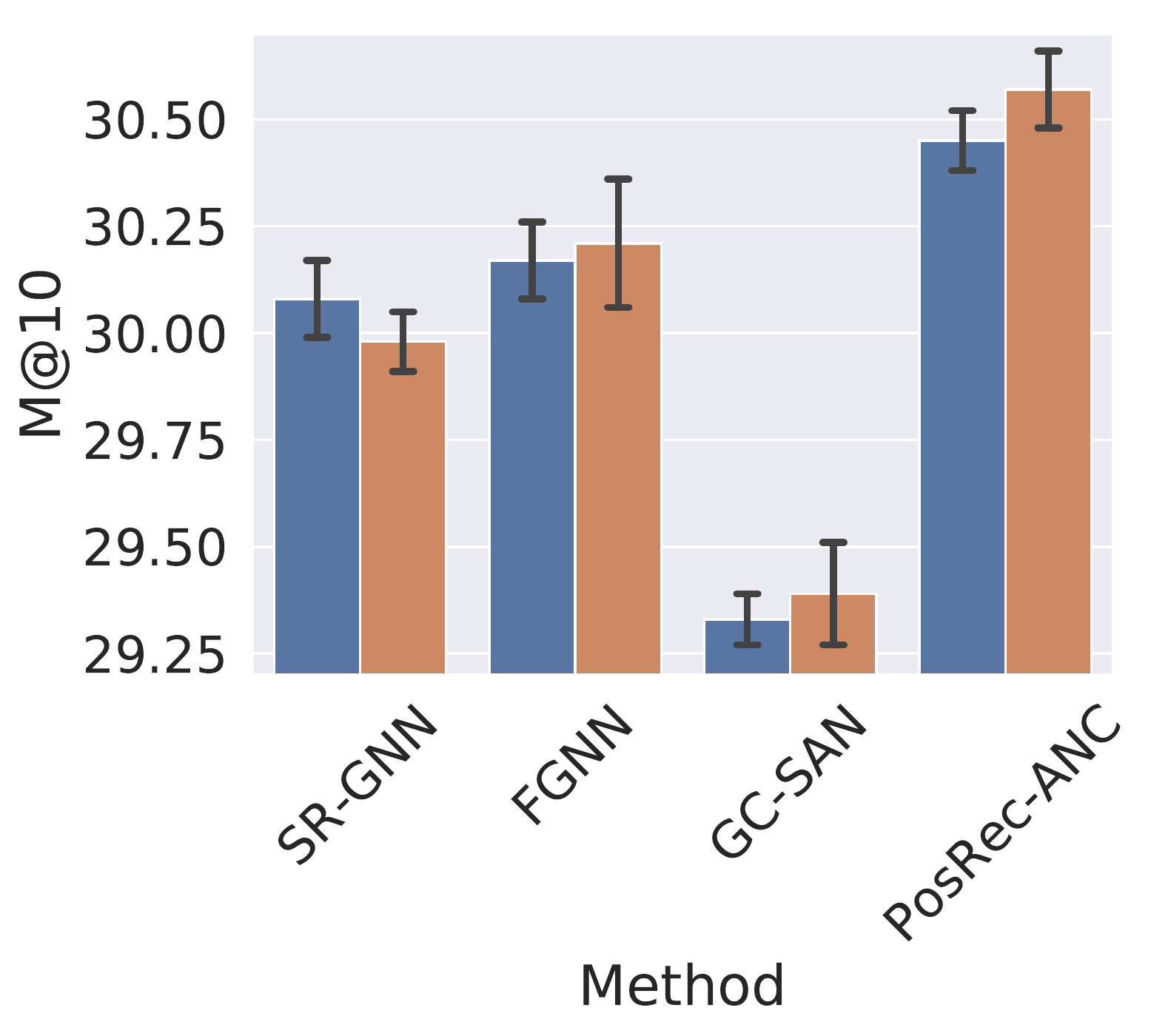}
    }
    \caption{Performance of anchor node aggregation on \textit{Yoo. 1/64}.}
\label{fig:anc_y}
\end{figure}

\begin{figure}[t]
    \centering
    \subfigure{
    \includegraphics[width=0.23\linewidth]{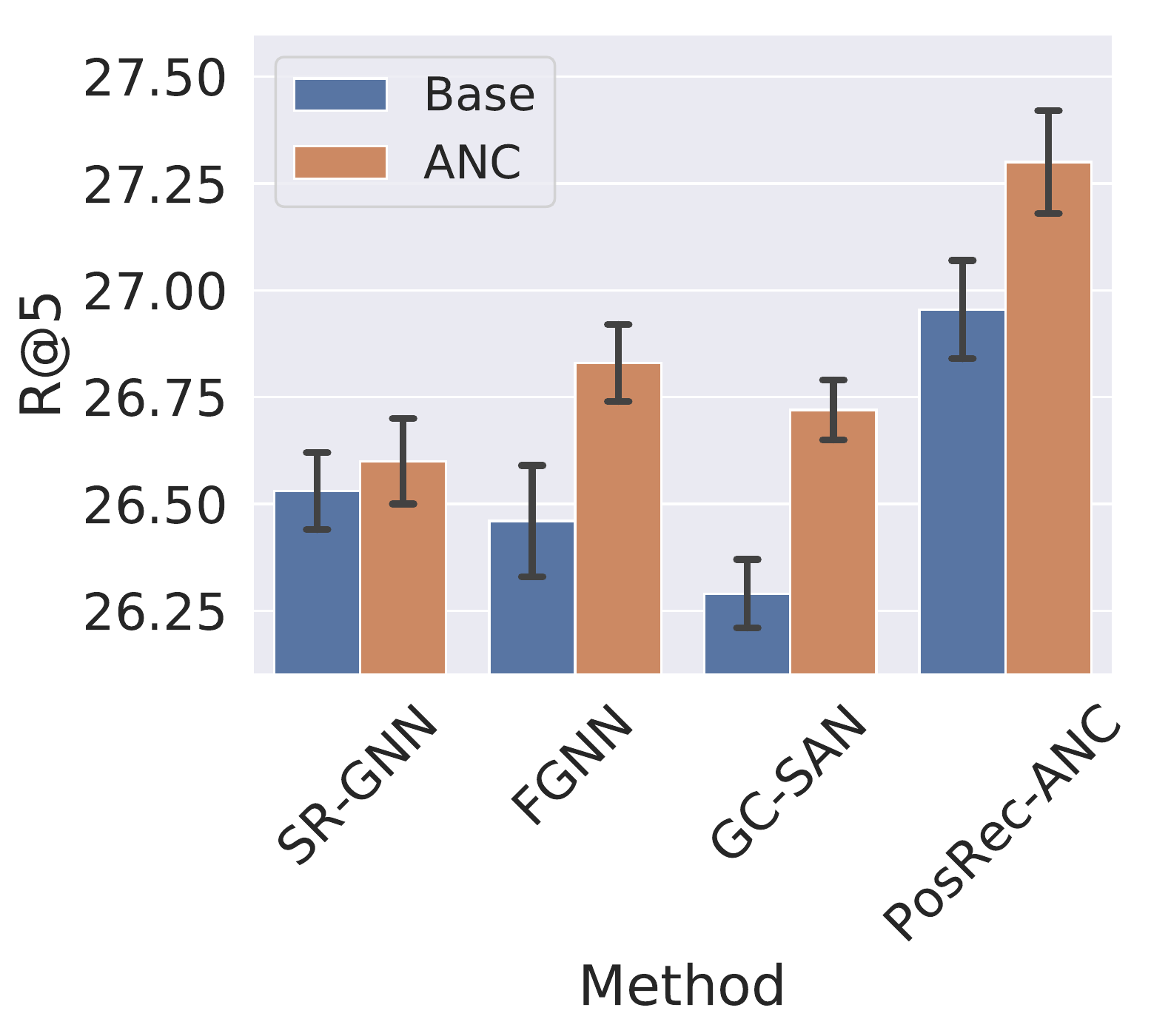}
    }
    \subfigure{
    \includegraphics[width=0.23\linewidth]{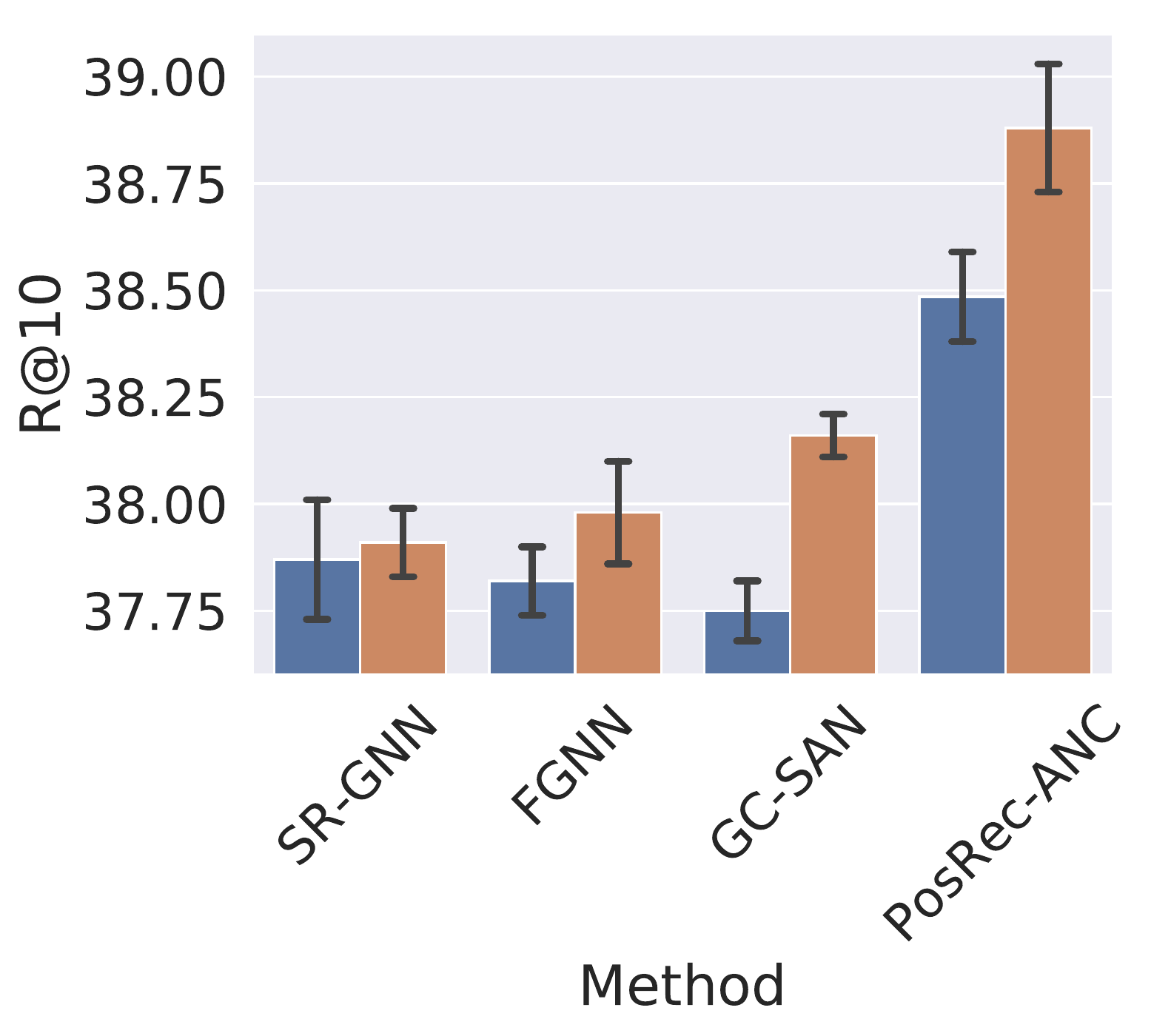}
    }
    \subfigure{
    \includegraphics[width=0.23\linewidth]{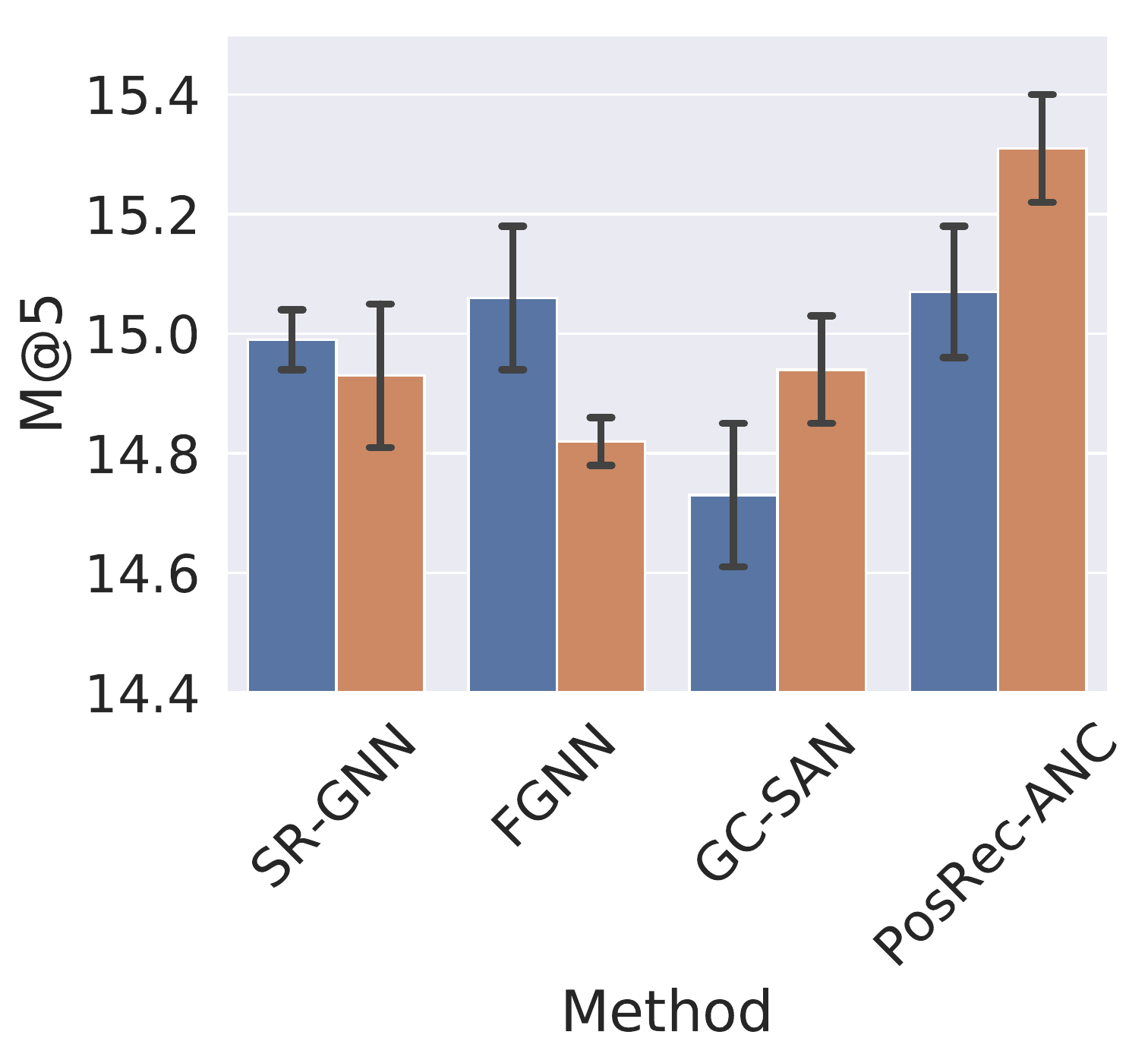}
    }
    \subfigure{
    \includegraphics[width=0.23\linewidth]{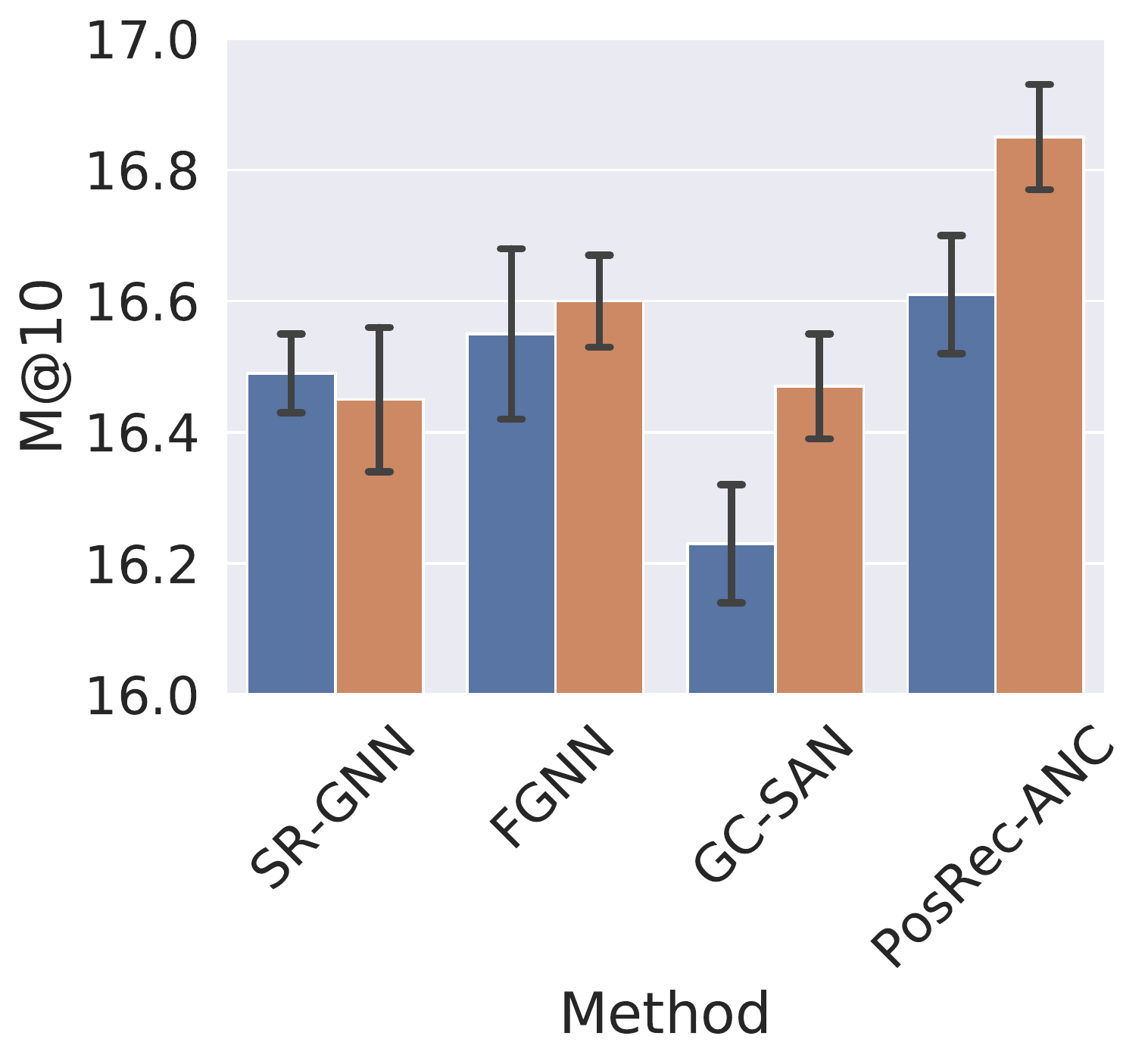}
    }
    \caption{Performance of anchor node aggregation on \textit{Diginetica}.}
\label{fig:anc_d}
\end{figure}

\subsection{Effect of Anchor Node Aggregation}
\label{sec:exp-anc}
To evaluate the effect of the anchor node aggregation in PGGNN, we add this module to GNN-based methods: SR-GNN, FGNN, GC-SAN and PosRec-ANC (indicating the PosRec model removing the anchor node aggregation). The result is presented in Fig.~\ref{fig:anc_y} an~\ref{fig:anc_d}.

It is shown that with the anchor node aggregation, the performance of all GNN-based methods has an increase for Recall. While for the MRR metric, only GC-SAN and PosRec gain a steady improvement. Anchor nodes improve the model performance: (1) injecting the inductive bias of the distance between normal items and important items to the model; (2) increasing the connectivity of the session graph. On one hand, the \textit{position-aware} node feature learning is proved by~\cite{pgnn} that it can provide the positional information in addition to the structure information. On the other hand, session data is originally sparse in the aspect of connectivity. A large portion of session data is even a simple sequence without any repetitive items. After including the aggregation of anchor nodes, the connectivity is increased because there are additional edges between normal nodes and anchor nodes. GNN layers are more suitable to process such data.

\subsection{Visualization}
\label{sec:vis}
In this experiment, we visualize how SPE, DPE, LPE, and LDPE look like and their characteristics. We firstly show the SPE and DPE themselves in Fig.~\ref{fig:pe10}. We examine the nearest neighbor relationship between encoding of different positions in Fig.~\ref{fig:dot}. Example session lengths are set to 10 and 20 and the embedding size is set to 100. Fig.~\ref{fig:spe10} and~\ref{fig:dpe10} show the SPE and DPE for length 10 respectively. We further present the heatmap of LPE and LDPE of length 10 and 20 in Fig.~\ref{fig:lpe_dot_two} and Fig.~\ref{fig:ldpe_dot_two}, which are learned based on the experiment on \textit{Yoo.\ 1/64}.

\begin{figure}[t]
    \centering
    \subfigure[SPE of len.\ 10.]{
    \label{fig:spe10}
    \includegraphics[width=0.475\linewidth]{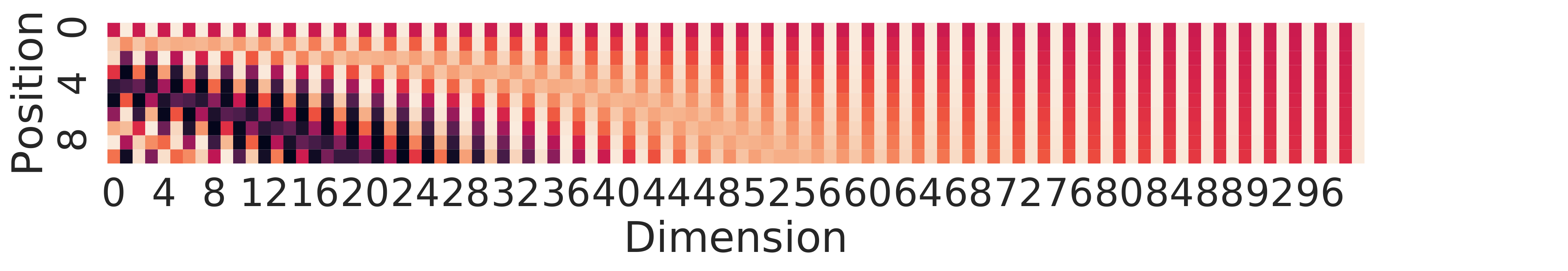}
    }
    \subfigure[DPE of len.\ 10.]{
    \label{fig:dpe10}
    \includegraphics[width=0.475\linewidth]{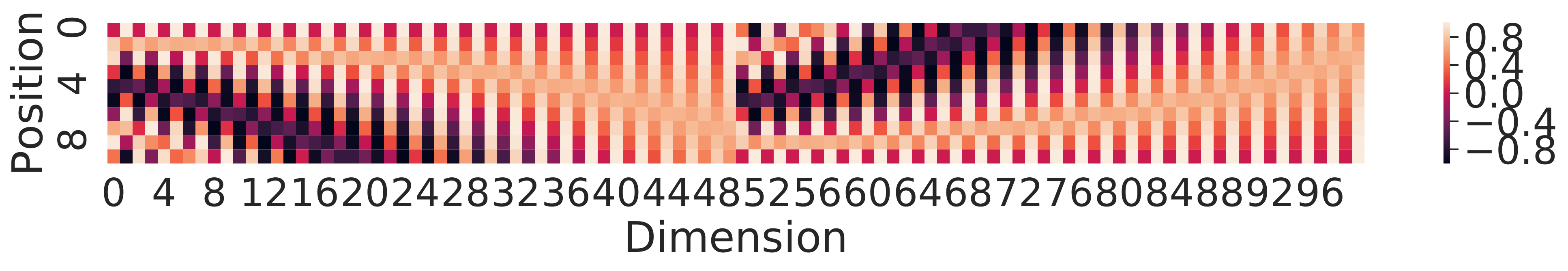}
    }
    \caption{Positional encoding visualization for the session length of 10.}
\label{fig:pe10}
\end{figure}

\begin{figure}[t]
    \centering
    \subfigure[SPE of len.\ 20.]{
    \label{fig:spe20}
    \includegraphics[width=0.475\linewidth]{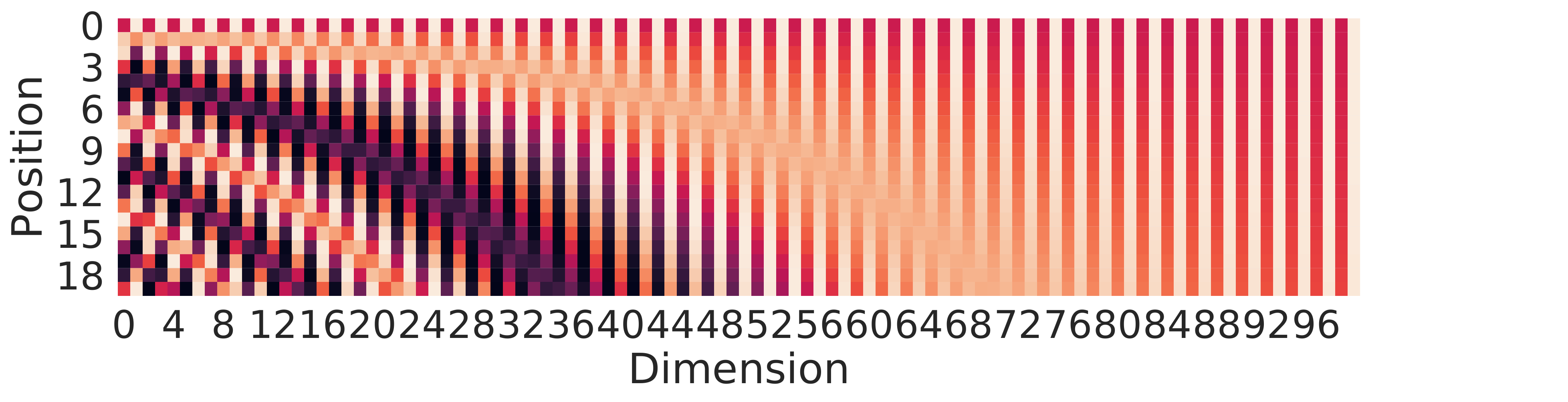}
    }
    \subfigure[DPE of len.\ 20.]{
    \label{fig:dpe20}
    \includegraphics[width=0.475\linewidth]{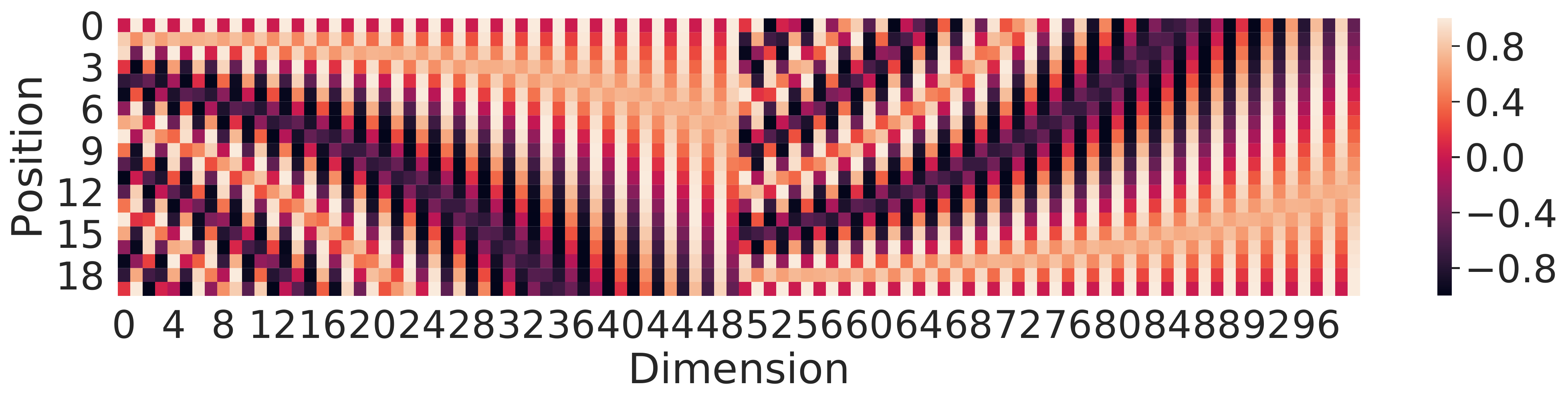}
    }
    \caption{Positional encoding visualization for the session length of 20.}
\label{fig:pe20}
\end{figure}

\begin{figure}[t]
    \centering
    \subfigure[SPE of len.\ 10.]{
    \label{fig:spe_dot10}
    \includegraphics[width=0.2\linewidth]{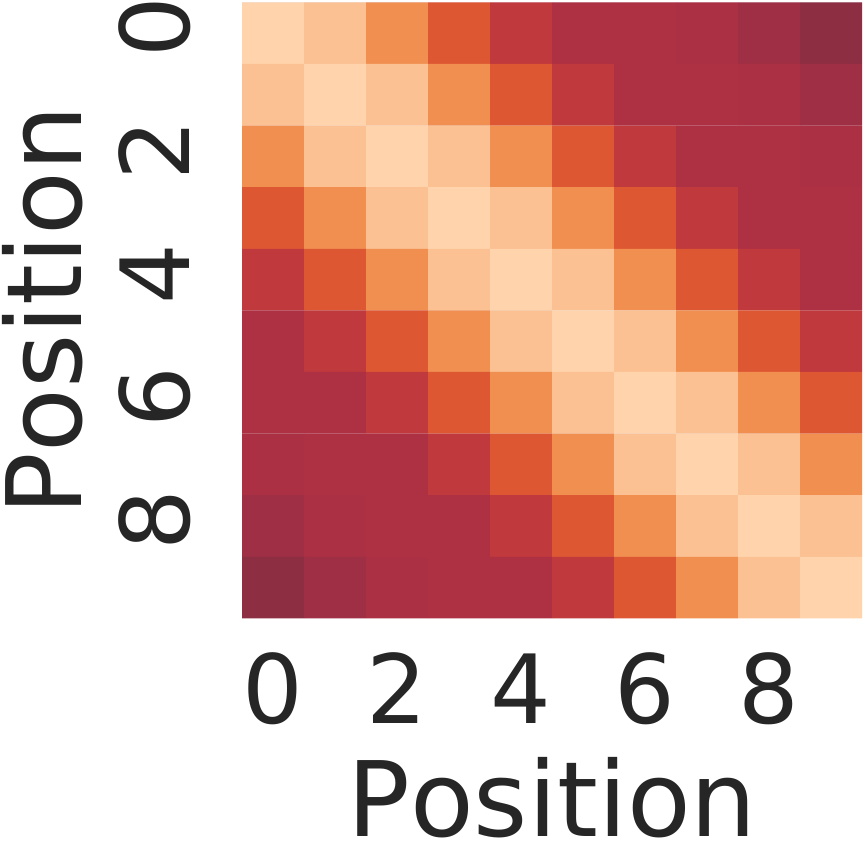}
    }
    \subfigure[SPE of len.\ 20.]{
    \label{fig:spe_dot20}
    \includegraphics[width=0.2\linewidth]{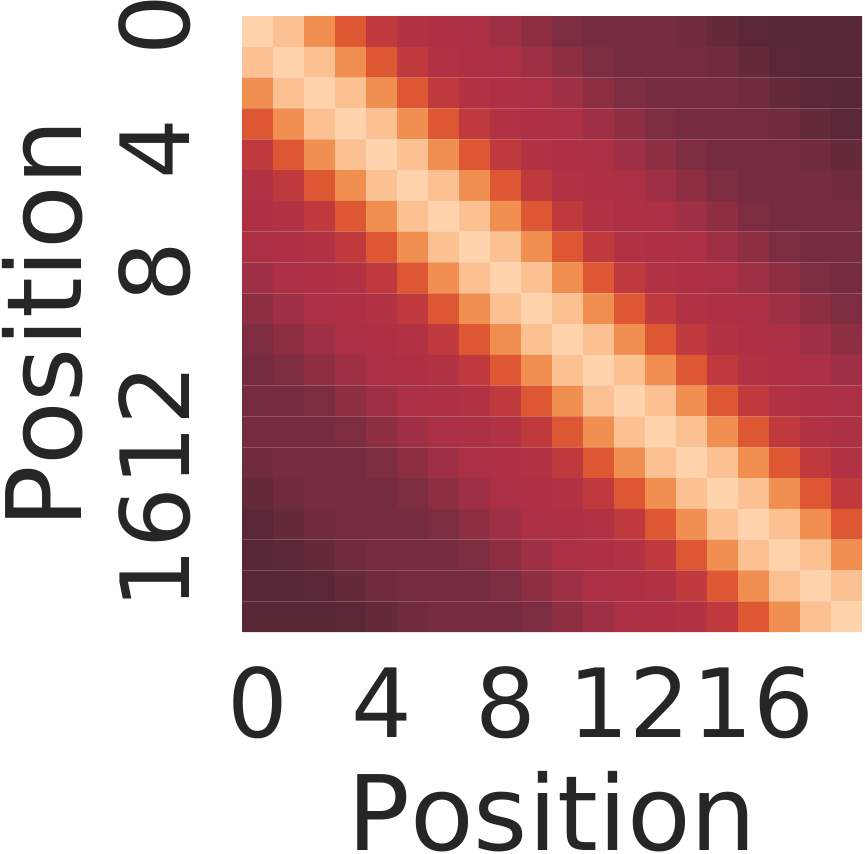}
    }
    \subfigure[DPE of len.\ 10.]{
    \label{fig:dpe_dot10}
    \includegraphics[width=0.2\linewidth]{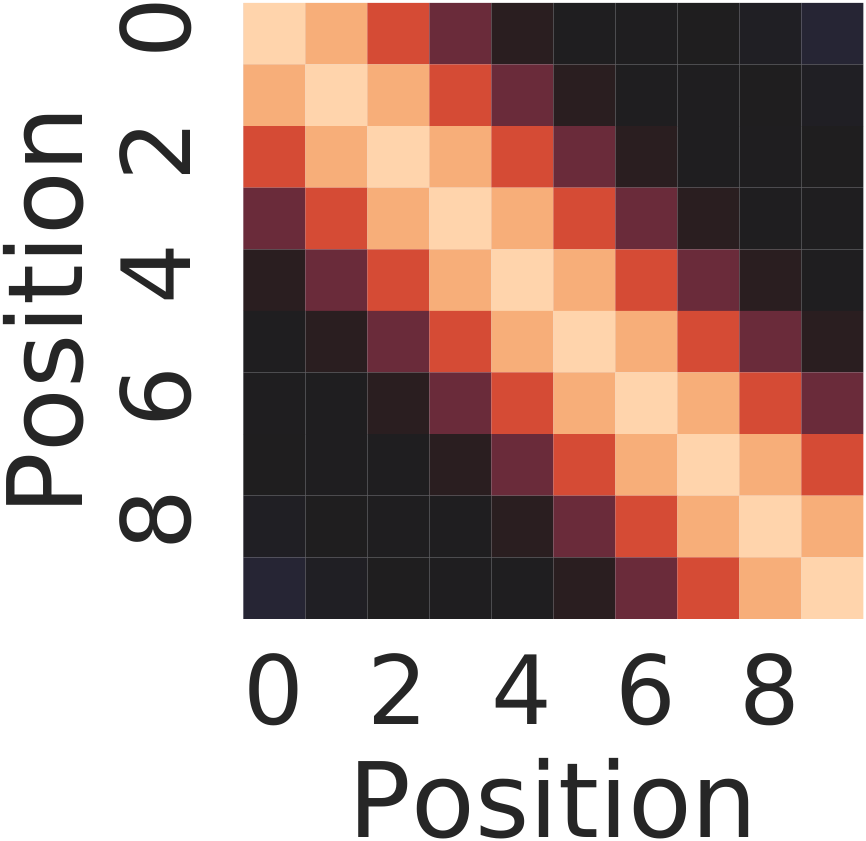}
    }
    \subfigure[DPE of len.\ 20.]{
    \label{fig:dpe_dot20}
    \includegraphics[width=0.251\linewidth]{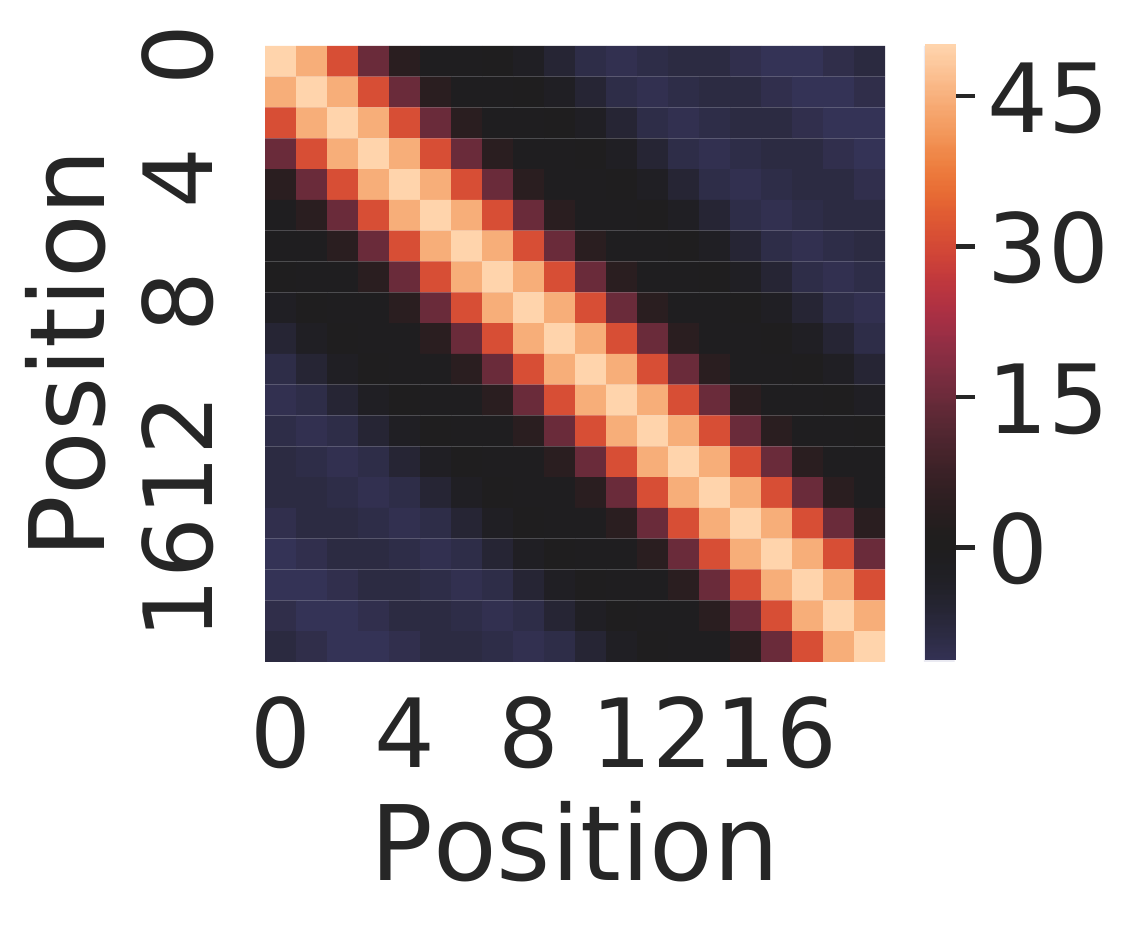}
    }
    \caption{Nearest neighbor heatmap of the same length.}
\label{fig:dot}
\end{figure}

\begin{figure}[t]
    \centering
    \subfigure[SPE of len.\ 10 and 20.]{
    \label{fig:spe_dot_two10}
    \includegraphics[width=0.41\linewidth]{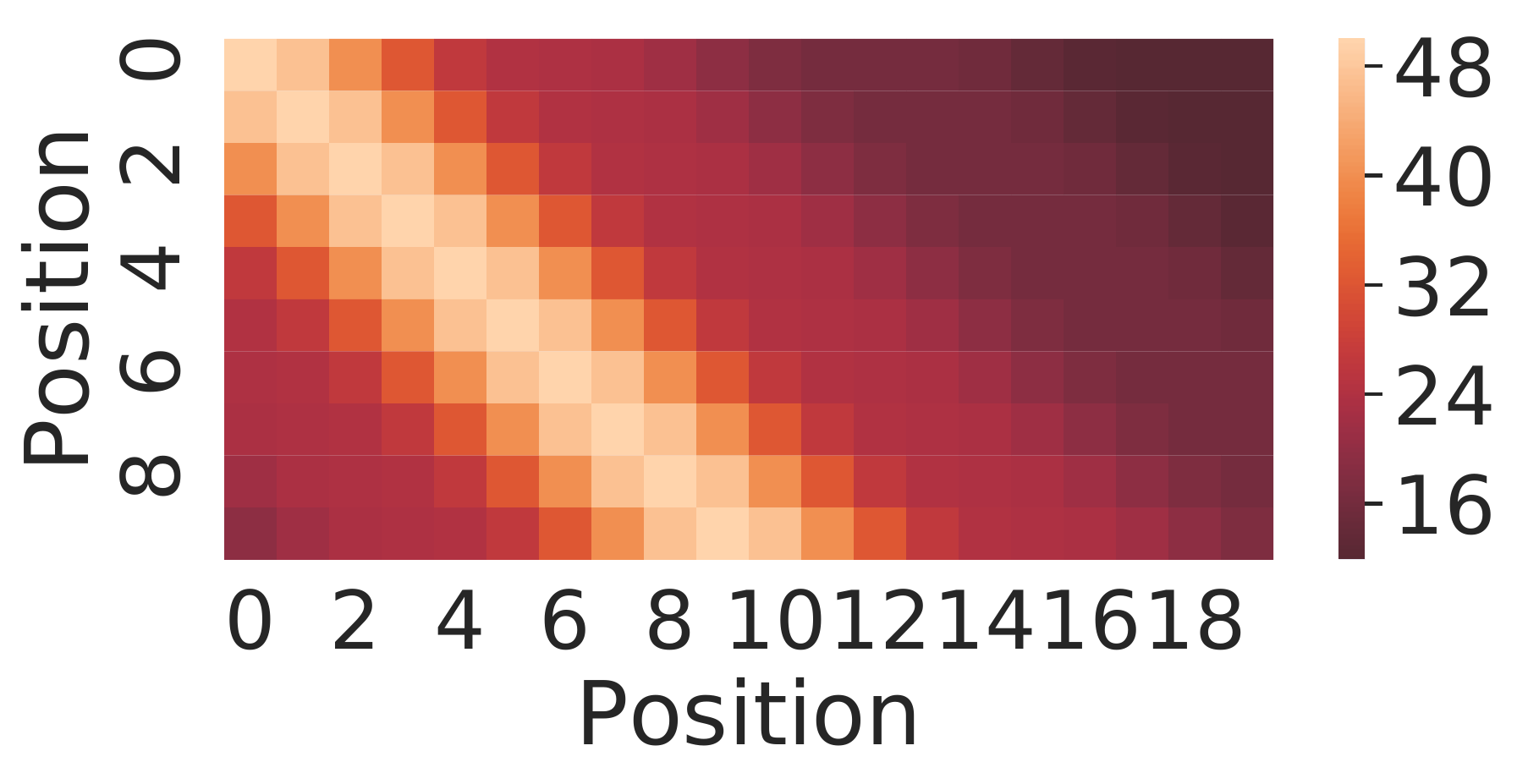}
    }
    \subfigure[DPE of len.\ 10 and 20.]{
    \label{fig:dpe_dot_two10}
    \includegraphics[width=0.41\linewidth]{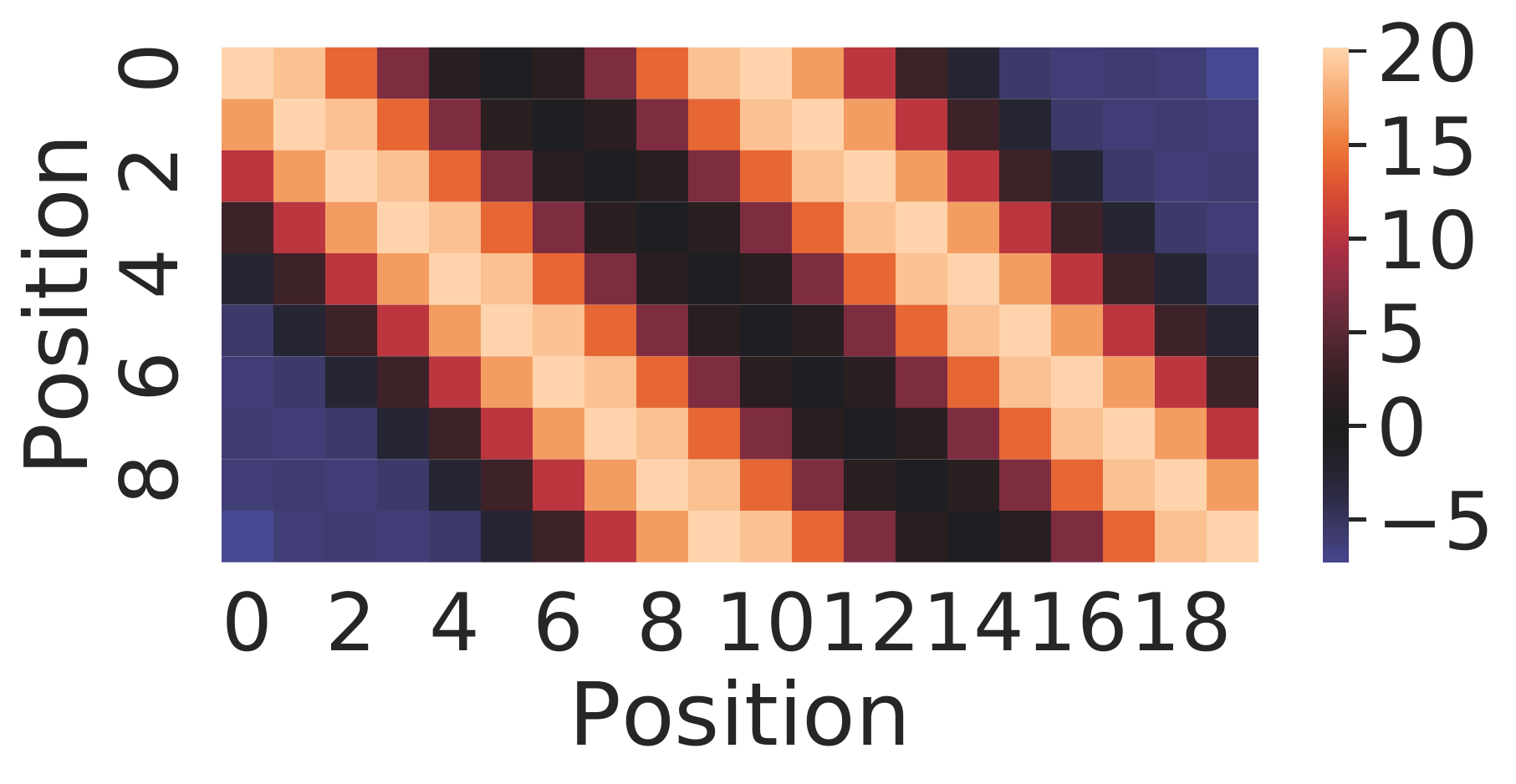}
    }
   \caption{Nearest neighbor heatmap of length 10 and 20.}
\label{fig:dot_two}
\end{figure}

\begin{figure}[t]
    \centering
    \subfigure[LPE of len.\ 10.]{
    \label{fig:lpe_dot_10}
    \includegraphics[width=0.2\linewidth]{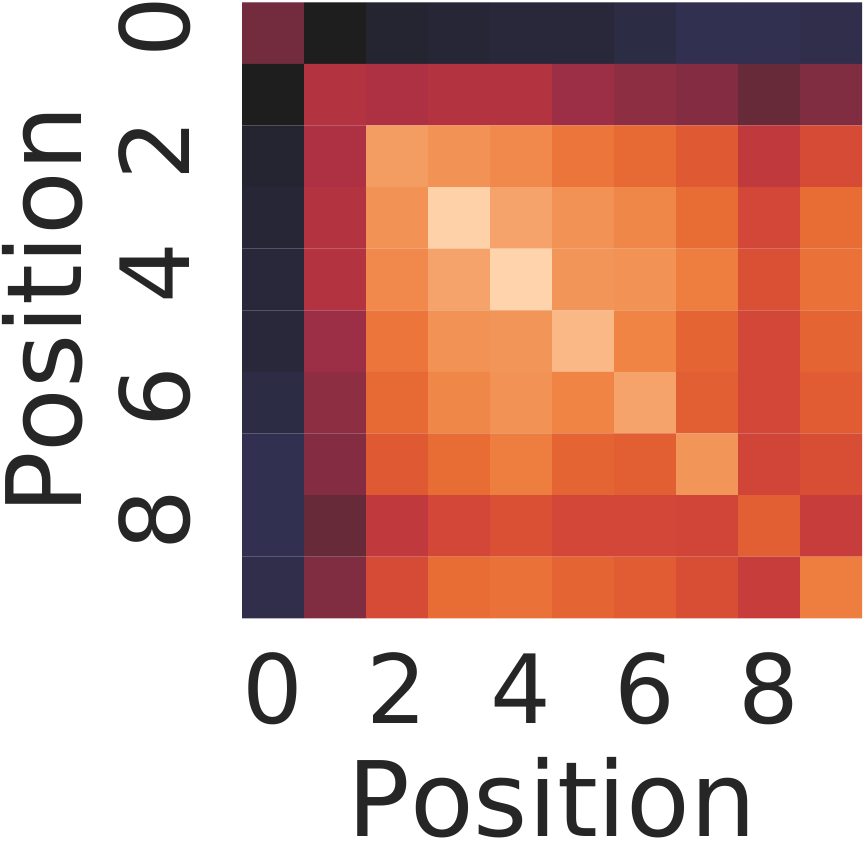}
    }
    \subfigure[LPE of len.\ 20.]{
    \label{fig:lpe_dot_20}
    \includegraphics[width=0.2\linewidth]{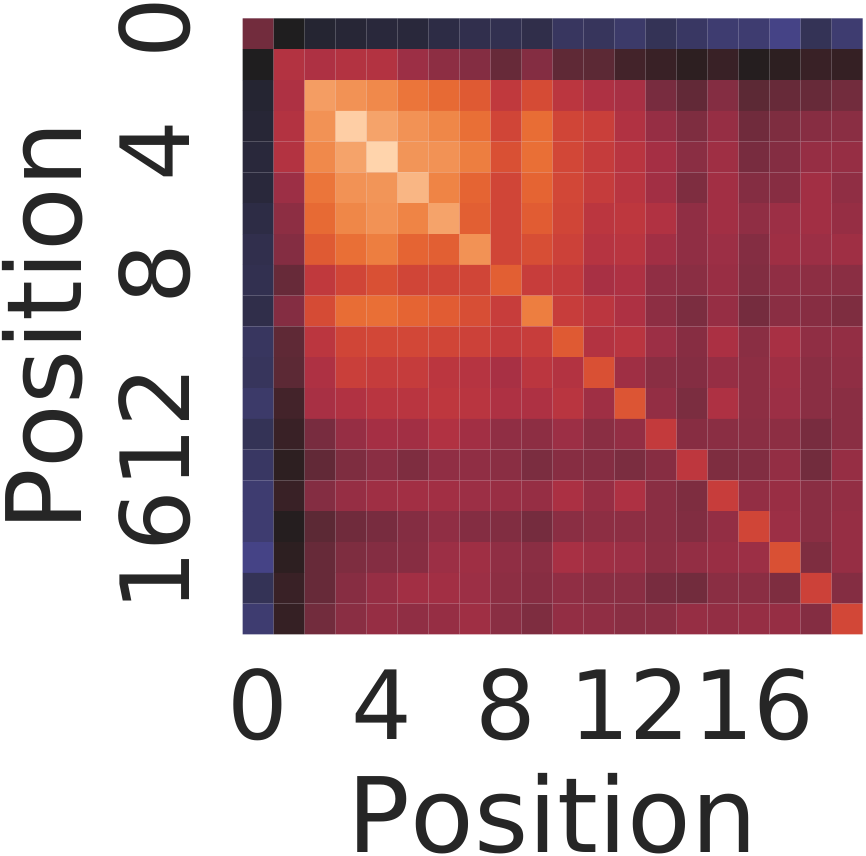}
    }
    \subfigure[LPE of len.\ 10 and 20.]{
    \label{fig:lpe_dot_two10}
    \includegraphics[width=0.43\linewidth]{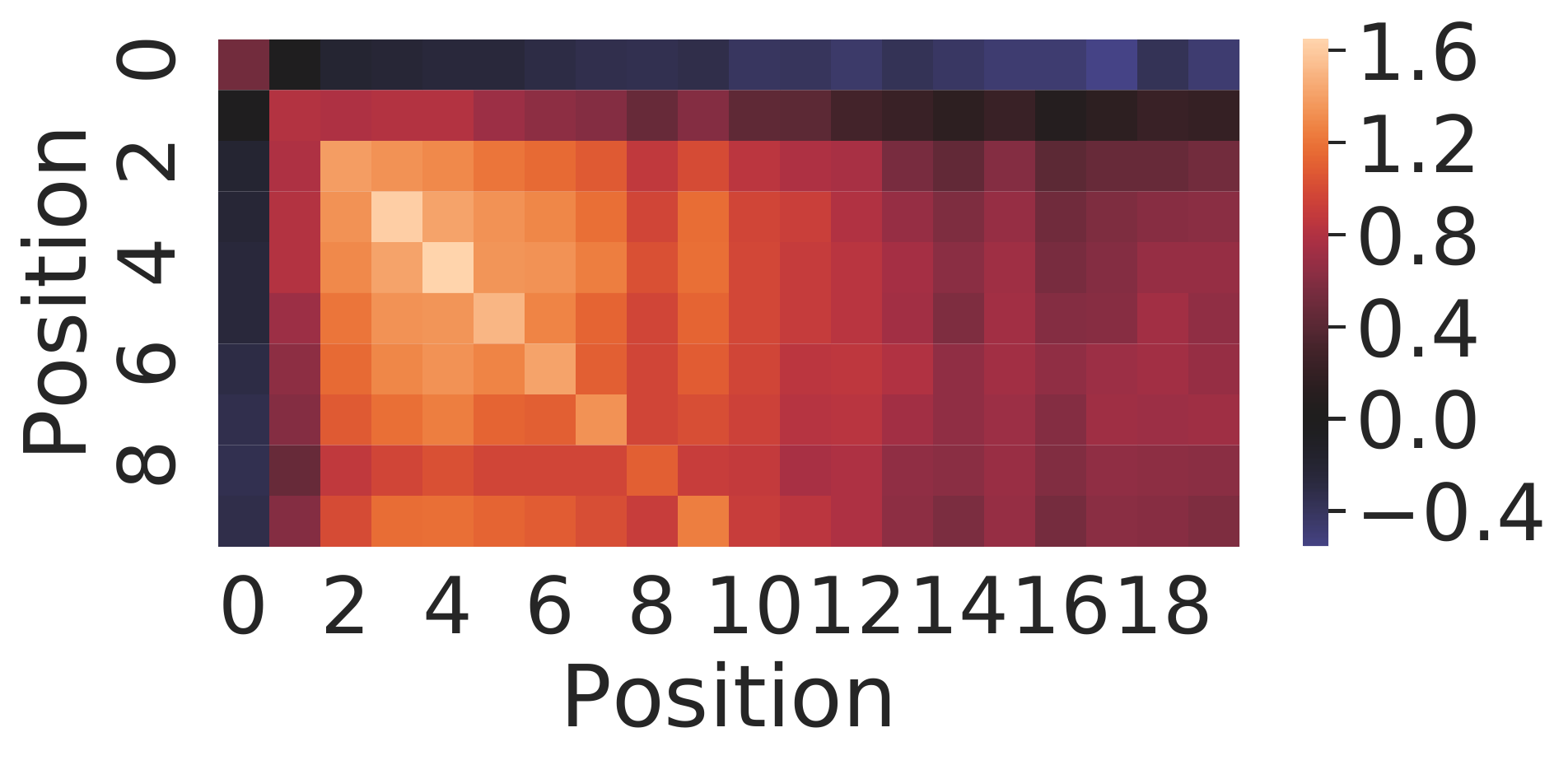}
    }
   \caption{Nearest neighbor heatmap of length 10 and 20 of LPE.}
\label{fig:lpe_dot_two}
\end{figure}

\begin{figure}[t]
    \centering
    \subfigure[LDPE of len.\ 10.]{
    \label{fig:ldpe_dot_10}
    \includegraphics[width=0.2\linewidth]{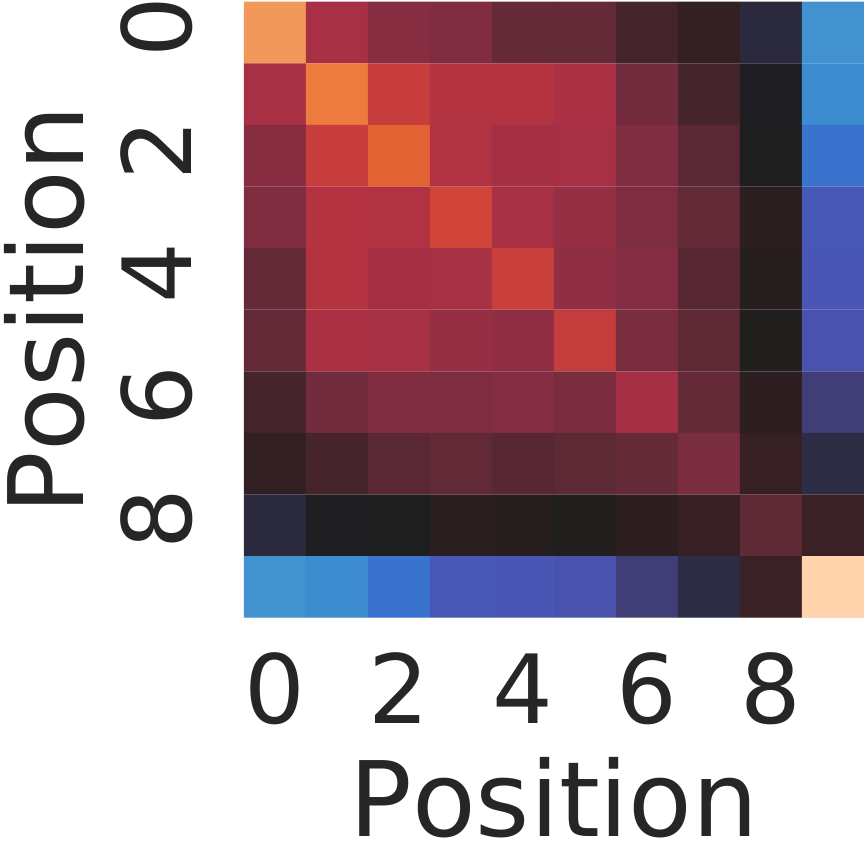}
    }
    \subfigure[LDPE of len.\ 20.]{
    \label{fig:ldpe_dot_20}
    \includegraphics[width=0.2\linewidth]{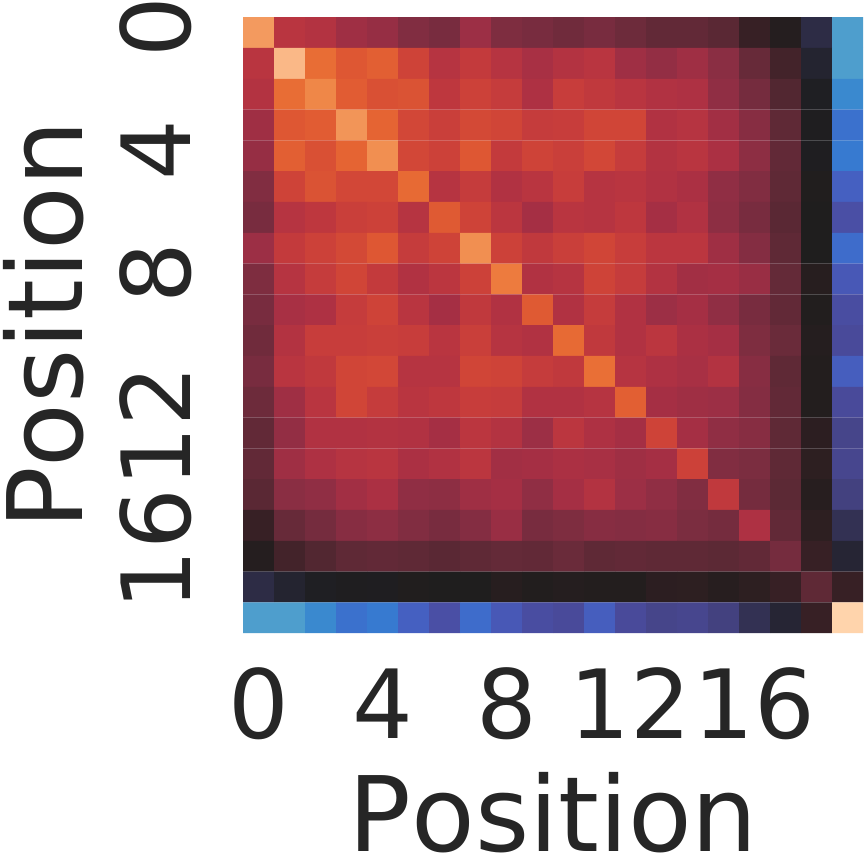}
    }
    \subfigure[LDPE of len.\ 10 and 20.]{
    \label{fig:ldpe_dot_two10}
    \includegraphics[width=0.43\linewidth]{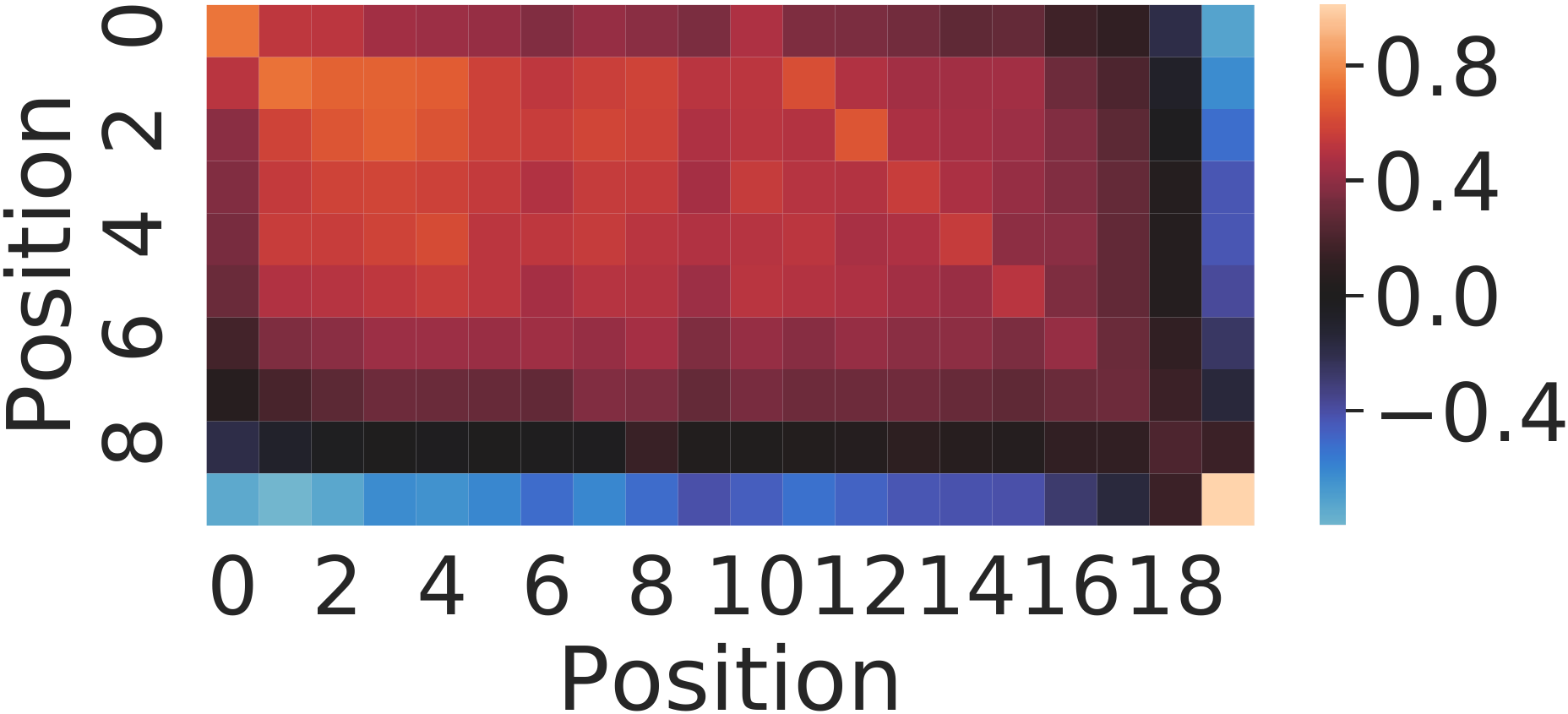}
    }
   \caption{Nearest neighbor heatmap of length 10 and 20 of LDPE.}
\label{fig:ldpe_dot_two}
\end{figure}

The nearest neighbor heatmaps shown in Fig.~\ref{fig:dot} represent the dot product of the positional encoding of the same session length. SPE and DPE of lengths 10 and 20 are demonstrated respectively. By definition, the heatmap of RSPE is the same as SPE. We can see that within the same length, these positional encoding schemes do have a strong correlation to the same position of the same length.

To examine the correlation of the positional encoding from different session lengths, we demonstrate the results from the dot product between lengths of 10 and 20 in Fig.~\ref{fig:dot_two}. For the case of SPE in Fig.~\ref{fig:spe_dot_two10},there is one diagonal strip that a strong correlation is only in between the same position but not the same reverse position. For example, there is no strong correlation between the last position of length 10 and length 20 (0.16 in the lower right corner). While for the case of DPE in Fig.~\ref{fig:dpe_dot_two10}, two diagonal strips indicate a strong correlations in both direction of positions.

From Fig.~\ref{fig:spe10} and~\ref{fig:spe20}, it is clear to see the \textit{forward-awareness} of SPE. For example, no matter what length is, the encoding for the first position will always be the same (the first row). While for the last position of lengths 10 and 20, as the max length and the embedding size vary, there will not be a shared same part, which is not \textit{backward-aware}. For RSPE, the case is just the reverse, i.e., the heat map would be upside down of SPE. From Fig.~\ref{fig:dpe10} and~\ref{fig:dpe20}, both \textit{forward-awareness} and \textit{backward-awareness} are demonstrated. The same positions and reverse positions will always share the same half of the embedding respectively.

The nearest neighbor heatmaps of LPE and LDPE are shown in Fig.~\ref{fig:lpe_dot_two} and Fig.~\ref{fig:ldpe_dot_two} respectively. From the heatmaps between the same length for 10 and 20 in Fig.~\ref{fig:lpe_dot_10} and Fig.~\ref{fig:lpe_dot_20}, the LPE focuses mostly on the initial intention with the most related encoding being the first position. From the heatmap between the length 10 and 20 in Fig.~\ref{fig:lpe_dot_two10}, it is obvious that LPE can only capture the \textit{forward} positional information of the initial intention of sessions with different lengths. The position encodings close to the end of sessions are not carrying \textit{forward} or \textit{backward} positional information, which is shown at the bottom right corner of Fig.~\ref{fig:lpe_dot_two10} with a similar closeness between different positions. Compared with LPE, the heatmaps  of LDPE are shown in Fig.~\ref{fig:ldpe_dot_10} and Fig.~\ref{fig:ldpe_dot_20}, from which the diagonal cells indicate that the LDPE can capture the positional information of distinct positions of the same length. While for Fig.~\ref{fig:ldpe_dot_two10}, it can be seen that the LDPE can capture the \textit{backward} positional information with the light cell at the bottom right corner. For the \textit{forward} positional information, the heatmap at the upper left corner shows a diagonal pattern. While the diagonal pattern is more obvious on the right part in Fig.~\ref{fig:ldpe_dot_two10}, which indicates that the \textit{backward} positional information is more important and useful for SBRS.

\subsection{Parameter Sensitivity}
\label{sec:exp-sen}

In this experiment, we evaluate the effect of $\lambda_0,\lambda_1$ and $\lambda_2$ in Eq.~\ref{eq:h-session}. We evaluate the relative scale between these three hyper-parameters. We fix $\lambda_0$ and $\lambda_1$ to 1 because they both represent the strength about the information from the last position. And we change $\lambda_2$ from $\{0.25,0.5,1,2\}$. In Fig.~\ref{fig:lambda2}, we demonstrate the result of the sensitivity of $\lambda_2$ is shown. For the \textit{Yoochoose} dataset in Fig.~\ref{fig:lambda2_yoo}, the best performance is achieved when $\lambda_2=0.5$. Compared with $\lambda_2=0.25$, we can see that inadequate initial intent will do harm to the overall performance. When it comes to $\lambda_2=1\text{ or }2$, we can also imply that the initial intent is not as important as the latest preference. For the \textit{Diginetica} dataset in Fig.~\ref{fig:lambda2_digi}, the best performance is achieved when $\lambda_2=1$. Compared with $\lambda_2=0.25\text{ or }0.5$, we can see that more initial intent will benefit the overall performance.

\begin{figure}[t]
    \centering
    \subfigure[\textit{Yoo.\ 1/64.}]{
    \label{fig:lambda2_yoo}
    \includegraphics[width=0.23\linewidth]{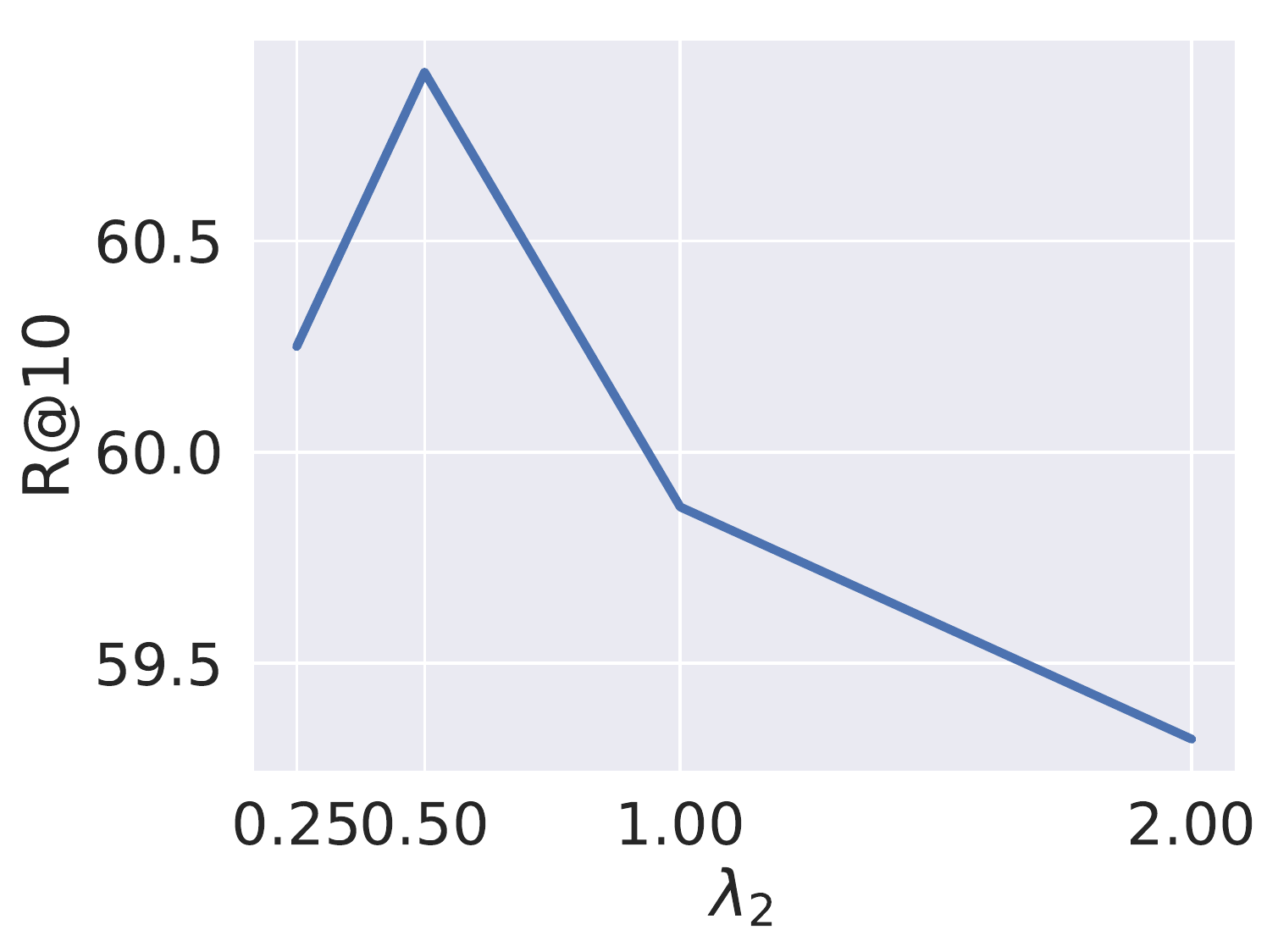}
    }
    \subfigure[\textit{Diginetica.}]{
    \label{fig:lambda2_digi}
    \includegraphics[width=0.23\linewidth]{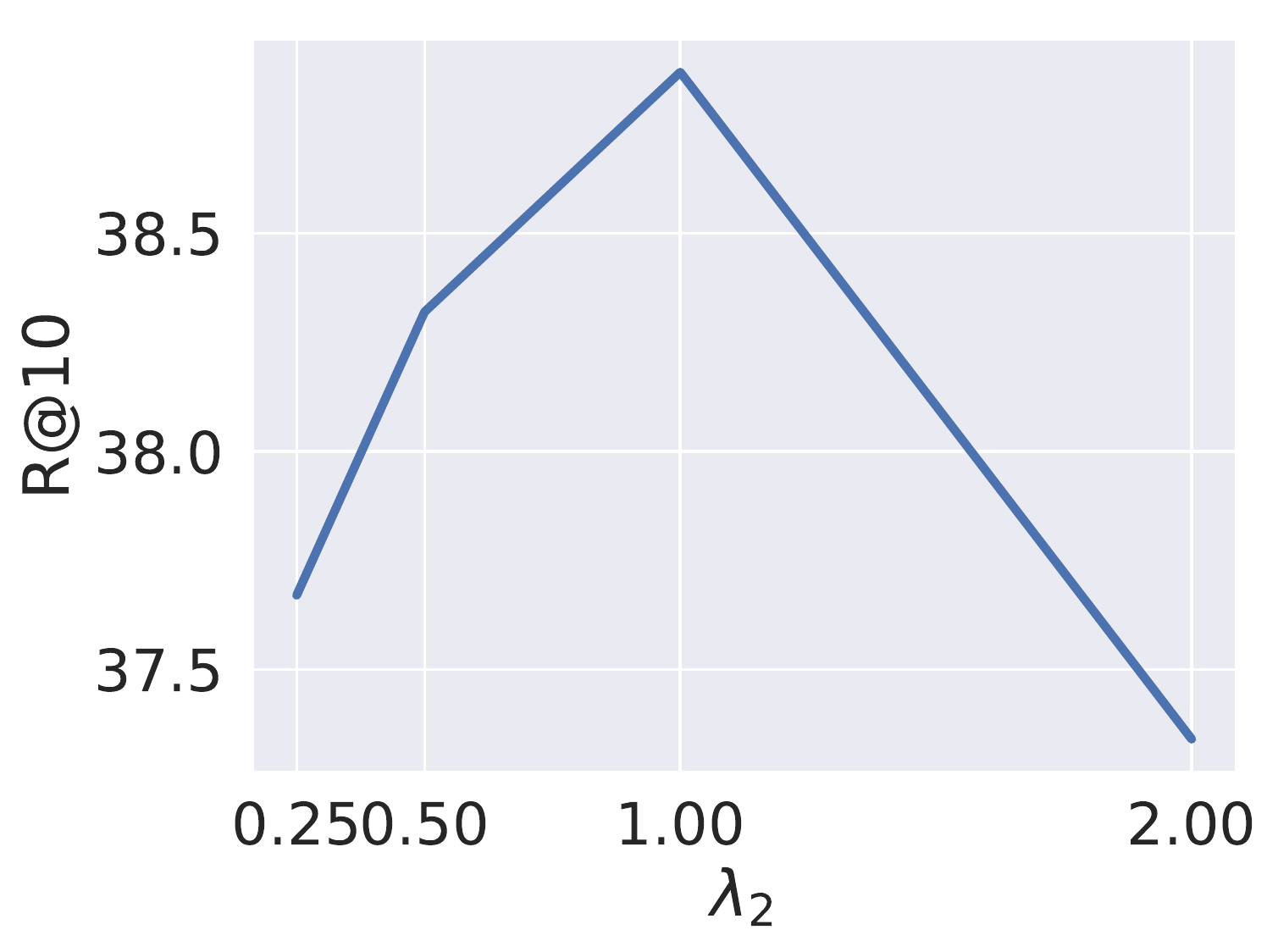}
    }
   \caption{Sensitivity of $\lambda_2$.}
\label{fig:lambda2}
\end{figure}

\section{Conclusion}
\label{sec:con}
In this paper, we investigate how the positional information can be exploited in the session-based recommendation. We find that there are two types of positional information, \textit{forward} and the \textit{backward} to represent the initial and the latest intentions in a session. A theoretical framework is proposed to analyze the representation ability of positional encoding schemes for the positional information in the session-based recommendation task. Specifically, the \textit{forward-awareness} and the \textit{backward-awareness} are defined for the evaluation of these schemes. Conventionally, a positional encoding is designed for language models with only the \textit{forward-awareness}. However, session-based recommendation requires both of the the \textit{forward-awareness} and the \textit{backward-awareness}. Therefore, a novel (learned) dual positional encoding scheme ((L)DPE) is proposed to fully capture both of the \textit{forward} and the \textit{backward} positional information. Besides, we design a PGGNN module to enhance the representation ability for graph neural networks to make use of the positional information. Combining both (L)DPE and the PGGNN, we build a PosRec model to perform the session-based recommendation with effective exploitation of the positional information. Extensive experiments are conducted on two benchmark datasets, which demonstrate that our proposal can achieve state-of-the-art performance and effectively exploit the positional information in SBRS.

\bibliographystyle{ACM-Reference-Format}
\bibliography{TOIS-2021-0053}


\begin{thebibliography}{55}


\ifx \showCODEN    \undefined \def \showCODEN     #1{\unskip}     \fi
\ifx \showDOI      \undefined \def \showDOI       #1{#1}\fi
\ifx \showISBNx    \undefined \def \showISBNx     #1{\unskip}     \fi
\ifx \showISBNxiii \undefined \def \showISBNxiii  #1{\unskip}     \fi
\ifx \showISSN     \undefined \def \showISSN      #1{\unskip}     \fi
\ifx \showLCCN     \undefined \def \showLCCN      #1{\unskip}     \fi
\ifx \shownote     \undefined \def \shownote      #1{#1}          \fi
\ifx \showarticletitle \undefined \def \showarticletitle #1{#1}   \fi
\ifx \showURL      \undefined \def \showURL       {\relax}        \fi
\providecommand\bibfield[2]{#2}
\providecommand\bibinfo[2]{#2}
\providecommand\natexlab[1]{#1}
\providecommand\showeprint[2][]{arXiv:#2}

\bibitem[\protect\citeauthoryear{Aliannejadi and Crestani}{Aliannejadi and
  Crestani}{2018}]%
        {tois-poi}
\bibfield{author}{\bibinfo{person}{Mohammad Aliannejadi} {and}
  \bibinfo{person}{Fabio Crestani}.} \bibinfo{year}{2018}\natexlab{}.
\newblock \showarticletitle{Personalized Context-Aware Point of Interest
  Recommendation}.
\newblock \bibinfo{journal}{\emph{{ACM} Trans. Inf. Syst.}}
  \bibinfo{volume}{36} (\bibinfo{year}{2018}).
\newblock


\bibitem[\protect\citeauthoryear{Bello, Zoph, Le, Vaswani, and Shlens}{Bello
  et~al\mbox{.}}{2019}]%
        {aacn}
\bibfield{author}{\bibinfo{person}{Irwan Bello}, \bibinfo{person}{Barret Zoph},
  \bibinfo{person}{Quoc Le}, \bibinfo{person}{Ashish Vaswani}, {and}
  \bibinfo{person}{Jonathon Shlens}.} \bibinfo{year}{2019}\natexlab{}.
\newblock \showarticletitle{Attention Augmented Convolutional Networks}. In
  \bibinfo{booktitle}{\emph{ICCV}}.
\newblock


\bibitem[\protect\citeauthoryear{Beutel, Covington, Jain, Xu, Li, Gatto, and
  Chi}{Beutel et~al\mbox{.}}{2018}]%
        {latent-cross}
\bibfield{author}{\bibinfo{person}{Alex Beutel}, \bibinfo{person}{Paul
  Covington}, \bibinfo{person}{Sagar Jain}, \bibinfo{person}{Can Xu},
  \bibinfo{person}{Jia Li}, \bibinfo{person}{Vince Gatto}, {and}
  \bibinfo{person}{Ed~H. Chi}.} \bibinfo{year}{2018}\natexlab{}.
\newblock \showarticletitle{Latent Cross: Making Use of Context in Recurrent
  Recommender Systems}. In \bibinfo{booktitle}{\emph{WSDM}}.
\newblock


\bibitem[\protect\citeauthoryear{Cao, Zhang, Song, Pan, and Xu}{Cao
  et~al\mbox{.}}{2020}]%
        {pos-sbrs}
\bibfield{author}{\bibinfo{person}{Yi Cao}, \bibinfo{person}{Weifeng Zhang},
  \bibinfo{person}{Bo Song}, \bibinfo{person}{Weike Pan}, {and}
  \bibinfo{person}{Congfu Xu}.} \bibinfo{year}{2020}\natexlab{}.
\newblock \showarticletitle{Position-aware context attention for session-based
  recommendation}.
\newblock \bibinfo{journal}{\emph{Neurocomputing}}  \bibinfo{volume}{376}
  (\bibinfo{year}{2020}).
\newblock


\bibitem[\protect\citeauthoryear{Carion, Massa, Synnaeve, Usunier, Kirillov,
  and Zagoruyko}{Carion et~al\mbox{.}}{2020}]%
        {e2eatt}
\bibfield{author}{\bibinfo{person}{Nicolas Carion}, \bibinfo{person}{Francisco
  Massa}, \bibinfo{person}{Gabriel Synnaeve}, \bibinfo{person}{Nicolas
  Usunier}, \bibinfo{person}{Alexander Kirillov}, {and} \bibinfo{person}{Sergey
  Zagoruyko}.} \bibinfo{year}{2020}\natexlab{}.
\newblock \showarticletitle{End-to-End Object Detection with Transformers}.
\newblock \bibinfo{journal}{\emph{CoRR}}  \bibinfo{volume}{abs/2005.12872}
  (\bibinfo{year}{2020}).
\newblock


\bibitem[\protect\citeauthoryear{Chen, Zhao, Li, Huang, and Ou}{Chen
  et~al\mbox{.}}{2019}]%
        {ali-seq}
\bibfield{author}{\bibinfo{person}{Qiwei Chen}, \bibinfo{person}{Huan Zhao},
  \bibinfo{person}{Wei Li}, \bibinfo{person}{Pipei Huang}, {and}
  \bibinfo{person}{Wenwu Ou}.} \bibinfo{year}{2019}\natexlab{}.
\newblock \showarticletitle{Behavior Sequence Transformer for E-commerce
  Recommendation in Alibaba}.
\newblock \bibinfo{journal}{\emph{CoRR}}  \bibinfo{volume}{abs/1905.06874}
  (\bibinfo{year}{2019}).
\newblock


\bibitem[\protect\citeauthoryear{Chen, Yin, Nguyen, Peng, Li, and Zhou}{Chen
  et~al\mbox{.}}{2020}]%
        {safm}
\bibfield{author}{\bibinfo{person}{Tong Chen}, \bibinfo{person}{Hongzhi Yin},
  \bibinfo{person}{Quoc Viet~Hung Nguyen}, \bibinfo{person}{Wen{-}Chih Peng},
  \bibinfo{person}{Xue Li}, {and} \bibinfo{person}{Xiaofang Zhou}.}
  \bibinfo{year}{2020}\natexlab{}.
\newblock \showarticletitle{Sequence-Aware Factorization Machines for Temporal
  Predictive Analytics}. In \bibinfo{booktitle}{\emph{ICDE}}.
\newblock


\bibitem[\protect\citeauthoryear{Chung, Gulcehre, Cho, and Bengio}{Chung
  et~al\mbox{.}}{2014}]%
        {gru}
\bibfield{author}{\bibinfo{person}{Junyoung Chung}, \bibinfo{person}{Caglar
  Gulcehre}, \bibinfo{person}{Kyunghyun Cho}, {and} \bibinfo{person}{Yoshua
  Bengio}.} \bibinfo{year}{2014}\natexlab{}.
\newblock \showarticletitle{Empirical evaluation of gated recurrent neural
  networks on sequence modeling}. In \bibinfo{booktitle}{\emph{NIPS}}.
\newblock


\bibitem[\protect\citeauthoryear{Dai, Yang, Yang, Carbonell, Le, and
  Salakhutdinov}{Dai et~al\mbox{.}}{2019}]%
        {trans-xl}
\bibfield{author}{\bibinfo{person}{Zihang Dai}, \bibinfo{person}{Zhilin Yang},
  \bibinfo{person}{Yiming Yang}, \bibinfo{person}{Jaime~G. Carbonell},
  \bibinfo{person}{Quoc~Viet Le}, {and} \bibinfo{person}{Ruslan
  Salakhutdinov}.} \bibinfo{year}{2019}\natexlab{}.
\newblock \showarticletitle{Transformer-XL: Attentive Language Models beyond a
  Fixed-Length Context}. In \bibinfo{booktitle}{\emph{ACL}}.
\newblock


\bibitem[\protect\citeauthoryear{Devlin, Chang, Lee, and Toutanova}{Devlin
  et~al\mbox{.}}{2019}]%
        {bert}
\bibfield{author}{\bibinfo{person}{Jacob Devlin}, \bibinfo{person}{Ming{-}Wei
  Chang}, \bibinfo{person}{Kenton Lee}, {and} \bibinfo{person}{Kristina
  Toutanova}.} \bibinfo{year}{2019}\natexlab{}.
\newblock \showarticletitle{{BERT:} Pre-training of Deep Bidirectional
  Transformers for Language Understanding}. In
  \bibinfo{booktitle}{\emph{NAACL-HLT}}.
\newblock


\bibitem[\protect\citeauthoryear{Fang, Zhang, Shu, and Guo}{Fang
  et~al\mbox{.}}{2020}]%
        {tois-seq}
\bibfield{author}{\bibinfo{person}{Hui Fang}, \bibinfo{person}{Danning Zhang},
  \bibinfo{person}{Yiheng Shu}, {and} \bibinfo{person}{Guibing Guo}.}
  \bibinfo{year}{2020}\natexlab{}.
\newblock \showarticletitle{Deep Learning for Sequential Recommendation:
  Algorithms, Influential Factors, and Evaluations}.
\newblock \bibinfo{journal}{\emph{{ACM} Trans. Inf. Syst.}}
  \bibinfo{volume}{39} (\bibinfo{year}{2020}).
\newblock


\bibitem[\protect\citeauthoryear{Guo, Yin, Wang, Chen, Zhou, and Hung}{Guo
  et~al\mbox{.}}{2019}]%
        {ssrm}
\bibfield{author}{\bibinfo{person}{Lei Guo}, \bibinfo{person}{Hongzhi Yin},
  \bibinfo{person}{Qinyong Wang}, \bibinfo{person}{Tong Chen},
  \bibinfo{person}{Alexander Zhou}, {and} \bibinfo{person}{Nguyen Quoc~Viet
  Hung}.} \bibinfo{year}{2019}\natexlab{}.
\newblock \showarticletitle{Streaming Session-based Recommendation}. In
  \bibinfo{booktitle}{\emph{SIGKDD}}.
\newblock


\bibitem[\protect\citeauthoryear{Hidasi and Karatzoglou}{Hidasi and
  Karatzoglou}{2018}]%
        {gru4rec+}
\bibfield{author}{\bibinfo{person}{Bal{\'{a}}zs Hidasi} {and}
  \bibinfo{person}{Alexandros Karatzoglou}.} \bibinfo{year}{2018}\natexlab{}.
\newblock \showarticletitle{Recurrent Neural Networks with Top-k Gains for
  Session-based Recommendations}. In \bibinfo{booktitle}{\emph{CIKM}}.
\newblock


\bibitem[\protect\citeauthoryear{Hidasi, Karatzoglou, Baltrunas, and
  Tikk}{Hidasi et~al\mbox{.}}{2016}]%
        {gru4rec}
\bibfield{author}{\bibinfo{person}{Bal{\'{a}}zs Hidasi},
  \bibinfo{person}{Alexandros Karatzoglou}, \bibinfo{person}{Linas Baltrunas},
  {and} \bibinfo{person}{Domonkos Tikk}.} \bibinfo{year}{2016}\natexlab{}.
\newblock \showarticletitle{Session-based Recommendations with Recurrent Neural
  Networks}. In \bibinfo{booktitle}{\emph{ICLR}}.
\newblock


\bibitem[\protect\citeauthoryear{Hidasi and Tikk}{Hidasi and Tikk}{2016}]%
        {sum}
\bibfield{author}{\bibinfo{person}{Bal{\'{a}}zs Hidasi} {and}
  \bibinfo{person}{Domonkos Tikk}.} \bibinfo{year}{2016}\natexlab{}.
\newblock \showarticletitle{General factorization framework for context-aware
  recommendations}.
\newblock \bibinfo{journal}{\emph{Data Min. Knowl. Discov.}}
  \bibinfo{volume}{30} (\bibinfo{year}{2016}).
\newblock


\bibitem[\protect\citeauthoryear{Hochreiter and Schmidhuber}{Hochreiter and
  Schmidhuber}{1997}]%
        {lstm}
\bibfield{author}{\bibinfo{person}{Sepp Hochreiter} {and}
  \bibinfo{person}{J{\"{u}}rgen Schmidhuber}.} \bibinfo{year}{1997}\natexlab{}.
\newblock \showarticletitle{Long Short-Term Memory}.
\newblock \bibinfo{journal}{\emph{Neural Computation}}  \bibinfo{volume}{9}
  (\bibinfo{year}{1997}).
\newblock


\bibitem[\protect\citeauthoryear{Kang and McAuley}{Kang and McAuley}{2018}]%
        {sasrec}
\bibfield{author}{\bibinfo{person}{Wang{-}Cheng Kang} {and}
  \bibinfo{person}{Julian~J. McAuley}.} \bibinfo{year}{2018}\natexlab{}.
\newblock \showarticletitle{Self-Attentive Sequential Recommendation}. In
  \bibinfo{booktitle}{\emph{ICDM}}.
\newblock


\bibitem[\protect\citeauthoryear{Kingma and Ba}{Kingma and Ba}{2015}]%
        {adam}
\bibfield{author}{\bibinfo{person}{Diederik~P. Kingma} {and}
  \bibinfo{person}{Jimmy Ba}.} \bibinfo{year}{2015}\natexlab{}.
\newblock \showarticletitle{Adam: {A} Method for Stochastic Optimization}. In
  \bibinfo{booktitle}{\emph{ICLR}}, \bibfield{editor}{\bibinfo{person}{Yoshua
  Bengio} {and} \bibinfo{person}{Yann LeCun}} (Eds.).
\newblock


\bibitem[\protect\citeauthoryear{Kipf and Welling}{Kipf and Welling}{2017}]%
        {gcn}
\bibfield{author}{\bibinfo{person}{Thomas~N. Kipf} {and} \bibinfo{person}{Max
  Welling}.} \bibinfo{year}{2017}\natexlab{}.
\newblock \showarticletitle{Semi-Supervised Classification with Graph
  Convolutional Networks}. In \bibinfo{booktitle}{\emph{ICLR}}.
\newblock


\bibitem[\protect\citeauthoryear{Krichene and Rendle}{Krichene and
  Rendle}{2020}]%
        {metric}
\bibfield{author}{\bibinfo{person}{Walid Krichene} {and}
  \bibinfo{person}{Steffen Rendle}.} \bibinfo{year}{2020}\natexlab{}.
\newblock \showarticletitle{On Sampled Metrics for Item Recommendation}. In
  \bibinfo{booktitle}{\emph{SIGKDD}}.
\newblock


\bibitem[\protect\citeauthoryear{Lee, Park, Baek, Oh, Kim, and Lee}{Lee
  et~al\mbox{.}}{2019}]%
        {2dsa}
\bibfield{author}{\bibinfo{person}{Junyeop Lee}, \bibinfo{person}{Sungrae
  Park}, \bibinfo{person}{Jeonghun Baek}, \bibinfo{person}{Seong~Joon Oh},
  \bibinfo{person}{Seonghyeon Kim}, {and} \bibinfo{person}{Hwalsuk Lee}.}
  \bibinfo{year}{2019}\natexlab{}.
\newblock \showarticletitle{On Recognizing Texts of Arbitrary Shapes with 2D
  Self-Attention}.
\newblock \bibinfo{journal}{\emph{CoRR}}  \bibinfo{volume}{abs/1910.04396}
  (\bibinfo{year}{2019}).
\newblock


\bibitem[\protect\citeauthoryear{Li, Ren, Chen, Ren, Lian, and Ma}{Li
  et~al\mbox{.}}{2017}]%
        {narm}
\bibfield{author}{\bibinfo{person}{Jing Li}, \bibinfo{person}{Pengjie Ren},
  \bibinfo{person}{Zhumin Chen}, \bibinfo{person}{Zhaochun Ren},
  \bibinfo{person}{Tao Lian}, {and} \bibinfo{person}{Jun Ma}.}
  \bibinfo{year}{2017}\natexlab{}.
\newblock \showarticletitle{Neural Attentive Session-based Recommendation}. In
  \bibinfo{booktitle}{\emph{CIKM}}.
\newblock


\bibitem[\protect\citeauthoryear{Li, Chen, Luo, Yin, and Huang}{Li
  et~al\mbox{.}}{2021}]%
        {LiDCPOI21}
\bibfield{author}{\bibinfo{person}{Yang Li}, \bibinfo{person}{Tong Chen},
  \bibinfo{person}{Yadan Luo}, \bibinfo{person}{Hongzhi Yin}, {and}
  \bibinfo{person}{Zi Huang}.} \bibinfo{year}{2021}\natexlab{}.
\newblock \showarticletitle{Discovering Collaborative Signals for Next POI
  Recommendation with Iterative Seq2Graph Augmentation}.
\newblock \bibinfo{journal}{\emph{CoRR}}  \bibinfo{volume}{abs/2106.15814}
  (\bibinfo{year}{2021}).
\newblock


\bibitem[\protect\citeauthoryear{Li, Luo, Zhang, Sadiq, and Cui}{Li
  et~al\mbox{.}}{2019}]%
        {LiLZSC19}
\bibfield{author}{\bibinfo{person}{Yang Li}, \bibinfo{person}{Yadan Luo},
  \bibinfo{person}{Zheng Zhang}, \bibinfo{person}{Shazia~W. Sadiq}, {and}
  \bibinfo{person}{Peng Cui}.} \bibinfo{year}{2019}\natexlab{}.
\newblock \showarticletitle{Context-Aware Attention-Based Data Augmentation for
  {POI} Recommendation}. In \bibinfo{booktitle}{\emph{{ICDE} Workshops}}.
\newblock


\bibitem[\protect\citeauthoryear{Li, Tarlow, Brockschmidt, and Zemel}{Li
  et~al\mbox{.}}{2016}]%
        {ggnn}
\bibfield{author}{\bibinfo{person}{Yujia Li}, \bibinfo{person}{Daniel Tarlow},
  \bibinfo{person}{Marc Brockschmidt}, {and} \bibinfo{person}{Richard~S.
  Zemel}.} \bibinfo{year}{2016}\natexlab{}.
\newblock \showarticletitle{Gated Graph Sequence Neural Networks}. In
  \bibinfo{booktitle}{\emph{ICLR}}.
\newblock


\bibitem[\protect\citeauthoryear{Liu, Zeng, Mokhosi, and Zhang}{Liu
  et~al\mbox{.}}{2018}]%
        {Liu18STAMP}
\bibfield{author}{\bibinfo{person}{Qiao Liu}, \bibinfo{person}{Yifu Zeng},
  \bibinfo{person}{Refuoe Mokhosi}, {and} \bibinfo{person}{Haibin Zhang}.}
  \bibinfo{year}{2018}\natexlab{}.
\newblock \showarticletitle{{STAMP:} Short-Term Attention/Memory Priority Model
  for Session-based Recommendation}. In \bibinfo{booktitle}{\emph{SIGKDD}}.
\newblock


\bibitem[\protect\citeauthoryear{Parmar, Ramachandran, Vaswani, Bello,
  Levskaya, and Shlens}{Parmar et~al\mbox{.}}{2019}]%
        {sasa}
\bibfield{author}{\bibinfo{person}{Niki Parmar}, \bibinfo{person}{Prajit
  Ramachandran}, \bibinfo{person}{Ashish Vaswani}, \bibinfo{person}{Irwan
  Bello}, \bibinfo{person}{Anselm Levskaya}, {and} \bibinfo{person}{Jon
  Shlens}.} \bibinfo{year}{2019}\natexlab{}.
\newblock \showarticletitle{Stand-Alone Self-Attention in Vision Models}. In
  \bibinfo{booktitle}{\emph{NeurIPS}}.
\newblock


\bibitem[\protect\citeauthoryear{Qian, Liu, Nguyen, and Yin}{Qian
  et~al\mbox{.}}{2019}]%
        {tois-poi1}
\bibfield{author}{\bibinfo{person}{Tieyun Qian}, \bibinfo{person}{Bei Liu},
  \bibinfo{person}{Quoc Viet~Hung Nguyen}, {and} \bibinfo{person}{Hongzhi
  Yin}.} \bibinfo{year}{2019}\natexlab{}.
\newblock \showarticletitle{Spatiotemporal Representation Learning for
  Translation-Based {POI} Recommendation}.
\newblock \bibinfo{journal}{\emph{{ACM} Trans. Inf. Syst.}}
  \bibinfo{volume}{37} (\bibinfo{year}{2019}).
\newblock


\bibitem[\protect\citeauthoryear{Qiu, Huang, Li, and Yin}{Qiu
  et~al\mbox{.}}{2020a}]%
        {fgnnj}
\bibfield{author}{\bibinfo{person}{Ruihong Qiu}, \bibinfo{person}{Zi Huang},
  \bibinfo{person}{Jingjing Li}, {and} \bibinfo{person}{Hongzhi Yin}.}
  \bibinfo{year}{2020}\natexlab{a}.
\newblock \showarticletitle{Exploiting Cross-session Information for
  Session-based Recommendation with Graph Neural Networks}.
\newblock \bibinfo{journal}{\emph{{ACM} Trans. Inf. Syst.}}
  \bibinfo{volume}{38} (\bibinfo{year}{2020}).
\newblock


\bibitem[\protect\citeauthoryear{Qiu, Li, Huang, and Yin}{Qiu
  et~al\mbox{.}}{2019}]%
        {fgnn}
\bibfield{author}{\bibinfo{person}{Ruihong Qiu}, \bibinfo{person}{Jingjing Li},
  \bibinfo{person}{Zi Huang}, {and} \bibinfo{person}{Hongzhi Yin}.}
  \bibinfo{year}{2019}\natexlab{}.
\newblock \showarticletitle{Rethinking the Item Order in Session-based
  Recommendation with Graph Neural Networks}. In
  \bibinfo{booktitle}{\emph{CIKM}}.
\newblock


\bibitem[\protect\citeauthoryear{Qiu, Yin, Huang, and Chen}{Qiu
  et~al\mbox{.}}{2020b}]%
        {gag}
\bibfield{author}{\bibinfo{person}{Ruihong Qiu}, \bibinfo{person}{Hongzhi Yin},
  \bibinfo{person}{Zi Huang}, {and} \bibinfo{person}{Tong Chen}.}
  \bibinfo{year}{2020}\natexlab{b}.
\newblock \showarticletitle{{GAG:} Global Attributed Graph Neural Network for
  Streaming Session-based Recommendation}. In
  \bibinfo{booktitle}{\emph{SIGIR}}.
\newblock


\bibitem[\protect\citeauthoryear{Rendle, Freudenthaler, Gantner, and
  Schmidt{-}Thieme}{Rendle et~al\mbox{.}}{2009}]%
        {bprmf}
\bibfield{author}{\bibinfo{person}{Steffen Rendle}, \bibinfo{person}{Christoph
  Freudenthaler}, \bibinfo{person}{Zeno Gantner}, {and} \bibinfo{person}{Lars
  Schmidt{-}Thieme}.} \bibinfo{year}{2009}\natexlab{}.
\newblock \showarticletitle{{BPR:} Bayesian Personalized Ranking from Implicit
  Feedback}. In \bibinfo{booktitle}{\emph{UAI}}.
\newblock


\bibitem[\protect\citeauthoryear{Rendle, Freudenthaler, and
  Schmidt{-}Thieme}{Rendle et~al\mbox{.}}{2010}]%
        {fpmc}
\bibfield{author}{\bibinfo{person}{Steffen Rendle}, \bibinfo{person}{Christoph
  Freudenthaler}, {and} \bibinfo{person}{Lars Schmidt{-}Thieme}.}
  \bibinfo{year}{2010}\natexlab{}.
\newblock \showarticletitle{Factorizing personalized Markov chains for
  next-basket recommendation}. In \bibinfo{booktitle}{\emph{WWW}}.
\newblock


\bibitem[\protect\citeauthoryear{Sarwar, Karypis, Konstan, and Riedl}{Sarwar
  et~al\mbox{.}}{2001}]%
        {item-knn}
\bibfield{author}{\bibinfo{person}{Badrul~Munir Sarwar},
  \bibinfo{person}{George Karypis}, \bibinfo{person}{Joseph~A. Konstan}, {and}
  \bibinfo{person}{John Riedl}.} \bibinfo{year}{2001}\natexlab{}.
\newblock \showarticletitle{Item-based collaborative filtering recommendation
  algorithms}. In \bibinfo{booktitle}{\emph{WWW}}.
\newblock


\bibitem[\protect\citeauthoryear{Shani, Heckerman, and Brafman}{Shani
  et~al\mbox{.}}{2005}]%
        {mdp}
\bibfield{author}{\bibinfo{person}{Guy Shani}, \bibinfo{person}{David
  Heckerman}, {and} \bibinfo{person}{Ronen~I. Brafman}.}
  \bibinfo{year}{2005}\natexlab{}.
\newblock \showarticletitle{An MDP-Based Recommender System}.
\newblock \bibinfo{journal}{\emph{J. Mach. Learn. Res.}}  \bibinfo{volume}{6}
  (\bibinfo{year}{2005}).
\newblock


\bibitem[\protect\citeauthoryear{Shaw, Uszkoreit, and Vaswani}{Shaw
  et~al\mbox{.}}{2018}]%
        {rpe}
\bibfield{author}{\bibinfo{person}{Peter Shaw}, \bibinfo{person}{Jakob
  Uszkoreit}, {and} \bibinfo{person}{Ashish Vaswani}.}
  \bibinfo{year}{2018}\natexlab{}.
\newblock \showarticletitle{Self-Attention with Relative Position
  Representations}. In \bibinfo{booktitle}{\emph{NAACL}}.
\newblock


\bibitem[\protect\citeauthoryear{Shiv and Quirk}{Shiv and Quirk}{2019}]%
        {tree}
\bibfield{author}{\bibinfo{person}{Vighnesh~Leonardo Shiv} {and}
  \bibinfo{person}{Chris Quirk}.} \bibinfo{year}{2019}\natexlab{}.
\newblock \showarticletitle{Novel positional encodings to enable tree-based
  transformers}. In \bibinfo{booktitle}{\emph{NeurIPS}}.
\newblock


\bibitem[\protect\citeauthoryear{Sun, Liu, Wu, Pei, Lin, Ou, and Jiang}{Sun
  et~al\mbox{.}}{2019a}]%
        {bert4rec}
\bibfield{author}{\bibinfo{person}{Fei Sun}, \bibinfo{person}{Jun Liu},
  \bibinfo{person}{Jian Wu}, \bibinfo{person}{Changhua Pei},
  \bibinfo{person}{Xiao Lin}, \bibinfo{person}{Wenwu Ou}, {and}
  \bibinfo{person}{Peng Jiang}.} \bibinfo{year}{2019}\natexlab{a}.
\newblock \showarticletitle{BERT4Rec: Sequential Recommendation with
  Bidirectional Encoder Representations from Transformer}. In
  \bibinfo{booktitle}{\emph{CIKM}}.
\newblock


\bibitem[\protect\citeauthoryear{Sun, Qian, Yin, Chen, Chen, and Chen}{Sun
  et~al\mbox{.}}{2019b}]%
        {history}
\bibfield{author}{\bibinfo{person}{Ke Sun}, \bibinfo{person}{Tieyun Qian},
  \bibinfo{person}{Hongzhi Yin}, \bibinfo{person}{Tong Chen},
  \bibinfo{person}{Yiqi Chen}, {and} \bibinfo{person}{Ling Chen}.}
  \bibinfo{year}{2019}\natexlab{b}.
\newblock \showarticletitle{What Can History Tell Us?}. In
  \bibinfo{booktitle}{\emph{CIKM}}.
\newblock


\bibitem[\protect\citeauthoryear{Tan, Xu, and Liu}{Tan et~al\mbox{.}}{2016}]%
        {improved}
\bibfield{author}{\bibinfo{person}{Yong~Kiam Tan}, \bibinfo{person}{Xinxing
  Xu}, {and} \bibinfo{person}{Yong Liu}.} \bibinfo{year}{2016}\natexlab{}.
\newblock \showarticletitle{Improved Recurrent Neural Networks for
  Session-based Recommendations}. In \bibinfo{booktitle}{\emph{DLRS@RecSys}}.
\newblock


\bibitem[\protect\citeauthoryear{Tang and Wang}{Tang and Wang}{2018}]%
        {caser}
\bibfield{author}{\bibinfo{person}{Jiaxi Tang} {and} \bibinfo{person}{Ke
  Wang}.} \bibinfo{year}{2018}\natexlab{}.
\newblock \showarticletitle{Personalized Top-N Sequential Recommendation via
  Convolutional Sequence Embedding}. In \bibinfo{booktitle}{\emph{WSDM}}.
\newblock


\bibitem[\protect\citeauthoryear{Vaswani, Shazeer, Parmar, Uszkoreit, Jones,
  Gomez, Kaiser, and Polosukhin}{Vaswani et~al\mbox{.}}{2017}]%
        {attention}
\bibfield{author}{\bibinfo{person}{Ashish Vaswani}, \bibinfo{person}{Noam
  Shazeer}, \bibinfo{person}{Niki Parmar}, \bibinfo{person}{Jakob Uszkoreit},
  \bibinfo{person}{Llion Jones}, \bibinfo{person}{Aidan~N. Gomez},
  \bibinfo{person}{Lukasz Kaiser}, {and} \bibinfo{person}{Illia Polosukhin}.}
  \bibinfo{year}{2017}\natexlab{}.
\newblock \showarticletitle{Attention is All you Need}. In
  \bibinfo{booktitle}{\emph{NIPS}}.
\newblock


\bibitem[\protect\citeauthoryear{Velickovic, Cucurull, Casanova, Romero,
  Li{\`{o}}, and Bengio}{Velickovic et~al\mbox{.}}{2018}]%
        {gat}
\bibfield{author}{\bibinfo{person}{Petar Velickovic}, \bibinfo{person}{Guillem
  Cucurull}, \bibinfo{person}{Arantxa Casanova}, \bibinfo{person}{Adriana
  Romero}, \bibinfo{person}{Pietro Li{\`{o}}}, {and} \bibinfo{person}{Yoshua
  Bengio}.} \bibinfo{year}{2018}\natexlab{}.
\newblock \showarticletitle{Graph Attention Networks}. In
  \bibinfo{booktitle}{\emph{ICLR}}.
\newblock


\bibitem[\protect\citeauthoryear{Vinyals, Bengio, and Kudlur}{Vinyals
  et~al\mbox{.}}{2016}]%
        {set2set}
\bibfield{author}{\bibinfo{person}{Oriol Vinyals}, \bibinfo{person}{Samy
  Bengio}, {and} \bibinfo{person}{Manjunath Kudlur}.}
  \bibinfo{year}{2016}\natexlab{}.
\newblock \showarticletitle{Order Matters: Sequence to sequence for sets}. In
  \bibinfo{booktitle}{\emph{ICLR}}.
\newblock


\bibitem[\protect\citeauthoryear{Wang, Shang, Lioma, Jiang, Yang, Liu, and
  Simonsen}{Wang et~al\mbox{.}}{2021b}]%
        {pos-iclr}
\bibfield{author}{\bibinfo{person}{Benyou Wang}, \bibinfo{person}{Lifeng
  Shang}, \bibinfo{person}{Christina Lioma}, \bibinfo{person}{Xin Jiang},
  \bibinfo{person}{Hao Yang}, \bibinfo{person}{Qun Liu}, {and}
  \bibinfo{person}{Jakob~Grue Simonsen}.} \bibinfo{year}{2021}\natexlab{b}.
\newblock \showarticletitle{On Position Embeddings in BERT}. In
  \bibinfo{booktitle}{\emph{ICLR}}.
\newblock


\bibitem[\protect\citeauthoryear{Wang, Ma, Zhang, Chen, Liu, and Ma}{Wang
  et~al\mbox{.}}{2021a}]%
        {tois-seq1}
\bibfield{author}{\bibinfo{person}{Chenyang Wang}, \bibinfo{person}{Weizhi Ma},
  \bibinfo{person}{Min Zhang}, \bibinfo{person}{Chong Chen},
  \bibinfo{person}{Yiqun Liu}, {and} \bibinfo{person}{Shaoping Ma}.}
  \bibinfo{year}{2021}\natexlab{a}.
\newblock \showarticletitle{Toward Dynamic User Intention: Temporal
  Evolutionary Effects of Item Relations in Sequential Recommendation}.
\newblock \bibinfo{journal}{\emph{ACM Trans. Inf. Syst.}}  \bibinfo{volume}{39}
  (\bibinfo{year}{2021}).
\newblock


\bibitem[\protect\citeauthoryear{Wang, Zhang, Liu, Liu, Zhang, Lin, and
  Zha}{Wang et~al\mbox{.}}{2020}]%
        {bc}
\bibfield{author}{\bibinfo{person}{Wen Wang}, \bibinfo{person}{Wei Zhang},
  \bibinfo{person}{Shukai Liu}, \bibinfo{person}{Qi Liu}, \bibinfo{person}{Bo
  Zhang}, \bibinfo{person}{Leyu Lin}, {and} \bibinfo{person}{Hongyuan Zha}.}
  \bibinfo{year}{2020}\natexlab{}.
\newblock \showarticletitle{Beyond Clicks: Modeling Multi-Relational Item Graph
  for Session-Based Target Behavior Prediction}. In
  \bibinfo{booktitle}{\emph{WWW}}.
\newblock


\bibitem[\protect\citeauthoryear{Wang and Liu}{Wang and Liu}{2019}]%
        {2dlatex}
\bibfield{author}{\bibinfo{person}{Zelun Wang} {and}
  \bibinfo{person}{Jyh{-}Charn Liu}.} \bibinfo{year}{2019}\natexlab{}.
\newblock \showarticletitle{Translating Mathematical Formula Images to LaTeX
  Sequences Using Deep Neural Networks with Sequence-level Training}.
\newblock \bibinfo{journal}{\emph{CoRR}}  \bibinfo{volume}{abs/1908.11415}
  (\bibinfo{year}{2019}).
\newblock


\bibitem[\protect\citeauthoryear{Wu, Tang, Zhu, Wang, Xie, and Tan}{Wu
  et~al\mbox{.}}{2019}]%
        {srgnn}
\bibfield{author}{\bibinfo{person}{Shu Wu}, \bibinfo{person}{Yuyuan Tang},
  \bibinfo{person}{Yanqiao Zhu}, \bibinfo{person}{Liang Wang},
  \bibinfo{person}{Xing Xie}, {and} \bibinfo{person}{Tieniu Tan}.}
  \bibinfo{year}{2019}\natexlab{}.
\newblock \showarticletitle{Session-based Recommendation with Graph Neural
  Networks}. In \bibinfo{booktitle}{\emph{AAAI}}.
\newblock


\bibitem[\protect\citeauthoryear{Xia, Yin, Yu, Wang, Cui, and Zhang}{Xia
  et~al\mbox{.}}{2021}]%
        {hypersbrs}
\bibfield{author}{\bibinfo{person}{Xin Xia}, \bibinfo{person}{Hongzhi Yin},
  \bibinfo{person}{Junliang Yu}, \bibinfo{person}{Qinyong Wang},
  \bibinfo{person}{Lizhen Cui}, {and} \bibinfo{person}{Xiangliang Zhang}.}
  \bibinfo{year}{2021}\natexlab{}.
\newblock \showarticletitle{Self-Supervised Hypergraph Convolutional Networks
  for Session-based Recommendation}. In \bibinfo{booktitle}{\emph{AAAI}}.
\newblock


\bibitem[\protect\citeauthoryear{Xu, Zhao, Liu, Sheng, Xu, Zhuang, Fang, and
  Zhou}{Xu et~al\mbox{.}}{2019}]%
        {gc-san}
\bibfield{author}{\bibinfo{person}{Chengfeng Xu}, \bibinfo{person}{Pengpeng
  Zhao}, \bibinfo{person}{Yanchi Liu}, \bibinfo{person}{Victor~S. Sheng},
  \bibinfo{person}{Jiajie Xu}, \bibinfo{person}{Fuzhen Zhuang},
  \bibinfo{person}{Junhua Fang}, {and} \bibinfo{person}{Xiaofang Zhou}.}
  \bibinfo{year}{2019}\natexlab{}.
\newblock \showarticletitle{Graph Contextualized Self-Attention Network for
  Session-based Recommendation}. In \bibinfo{booktitle}{\emph{IJCAI}}.
\newblock


\bibitem[\protect\citeauthoryear{Yang, Dai, Yang, Carbonell, Salakhutdinov, and
  Le}{Yang et~al\mbox{.}}{2019}]%
        {xlnet}
\bibfield{author}{\bibinfo{person}{Zhilin Yang}, \bibinfo{person}{Zihang Dai},
  \bibinfo{person}{Yiming Yang}, \bibinfo{person}{Jaime~G. Carbonell},
  \bibinfo{person}{Ruslan Salakhutdinov}, {and} \bibinfo{person}{Quoc~V. Le}.}
  \bibinfo{year}{2019}\natexlab{}.
\newblock \showarticletitle{XLNet: Generalized Autoregressive Pretraining for
  Language Understanding}. In \bibinfo{booktitle}{\emph{NeurIPS}}.
\newblock


\bibitem[\protect\citeauthoryear{You, Ying, and Leskovec}{You
  et~al\mbox{.}}{2019}]%
        {pgnn}
\bibfield{author}{\bibinfo{person}{Jiaxuan You}, \bibinfo{person}{Rex Ying},
  {and} \bibinfo{person}{Jure Leskovec}.} \bibinfo{year}{2019}\natexlab{}.
\newblock \showarticletitle{Position-aware Graph Neural Networks}. In
  \bibinfo{booktitle}{\emph{ICML}}, \bibfield{editor}{\bibinfo{person}{Kamalika
  Chaudhuri} {and} \bibinfo{person}{Ruslan Salakhutdinov}} (Eds.).
\newblock


\bibitem[\protect\citeauthoryear{Zhang, Yin, Huang, Du, Yang, and Lian}{Zhang
  et~al\mbox{.}}{2018}]%
        {ddl}
\bibfield{author}{\bibinfo{person}{Yan Zhang}, \bibinfo{person}{Hongzhi Yin},
  \bibinfo{person}{Zi Huang}, \bibinfo{person}{Xingzhong Du},
  \bibinfo{person}{Guowu Yang}, {and} \bibinfo{person}{Defu Lian}.}
  \bibinfo{year}{2018}\natexlab{}.
\newblock \showarticletitle{Discrete Deep Learning for Fast Content-Aware
  Recommendation}. In \bibinfo{booktitle}{\emph{WSDM}}.
\newblock


\bibitem[\protect\citeauthoryear{Zimdars, Chickering, and Meek}{Zimdars
  et~al\mbox{.}}{2001}]%
        {zimdars2001using}
\bibfield{author}{\bibinfo{person}{Andrew Zimdars},
  \bibinfo{person}{David~Maxwell Chickering}, {and}
  \bibinfo{person}{Christopher Meek}.} \bibinfo{year}{2001}\natexlab{}.
\newblock \showarticletitle{Using Temporal Data for Making Recommendations}. In
  \bibinfo{booktitle}{\emph{UAI}}.
\newblock


\end{thebibliography}

\appendix

\section{Proof of Property of ASPE}
\label{prf:prop-ape}
\begin{proof}
From~\cite{attention,tree}, SPE has a linear combination property as follows:
\begin{equation}
\label{eq:spe-linear}
    \begin{aligned}
    P_{x+y,2i}^l &=P_{x, 2 i}^l P_{y, 2 i+1}^l+P_{x, 2 i+1}^l P_{y, 2 i}^l \\
    P_{x+y,2i+1}^l &=P_{x, 2 i+1}^l P_{y, 2 i+1}^l-P_{x, 2 i}^l P_{y, 2 i}^l.
    \end{aligned}
\end{equation}

The modified RSPE is defined as:
\begin{equation}
\label{eq:m-rspe}
    \begin{aligned}
    P_{l-pos-1,2i}^l &=\cos(\left(l-pos-1\right)/10000^{2 i / d}) \\
    P_{l-pos-1,2i+1}^l &=\sin(\left(l-pos-1\right)/10000^{2 i / d}).
    \end{aligned}
\end{equation}

For simplicity, we redefine $POS=pos/10000^{2 i / d}$ and $L=\left(l-1\right)/10000^{2 i / d}$.

The addition of SPE and RSPE here is as follows:
\begin{equation}
\label{eq:spe+rspe}
    \begin{aligned}
    P_{pos,2i}^l &=\sin(POS) + \cos(L-POS) \\
    P_{pos,2i+1}^l &=\cos(POS) + \sin(L-POS).
    \end{aligned}
\end{equation}

For positions $x$ and $y$:
\begin{equation}
\label{eq:x}
    \begin{aligned}
    P_{x,2i}^l &=\sin(X) + \cos(L-X)\\
    &=(1+\sin(L))\sin{X}+\cos{L}\cos{X}\\
    P_{x,2i+1}^l &=\cos(X) + \sin(L-X)\\
    &=(1+\sin(L))\cos{X}+\cos{L}\sin{X}\\
    P_{y,2i}^l &=\sin(Y) + \cos(L-Y)\\
    &=(1+\sin(L))\sin{Y}+\cos{L}\cos{Y}\\
    P_{y,2i+1}^l &=\cos(Y) + \sin(L-Y)\\
    &=(1+\sin(L))\cos{Y}+\cos{L}\sin{Y}\\
    P_{x+y,2i}^l &=\sin(X+Y) + \cos(L-(X+Y))\\ &=(1+\sin(L))\sin(X+Y)+\cos(L)\cos(X+Y)\\
    P_{x+y,2i+1}^l &=\cos(X+Y) + \sin(L-(X+Y))\\
    &=(1+\sin(L))\cos(X+Y)+\cos(L)\sin(X+Y).
    \end{aligned}
\end{equation}

Similar to Eq.\ (\ref{eq:spe-linear}), we calculate the multiplication between encoding:
\begin{equation}
\label{eq:mul1}
    \begin{aligned}
    P_{x,2i}^lP_{y,2i+1}^l&=(1+ \sin(L))^{2} \sin (X) \cos (Y)+(1+\sin (L)) \cos (L) \sin (X) \sin (Y) \\
    &+(1+\sin (L)) \cos(L) \cos(X) \cos (Y)+\cos ^2(L) \cos (X) \sin (Y)
    \end{aligned}
\end{equation}

\begin{equation}
\label{eq:mul2}
    \begin{aligned}
    P_{x,2i+1}^lP_{y,2i}^l&=(1+ \sin(L))^{2} \sin (Y) \cos (X)+(1+\sin (L)) \cos (L) \cos (X) \cos (Y) \\
    &+(1+\sin (L)) \cos(L) \sin(X) \sin (Y)+\cos ^2(L) \sin (X) \sin (Y)
    \end{aligned}
\end{equation}

\begin{equation}
\label{eq:mul3}
    \begin{aligned}
    P_{x,2i+1}^lP_{y,2i+1}^l&=(1+ \sin(L))^{2} \cos (X) \cos (Y)+(1+\sin (L)) \cos (L) \cos (X) \sin (Y) \\
    &+(1+\sin (L)) \cos(L) \sin(X) \cos (Y)+\cos ^2(L) \sin (X) \sin (Y)
    \end{aligned}
\end{equation}

\begin{equation}
\label{eq:mul4}
    \begin{aligned}
    P_{x,2i}^lP_{y,2i}^l&=(1+ \sin(L))^{2} \sin (X) \sin (Y)+(1+\sin (L)) \cos (L) \sin (X) \cos (Y) \\
    &+(1+\sin (L)) \cos(L) \cos(X) \sin (Y)+\cos ^2(L) \cos (X) \cos (Y)
    \end{aligned}
\end{equation}

For simplicity, we define $A=\text{Eq.\ (\ref{eq:mul1})}+\text{Eq.\ (\ref{eq:mul2})}$, $B=\text{Eq.\ (\ref{eq:mul1})}-\text{Eq.\ (\ref{eq:mul2})}$, $C=\text{Eq.\ (\ref{eq:mul3})}+\text{Eq.\ (\ref{eq:mul4})}$ and $D=\text{Eq.\ (\ref{eq:mul3})}-\text{Eq.\ (\ref{eq:mul4})}$

Therefore, $P_{x+y,2i}^l$ and $P_{x+y,2i+1}^l$ can be computed based on $A,B,C$ and $D$:
\begin{equation}
\label{eq:add-x+y}
    \begin{aligned}
    P_{x+y,2i}^l &=\frac{A-\cos(L)C}{2(1+\cos^2(L))}+\frac{\cos(L)D}{2\sin(L)(1+\sin(L))}\\
    P_{x+y,2i+1}^l &=\frac{D}{2\sin(L)}+\frac{\cos(L)C-A}{s(1+\sin(L))(\cos(L)-1)}.
    \end{aligned}
\end{equation}
Therefore, the relationship between the PE of two positions based on the addition of SPE and RSPE cannot be represented as linear combination because $L$ is a variable among different sessions.
\end{proof}

\section{Example of 2DSPE}
\label{exp:2dspe}

\begin{equation}
    \begin{aligned}
P_{pos, 2 i}^l&=\sin \left(pos / 10000^{4 i / d}\right) \\
P_{pos, 2 i+1}^l&=\cos \left(pos / 10000^{4 i / d}\right) \\
P_{pos, 2 i+d / 2}^l&=\sin \left(l / 10000^{4 i / d}\right) \\
P_{pos, 2 i+1+d / 2}^l&=\cos \left(l / 10000^{4 i / d}\right)
    \end{aligned}
\end{equation}

\end{document}